\documentclass[a4paper,UKenglish,cleveref, autoref, thm-restate]{lipics-v2021}

\newcommand*{\fullpaper}{}

\ifdefined\fullpaper
\pdfoutput=1 
\fi
\ifdefined\fullpaper
\hideLIPIcs  
\fi

\title{Compressed Index with Construction in Compressed Space}

\author{Dmitry Kosolobov}{St.~Petersburg State University, Saint Petersburg, Russia}{dkosolobov@mail.ru}{https://orcid.org/0000-0002-2909-2952}{Supported by the Russian Science Foundation (RSF), project 24-71-00062}

\authorrunning{D. Kosolobov} 

\Copyright{Dmitry Kosolobov} 

\ccsdesc[100]{Theory of computation~Design and analysis of algorithms} 

\keywords{compressed index, pattern matching, string complexity, grammar, block tree} 

\category{} 

\ifdefined\fullpaper
\else
\relatedversion{} 
\relatedversiondetails{Full Version}{https://arxiv.org/abs/2602.13735} 
\fi

\acknowledgements{}

\nolinenumbers 

\EventEditors{Philip Bille and Nicola Prezza}
\EventNoEds{2}
\EventLongTitle{37th Annual Symposium on Combinatorial Pattern Matching (CPM 2026)}
\EventShortTitle{CPM 2026}
\EventAcronym{CPM}
\EventYear{2026}
\EventDate{June 15--17, 2026}
\EventLocation{Copenhagen, Denmark}
\EventLogo{}
\SeriesVolume{369}
\ArticleNo{34}

\usepackage{tikz}

\makeatother

\newcommand{\lvec}[1]{\overset{{}_{\leftarrow}}{#1}}
\newcommand{\lrange}[1]{\overleftarrow{#1}}

\newcommand\dd{\,..\,}
\newcommand\id{\mathsf{id}}
\newcommand\lbit{\mathsf{bit}}
\newcommand\vbit{\mathsf{vbit}}
\newcommand\rleft{\mathop{\mathsf{left}}}
\newcommand\rright{\mathop{\mathsf{right}}}
\newcommand{\sbeg}{\mathop{\mathsf{b}}}
\newcommand{\send}{\mathop{\mathsf{e}}}
\newcommand{\nil}{\bot}

\begin{document}

\sloppy

\maketitle

\begin{abstract}
	Suppose that we are given a string $s$ of length $n$ over an alphabet $\{0,1,\ldots,n^{O(1)}\}$ and $\delta$ is the string complexity of $s$, a known compression measure. We describe an index on $s$ with $O(\delta\log\frac{n}{\delta})$ space, measured in $O(\log n)$-bit machine words, which can search in $s$ any string of length $m$ in $O(m + (\mathrm{occ} + 1)\log^\epsilon n)$ time, where $\mathrm{occ}$ is the number of occurrences and $\epsilon > 0$ is any fixed constant (the big-O in the space bound hides factor $\frac{1}{\epsilon}$). Crucially, the index can be built in $O(n\log n)$ expected time by one left-to-right pass on the string $s$ in a streaming fashion with $O(\delta\log\frac{n}{\delta})$ construction space. The index does not use the Karp--Rabin fingerprints, and the randomization in the construction time can be eliminated by using deterministic dictionaries instead of hash tables (with a slowdown). The search time matches currently best results and the space is almost optimal (the known optimum is $O(\delta\log\frac{n}{\delta\alpha})$, where $\alpha = \log_\sigma n$ and $\sigma$ is the alphabet size, and it coincides with $O(\delta\log\frac{n}{\delta})$ when $\delta = O(n / \alpha^2)$). This is the first index that can be constructed within such space and with such time guarantees. To avoid uninteresting marginal cases, all above bounds are stated for $\delta \ge \Omega(\log\log n)$.
\end{abstract}

\newpage

\section{Introduction}

Given a string $s$ of length $n$ over the alphabet $\{0,1,\ldots,\sigma-1\}$ with $\sigma \le n^{O(1)}$, a compressed index is a data structure that stores $s$ in a compressed form supporting fast substring search. Currently best indexes use $O(\delta \log\frac{n}{\delta\alpha})$ space \cite{KociumakaNavarroPrezza,KociumakaNavarroPrezza2,KociumakaNavarroOlivares,KociumakaNavarroOlivares2}, measured in $O(\log n)$-bit machine words, where $\alpha = \log_\sigma n$ and $\delta$ is a compression measure called string complexity \cite{RaskhodnikovaEtAl}, and they can search a substring of length $m$ in $O(m + (\mathrm{occ} + 1)\log^\epsilon n)$ time, where $\mathrm{occ}$ is the number of occurrences and $\epsilon > 0$ is any fixed constant (the big-O in the space bound hides factor $\frac{1}{\epsilon}$). This space $O(\delta \log\frac{n}{\delta\alpha})$ is known to be tight for a wide range of parameters~\cite{KociumakaNavarroOlivares,KociumakaNavarroOlivares2} and it lowerbounds the space of most dictionary-based compressed indexes such as indexes on the Lempel--Ziv parsing~\cite{ChristiansenEttienne,LZ77}, run-length grammars~\cite{ChristiansenEtAl,KiefferYang}, Burrows--Wheeler transform~\cite{GagieNavarroPrezza}, and string attractors~\cite{KempaPrezza} (see \cite{Navarro3} for an overview).

There are essentially two primary methods to achieve $O(m \log n)$ search time within the space $O(\delta\log\frac{n}{\delta})$, which both enhance the basic idea of Claude and Navarro \cite{ClaudeNavarro}: (1)~a neat scheme \cite{GagieNavarroPrezza} first outlined by Gagie et al.~\cite{GGKNP2} that combines the Karp--Rabin hashing~\cite{DietzfelbingerEtAl,KarpRabin} and z-fast tries~\cite{BelazzouguiBoldiPaghVigna,BelazzouguiBoldiVigna}; (2) a variant of the so-called locally consistent parsing \cite{ChristiansenEtAl,ColeVishkin,Jez,MelhornSundarUhrig}. 
Known indexes that achieve $O(m + \log^\epsilon n)$ search time combine both approaches \cite{ChristiansenEtAl,KociumakaNavarroOlivares2}. Method (1) is applicable for many indexes; unfortunately, it has a serious drawback: the Karp--Rabin hashing must be collision-free for all substrings of length $2^k$, for $k = 1,2,\ldots$, and the only known algorithm that verifies this condition uses $O(n)$ space and $O(n\log n)$ time \cite{BilleEtAl2}, which is prohibitive in use cases where the uncompressed data barely fits into memory. This issue hinders a wider usage of such indexes and it is the reason for the scarcity of experimental studies that try to translate the theoretical state-of-the-art methods to practice (we note that there are many works on construction of indexes, e.g., \cite{AdudodlaKempa,ConradWilson,KempaLangmead,KopplKurpiczMeyer,TakabatakeISakamoto,ZhangEtAl}, but they never use the collision-free Karp--Rabin hashing as the best theoretical results do). 

We make a step forward to solve this problem by presenting an index with $O(\delta\log\frac{n}{\delta})$ space and $O(m + (\mathrm{occ} + 1)\log^\epsilon n)$ search time that can be built in $O(n\log n)$ expected time by one left-to-right pass in a streaming fashion with $O(\delta\log\frac{n}{\delta})$ construction space. We do not use the Karp--Rabin hashes and the randomization in the construction can be removed using deterministic dictionaries instead of hash tables (with a slowdown by an $O(\sqrt{\frac{\log n}{\log\log n}})$ factor, see \cite{AnderssonThorup,Ruzic}). The result is purely theoretical due to large constants under the big-O but, we believe, its ideas might inspire practical designs. We conjecture that the index can be further improved to $O(\delta\log\frac{n}{\delta\alpha})$ space (also the construction space) and $O(m/\alpha + (\mathrm{occ} + 1)\log^\epsilon n)$ time in the string packed model (see \cite{BilleGortzSkjoldjensen,MunroNavarroNekrich}); this direction is left for a future work.

\medskip

Throughout, $s$ is the input string of length $n$ with letters $s[0],s[1], \ldots,s[n{-}1]$ from an alphabet $\{0,1,\ldots,n^{O(1)}\}$. For $i,j$, let $s[i\dd j]$ be the \emph{substring} $s[i]s[i{+}1]\cdots s[j]$ (empty if $i > j$). For sequence $t = t_0, t_1,\ldots, t_{m-1}$ (maybe a string), denote by $\lvec{t} = t_{m-1}, \ldots, t_1, t_0$ its \emph{reversal}. 
Denote $|t| = m$, $s[i\dd j) = s[i\dd j{-}1]$, $[i\dd j] = \{i,i+1,\ldots,j\}$, $[i\dd j) = [i\dd j{-}1]$.

The string complexity, $\delta$, for $s$ is defined as $\delta = \max\{d_k(s) / k \colon k \in [1\dd n]\}$, where $d_k(s)$ denotes the number of distinct substrings of $s$ of length $k$ \cite{KociumakaNavarroPrezza2,RaskhodnikovaEtAl}.

We heavily use the concept of \emph{blocks}: a block $B$ for string $s$ is an object attached to a certain substring $s[i\dd j]$; denote $\sbeg(B) = i$ and $\send(B) = j$. The blocks are not ``fragments'': two distinct blocks $B$ and $B'$ can be attached to the same substring $s[i\dd j]$. If unambiguous, we treat a block $B$ as a string, writing $|B|$, $B = s[i'\dd j']$ (with not necessarily $i' = \sbeg(B)$), etc. Two blocks $B$ and $B'$ are \emph{equal as strings} if $s[\sbeg(B)\dd \send(B)] = s[\sbeg(B')\dd \send(B')]$. When defining a new block, we usually use words like ``generate'', ``create'', ``copy'' (when the new block is a copy of an old one) and it must be clear to which substring the new block is attached. We sometimes treat the concept of blocks loosely but, hopefully, it will be clear from the context.

\section{Main Ideas}
\label{sec:idea}

What follows is a high level description of main ideas behind our index. Details and precise definitions follow. We first focus on a static index. Its construction algorithm is in Section~\ref{sec:construction}.

Recent results on block trees \cite{KociumakaNavarroPrezza,NavarroPrezza} showed that the space $O(\delta \log\frac{n}{\delta})$ for indexes is surprisingly easy to obtain by, roughly, the following scheme (or its variants): one constructs a hierarchy of blocks with $\log n$ levels such that the level-$k$ blocks are $\lceil n/2^k\rceil$ substrings $s[i\cdot 2^k\dd (i+1)2^k)$ of length $2^k$ (the rightmost block might be shorter) and, for $k > 0$, each level-$k$ block has two child blocks from level $k-1$: its prefix and suffix of length $2^{k-1}$, respectively; then, each block $B = s[i\cdot 2^k\dd (i+1)2^k)$ for which the substring $\bar{B} = s[(i-1)2^k\dd (i+2)2^{k})$ has an occurrence at a position smaller than $(i-1)2^k$ is encoded by a pointer to the leftmost occurrence of $B$ in $s$ and all descendants of $B$ in the hierarchy are removed (for precise and detailed definitions, see \cite{BCGGKNOPT,BGGKOPT,KociumakaNavarroPrezza,Navarro2,NavarroPrezza}). The result is a block tree~\cite{BCGGKNOPT,Navarro2}

After a closer inspection of the proof from~\cite{KociumakaNavarroPrezza2} of the bound $O(\delta\log\frac{n}{\delta})$ for the block tree size, one can notice that the bound still holds if the block boundaries are slightly ``jiggled''. Our idea is to construct a ``jiggy block tree'', in which level-$k$ blocks are not of exact size $2^k$ but somewhat close to this ``on average'', and each block might have up to $O(\log\log n)$ children. The variability of block lengths will be used to form a kind of locally consistent parsing, more specifically, we will use a ``cut'' version of the so-called deterministic coin tossing (DCT) \cite{ColeVishkin} and the blocks will be built level by level bottom-up using this ``cut DCT'' (the idea of such level-by-level construction itself is not novel \cite{NishimotoEtAl} but it is usually used with the full DCT and it leads to a grammar of size $\Omega(\delta\log\frac{n}{\delta}\log^* n)$, not to a kind of block tree). Then, the same arguments as in \cite{KociumakaNavarroPrezza2} show that the resulting tree has $O(\delta\log\frac{n}{\delta})$ internal nodes. However, some nodes might have up to $\Theta(\log\log n)$ leaf children, blowing up the space by an $\Omega(\log\log n)$ factor, which seems unavoidable when using the DCT \cite{BirenzwigeEtAl,KosolobovSivukhin,MelhornSundarUhrig,NishimotoEtAl,SahinalpVishkin,TakabatakeISakamoto}. 

We eliminate the factor $\Omega(\log\log n)$ by introducing a novel intermediate step in the level by level bottom-up construction: given a block $B$ with $O(\log\log n)$ children $B_i,\ldots,B_j$, we greedily unite the children from left to right into larger ``intermediate blocks'' $B' = B_m\cdots B_{m+r-1}$ such that the sequence of blocks $B_m\cdots B_{m+r-1}$ occurs somewhere to the left in the block hierarchy; then, we store in the intermediate block $B'$ a pointer to this earlier occurrence. Due to the property of ``local consistency'' of DCT, the pointers will be somewhat ``aligned'' on the block structure and our jiggly block tree will resemble a grammar with elements of the block tree structure (like a collage system \cite{KidaEtAl} but more structured). There are many details in this scheme to make it efficient that are explained in the sequel. 

For the pattern matching, we follow the standard approach based on the locally consistent parsing and z-fast tries \cite{ChristiansenEtAl,KociumakaNavarroOlivares2} but we replace the Karp--Rabin fingerprints \cite{KarpRabin} with new deterministic fingerprints that are derived from our locally consistent block structure.

\section{Hierarchy of Blocks}
\label{sec:blocks}

At the core of our index is a \emph{hierarchy of blocks} (see Fig.~\ref{fig:blocks}, the leftmost picture): it has $O(\log n)$ levels where the topmost level has one block equal to $s$ and the bottom level $0$ contains $n$ one-letter blocks $s[0],\ldots,s[n{-}1]$; the concatenation of all blocks from one level equals $s$; for $\ell > 0$, each level-$\ell$ block $B$ either is a copy of a level-$(\ell{-}1)$ block $B'$ with $\sbeg(B) = \sbeg(B')$ and $\send(B) = \send(B')$ or is the concatenation of corresponding consecutive level-$(\ell{-}1)$ blocks.
We define the hierarchy of blocks by a bottom-up construction process. 

The bottom level-$0$ blocks are $n$ one-letter blocks $s[0], s[1], \ldots, s[n{-}1]$.
On a generic step with $k \ge 0$, we have a sequence of blocks $B_1, \ldots, B_b$ of level $2k$ whose concatenation is $s$ itself. To produce new blocks of levels $2k+1$ and $2k + 2$, the construction process will unite some adjacent blocks or will copy some blocks unaffected; in particular, all blocks of length ${>}2^k$ are copied to levels $2k + 1$ and $2k + 2$ unaffected. Every block $B$ is assigned an $O(\log n)$-bit integer identifier $\id(B)$; the level-$0$ blocks have identifiers equal to the letters they represent (recall that the alphabet is $\{0,1,\ldots,n^{O(1)}\}$). Blocks with equal identifiers are equal as strings but the converse is not necessarily true. We choose an arbitrary unambiguous method to encode any pair of $O(\log n)$-bit integers $x,y$ into one $O(\log n)$-bit integer, denoted $\langle x,y\rangle$. The process maintains a counter to assign new unique identifiers to new blocks.

\textbf{\boldmath Level $2k+1$.} 
In the level-$2k$ blocks $B_1,\ldots, B_b$, we identify all maximal ``runs'' of blocks $B_{i}, B_{i+1}, \ldots, B_{i+r-1}$ with equal identifiers such that $|B_{j}| \le 2^k$, for $j \in [i\dd i{+}r)$, and $i = 1$ or $\id(B_{i-1}) \ne \id(B_{i})$, and $i+r-1 = b$ or $\id(B_{i+r}) \ne \id(B_{i+r-1})$. Each such run for which $r > 1$ is substituted by a new block $B = B_{i} B_{i+1} \cdots B_{i+r-1}$ whose identifier encodes the pair $\langle\id(B_{i}), r\rangle$. Hence, any two distinct runs of the same length $r > 1$ with the same identifiers $\id(B_{i})$ of the run blocks are substituted by blocks with the same identifier $\langle\id(B_{i}), r\rangle$.  Thus, the blocks for level $2k+1$ consist of copies of blocks from level $2k$ with length ${>}2^k$, copies of blocks that formed runs of length one, and the blocks substituted in place of longer runs. 

\textbf{\boldmath Level $2k+2$.}
For any distinct integers $x, y \ge 0$, denote by $\lbit(x,y)$ the index ($0$-based) of the lowest bit in which the bit representations of $x$ and $y$ differ; e.g., $\lbit(8,0) = 3$. Denote $\vbit(x,y) = 2\cdot\lbit(x,y) + a$, where $a$ is the bit of $x$ with index $\lbit(x,y)$. Cole and Vishkin based their deterministic coin tossing (DCT) technique on the following observation.

\begin{lemma}[{see \cite{ColeVishkin,KosolobovSivukhin}}]
	Given a string $a_0 a_1 \cdots a_m$ over an alphabet $[0\dd 2^u)$ such that $a_{i-1} \ne a_{i}$ for any $i \in [1\dd m]$, the string $b_1 b_2 \cdots b_{m}$ such that $b_i = \vbit(a_{i-1}, a_{i})$, for $i \in [1\dd m]$, satisfies $b_{i-1} \ne b_{i}$, for any $i \in [2\dd m]$, and $b_i \in [0\dd 2u)$, for any $i \in [1\dd m]$.\label{lem:vishkin}
\end{lemma}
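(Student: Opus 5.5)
The plan is to verify the two claims of Lemma~\ref{lem:vishkin} separately, directly from the definitions of $\lbit$ and $\vbit$.

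\textbf{Range.} Since $a_{i-1} \ne a_i$ for every $i \in [1\dd m]$, $\lbit(a_{i-1},a_i)$ is well defined. Both numbers lie in $[0\dd 2^u)$, so they have at most $u$ bits and can only differ in a bit with index in $[0\dd u)$. Hence $\lbit(a_{i-1},a_i) \le u-1$, and since the added bit $a$ is $0$ or $1$,
\[
b_i = 2\cdot\lbit(a_{i-1},a_i) + a \le 2(u-1)+1 = 2u-1 .
\]
So $b_i \in [0\dd 2u)$.

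\textbf{Adjacent values differ.} Fix $i \in [2\dd m]$ and assume, for contradiction, that $b_{i-1} = b_i$. The value $\vbit(x,y) = 2\ell + a$ determines both $\ell = \lfloor \vbit(x,y)/2\rfloor$ and the parity bit $a$. Therefore $\lbit(a_{i-2},a_{i-1}) = \lbit(a_{i-1},a_i) = \ell$ for a common index $\ell$. Moreover, bit $\ell$ of $a_{i-2}$ equals bit $\ell$ of $a_{i-1}$, since both equal the common parity bit. But $\ell = \lbit(a_{i-2},a_{i-1})$ means precisely that $a_{i-2}$ and $a_{i-1}$ differ in bit $\ell$. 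This is a contradiction, so $b_{i-1} \ne b_i$.

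There is no real obstacle here. The only point needing care is to read $\vbit$ with its first argument taken from the left element of each pair: $b_{i-1}$ uses bit $\ell$ of $a_{i-2}$, and $b_i$ uses bit $\ell$ of $a_{i-1}$. It is exactly this shift, with $a_{i-1}$ appearing as the right argument in $b_{i-1}$ and the left argument in $b_i$, that produces the contradiction.
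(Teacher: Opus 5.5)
Your proof is correct and complete. Distinct numbers in $[0\dd 2^u)$ can only differ in a bit of index below $u$, which gives the range bound. Equal $\vbit$ values would force a common index $\ell$ and equal bit $\ell$ in $a_{i-2}$ and $a_{i-1}$, contradicting $\ell = \lbit(a_{i-2},a_{i-1})$. This is the standard Cole--Vishkin argument; the paper gives no proof of its own and simply cites \cite{ColeVishkin,KosolobovSivukhin}.
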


Let $B_1, \ldots, B_b$ be all blocks on level $2k+1$. For each $B_i$, we assign two identifiers, denoted $\id'(B_i)$ and $\id''(B_i)$. If $|B_i| > 2^k$, define $\id'(B_i) = \infty$. If $|B_i| \le 2^k$, then $\id(B_{i-1}) \ne \id(B_i)$ since the previous stage united all runs; define $\id'(B_i) = \vbit(\id(B_{i-1}), \id(B_i))$ if $i > 1$ and $|B_{i-1}| \le 2^k$, and $\id'(B_i) = \infty$ otherwise. Define $\id''(B_i) = \vbit(\id'(B_{i-1}), \id'(B_i))$ if $i > 1$ and $\id'(B_{i-1}) \ne \infty$ and  $\id'(B_{i}) \ne \infty$, and define $\id''(B_i) = \infty$ otherwise. This computation of $\id''$ corresponds to two first steps of the standard DCT, i.e., it is a ``cut'' variant of DCT.

For any maximal range of blocks $B_p,\ldots,B_q$ such that $|B_h|\le 2^k$ for each $h \in [p\dd q]$ (i.e., $p = 1$ or $|B_{p-1}| > 2^k$, and $q = b$ or $|B_{q+1}| > 2^k$), we have $\id''(B_p) = \id''(B_{p+1}) = \infty$ (or just $\id''(B_p) = \infty$ if $p = q$) and, by Lemma~\ref{lem:vishkin},  for each $h \in [p{+}2\dd q]$, we have $\id''(B_{h-1}) \ne \id''(B_h)$ and  $0\le \id''(B_h) \le O(\log\log n)$  (here we used the assumption $\id(B_h) \le n^{O(1)}$). We will split $B_p,\ldots,B_q$ into ranges ending at points corresponding to local minima of $\id''$. To this end, we mark each block $B_i$, with $i \in [1\dd b]$, for which one of the following conditions holds:
\begin{itemize} 
	\item[(a)] $|B_i| > 2^k$ or $|B_{i+1}| > 2^k$ or $i = b$ (the last block on level $2k+1$); 
	\item[(b)] $i > 2$ and $\infty > \id''(B_{i-2}) > \id''(B_{i-1})$ and $\id''(B_{i-1}) < \id''(B_i) < \infty$ 
	(i.e., $\id''(B_i)$ is immediately preceded by a local minimum).
\end{itemize} 
The marking splits all blocks $B_1,\ldots,B_b$ on level $2k+1$ into disjoint ranges as follows: each range $B_i,\ldots,B_j$ consists of unmarked blocks $B_i,\ldots,B_{j-1}$ and the marked block $B_j$ and either $i = 1$ or $B_{i-1}$ is marked. 
For each such range $B_i,\ldots,B_j$: if $i = j$, the process copies $B_i$ to level $2k + 2$; if $i < j$, it creates a new block $B = B_i\cdots B_j$ for level $2k+2$ assigning to it a new identifier unless another block $B'$ produced by the same sequence $\id(B_i),\ldots, \id(B_j)$ has already appeared earlier in the process (on any level), in which case we assign $\id(B) = \id(B')$. 
Since $\id'' \le O(\log\log n)$, local minima for $\id''$ occur often and, thus, $j - i \le O(\log\log n)$.

\textbf{Properties of blocks.}
The main properties of the hierarchy are its ``local consistency'' and ``local sparsity''. The proof of the former is standard \cite{KociumakaNavarroPrezza2} and it was moved to Appendix~\ref{appx:hierarchy}.

\begin{restatable}[local consistency]{lemma}{consistency}
	Fix $\ell = 2k$ or $\ell = 2k+1$. Let ${B}_1,{B}_2,\ldots,B_b$ denote all level-$\ell$ blocks. For any $i$, $j$ and block ${B}_h$, if $s[i{-}2^{k+4}\dd j{+}2^{k+4}] = s[\sbeg({B}_h){-}2^{k+4}\dd \send({B}_h){+}2^{k+4}]$, then there exists a level-$\ell$ block ${B}_{h'}$ such that $\id({B}_{h'}) = \id({B}_h)$, $\sbeg({B}_{h'}) = i$, and  $\send({B}_{h'}) = j$.\label{lem:consistency}
\end{restatable}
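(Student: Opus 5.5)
We prove the statement by induction on the level $\ell = 0,1,2,\ldots$, in a slightly stronger form. The strengthening is that the context radius needed at level $\ell$ is only some $r_\ell$, growing along the sequence. We will show $r_{2k} \le 2^{k+4} - c\,2^{k}$ and $r_{2k+1} \le r_{2k} + O(2^k)$ with small constants, so both stay below $2^{k+4}$. The strengthened claim reads: whenever $s[i-r_\ell\dd j+r_\ell] = s[\sbeg(B_h)-r_\ell\dd \send(B_h)+r_\ell]$, there is a level-$\ell$ block $B_{h'}$ spanning exactly $[i\dd j]$ with $\id(B_{h'}) = \id(B_h)$. In fact we will carry the stronger invariant that the whole block partition and all identifiers inside the window $[i-r'\dd j+r']$ are copied from the window around $B_h$, for some slightly smaller $r'$. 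The base case $\ell = 0$ is trivial with $r_0 = 0$: the blocks are single letters and the identifiers are the letters.

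\textbf{Step $2k \to 2k+1$.} The induction hypothesis says that the level-$2k$ blocks lying inside the shifted window are determined, together with their identifiers and lengths, by the text of the window. Two checks determine whether a level-$(2k+1)$ block exists at a given place. First, maximality of a run requires seeing one neighbour block on each side; that neighbour, if short, has length at most $2^k$. Second, a block of length $>2^k$ is copied unchanged. The difficulty is that a run of equal short blocks may be arbitrarily long, so its extent is not locally visible. We handle this case by case. If $B_h$ is itself a run block of total length $|B_h|$, then the window of radius $r_{2k}+2^k$ around it contains the whole run plus the two terminating neighbours. The matching window around $[i\dd j]$ therefore contains the same run, and it is maximal there because the boundary neighbours differ in identifier. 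If $B_h$ is a copied block, the same reasoning applies with one neighbour on each side. This gives $r_{2k+1} = r_{2k} + 2^k$, together with the invariant that the full partition within a slightly shrunk window is reproduced.

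\textbf{Step $2k+1 \to 2k+2$.} A new block $B_i\cdots B_j$ is determined by the marks at $B_{i-1}$ and $B_j$ and by the absence of marks in between. The mark on a block $B_t$ depends on $\id''(B_{t-2}),\id''(B_{t-1}),\id''(B_t)$ and on the lengths of $B_t,B_{t+1}$. In turn, $\id''(B_t)$ depends on the identifiers and lengths of $B_{t-2},B_{t-1},B_t$. Hence every mark is a function of at most five consecutive level-$(2k+1)$ blocks: four to the left and one to the right. We need the left context to cover these blocks. This is where the main obstacle lies: the blocks needed may be long, i.e., copies of length $>2^k$. However, $\id'$ and $\id''$ become $\infty$ as soon as a long block is encountered, and condition (a) fires at that point. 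So we only ever need to look through blocks of length $\le 2^k$, plus the mere fact that the next block is long. A block is recognised as long as soon as $2^k+1$ of its characters are visible, which requires the window invariant to hold for that block up to that length. Handling this correctly is the delicate point. We deal with it by proving the invariant in the form ``the partition restricted to the window matches, and any block crossing the window boundary is long in both copies''. With this form, at most $4\cdot 2^k + 2^k$ extra context is needed.

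Identifiers of new blocks depend only on the sequence $\id(B_i),\ldots,\id(B_j)$ (for new blocks) or are inherited (for copies), so equality of identifiers follows. Summing the extra contexts, $r_{2k+2} \le r_{2k} + O(2^k)$, and the recursion $r_{2k} \le \sum_{k'<k} c\,2^{k'} < c\,2^k$ with $c \le 16$ yields the stated radius $2^{k+4}$. Verifying that the constants fit within $16$ is routine bookkeeping once the invariant is fixed.
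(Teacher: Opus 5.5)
Your proposal follows the paper's proof: induction on the level with a strengthened, smaller radius. The paper effectively uses radius $2^{k+3}$ at level $2k$ and $2^{k+3}+2^k$ at level $2k+1$, and your recursion gives $r_{2k}<6\cdot 2^k$ and $r_{2k+1}<7\cdot 2^k$, which also fit. As in the paper, runs are settled by reproducing neighbours of length at most $2^k$, and marks are settled by reproducing at most five short level-$(2k{+}1)$ blocks to the left, with a long neighbour forcing $\id''=\infty$.

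One statement needs correcting, since the long-neighbour case rests on it. ``Any block crossing the window boundary is long in both copies'' is false as written, because a short block can straddle any fixed boundary. What you need, and what the paper uses, is that the block adjacent to the reproduced stretch is long in one copy iff it is long in the other. The reason is that if it were short in either copy, it would lie inside the window there. Applying the induction hypothesis from that copy to the other would then reproduce it with the same length and identifier, a contradiction.
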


We call a position $h \in [0\dd n)$ a \emph{block boundary} for level $\ell$ if $h = \send(B)$ for some level-$\ell$ block $B$.
The ``local sparsity'' formalizes the intuition that blocks on levels $2k$ and $2k + 1$ have length ${\ge}2^k$ ``on average''. The proof is quite technical and it mostly repeats the arguments from the proof of \cite[Lemma~11]{KosolobovSivukhin} word by word. The proof  \ifdefined\fullpaper has been moved to Appendix~\ref{appx:hierarchy}.\else can be found in the full version of the paper \cite[Appendix~A]{KosolobovFull}\fi

\begin{restatable}[local sparsity]{lemma}{sparsity}
	For any $i < j$, the number of block boundaries in $[i\dd j)$ on levels $2k$ and $2k + 1$ is at most $2^6\lceil\frac{j-i}{2^k}\rceil$. In particular, there are $O(n/2^k)$ blocks on these levels.
	\label{lem:sparsity}
\end{restatable}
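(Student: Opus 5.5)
Our plan is induction on $k$, proving a slightly stronger invariant: for every $k$, any interval $[i\dd j)$ with $j - i \le 2^k$ contains at most $c$ block boundaries of level $2k$ (and then of level $2k+1$), where $c$ is a fixed constant such as $2^5$. Covering an arbitrary $[i\dd j)$ by $\lceil\frac{j-i}{2^k}\rceil$ windows of length at most $2^k$ then gives the claimed bound $2^6\lceil\frac{j-i}{2^k}\rceil$; the slack between $c$ and $2^6$ absorbs boundary effects at window ends. The case $k=0$ is trivial, since each window of length $1$ contains one boundary. The second claim follows by taking $[i\dd j) = [0\dd n)$.

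\textbf{From level $2k$ to level $2k+1$.} This step only removes boundaries, because runs are merged and blocks are otherwise copied. Hence the bound for level $2k+1$ is inherited from level $2k$.

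\textbf{From level $2k+1$ to level $2k+2$.} We fix a window $[i\dd i+2^{k+1})$ and bound the level-$(2k+2)$ boundaries inside it. Each such boundary is the end of a marked level-$(2k+1)$ block.
\begin{itemize}
\item Marks of type (a) arise next to blocks of length $>2^k$. At most two long blocks can cross into a window of length $2^{k+1}$, and a long block ending inside the window must start within distance $2^{k+1}$ of it. So there are $O(1)$ such marks per window.
\item For marks of type (b), consider a maximal stretch of consecutive short blocks. On it $\id''$ takes values in $[0\dd O(\log\log n))$, adjacent values differ (Lemma~\ref{lem:vishkin}), and marks sit right after strict local minima. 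Two consecutive marks are therefore at distance at least $2$ in block count. Consequently every level-$(2k+2)$ block built from short blocks consists of at least two level-$(2k+1)$ blocks. Combined with the level-$(2k+1)$ run merging, this should give a constant-factor thinning relative to level $2k$.
\end{itemize}

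The main obstacle is that one can only guarantee pairs of merged short blocks, not lengths. We intend to handle it with a length argument rather than pure counting, since bounding by counts alone only halves the number of boundaries per two levels. Two adjacent short blocks $B_h, B_{h+1}$ that are both boundaries at level $2k+2$ cannot both be short in a sustained manner. The reason is that a level-$(2k+2)$ block formed by uniting at least two blocks, each of length $\le 2^k$, has length at most roughly $2^k\cdot O(\log\log n)$, which is fine, but we need a lower bound. Run merging guarantees that adjacent short blocks have distinct ids. So we follow \cite[Lemma~11]{KosolobovSivukhin}: show that within any window of $2^{k+1}$ positions, either there are at most $c$ level-$(2k+1)$ boundaries, in which case there are at most $c$ at level $2k+2$; or, by the inductive hypothesis applied to each half, at most $2c$ level-$2k$ boundaries exist. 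The boundaries that survive at level $2k+2$ are at most half of these, plus $O(1)$ for type-(a) and edge effects. This yields $\le c + O(1) \le c$ once $c$ is chosen large enough, for example $2^5$. Getting the constants precise, in particular the treatment of windows straddling long blocks and the extra $\infty$ values of $\id''$ at the start of each short stretch, is the technical core; we will copy the case analysis of \cite[Lemma~11]{KosolobovSivukhin}, adapting it to the ``cut'' DCT.
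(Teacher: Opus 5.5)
Your skeleton matches the paper's: induction on $k$, the observation that level-$(2k{+}1)$ boundaries are a subset of level-$2k$ boundaries, and halving because new subranges inside a maximal stretch of short level-$(2k{+}1)$ blocks have at least two blocks. However, the inductive step as you set it up does not close.

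\textbf{Where the step fails.} From ``every window of length ${\le}2^k$ has at most $c$ level-$2k$ boundaries'' you get at most $2c$ boundaries in a window of length $2^{k+1}$. Halving then gives $c+O(1)$, where the $O(1)$ comes from three sources:
\begin{itemize}
\item type-(a) marks around long blocks;
\item the last subrange of a short stretch, which can be a single block (so it is not true that every level-$(2k{+}2)$ block built from short blocks has at least two children);
\item the window edges.
\end{itemize}
Your step ``$c+O(1)\le c$ once $c$ is large enough'' is false for every $c$. Iterating it only gives $O(k)$ boundaries per window of length $2^k$. You also cannot patch this inside the windowed formulation. Even when a long block covers most of the window, the remaining short part has total length $<2^k$ but consists of two non-contiguous pieces, and your hypothesis bounds each piece by $c$ separately. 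Note that halving per increment of $k$ is exactly the rate you need. So the ``lengths versus counts'' issue you single out is not the real obstacle; the additive loss is.

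\textbf{The missing idea.} You need to charge the extra survivors to empty space. The paper's key observation is this: two consecutive level-$(2k{+}1)$ boundaries $j<j'$ can both survive to level $2k{+}2$ only if $j'-j>2^k$ or the gap following $j'$ exceeds $2^k$. Exploiting this requires an invariant that is additive in length and records where the boundary-free stretches are, which a per-window count cannot do. The paper proves that $[0\dd n)$ partitions into two kinds of chunks:
\begin{itemize}
\item \emph{normal} chunks, of length in $[2^{k-5}\dd 4\cdot 2^{k-5}]$ with at most $2$ boundaries;
\item \emph{skewed} chunks, of length ${\ge}8\cdot 2^{k-5}$ with at most $3$ boundaries, all within the first $4\cdot 2^{k-5}$ positions.
\end{itemize}
For $k+1$, the chunks are paired right to left. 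A skewed singleton loses one of its first two boundaries by the key observation and is re-cut into normal chunks. A pair of normal chunks has total length $<2^k$, so it keeps at most $3$ boundaries. When it keeps exactly $3$, the key observation forces a boundary-free stretch of length $7\cdot 2^{k-3}$ after it, and the empty chunks there are merged with the pair into a new skewed chunk. Counting the chunks that meet $[i\dd j)$ then gives $2\lceil\frac{j-i-3}{2^{k-5}}\rceil+3\le 2^6\lceil\frac{j-i}{2^k}\rceil$.

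Deferring the case analysis to \cite[Lemma~11]{KosolobovSivukhin} does not rescue the step as you formulated it. The paper's adaptation of that proof rests on this partition invariant, not on a per-window count.
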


Lemma~\ref{lem:sparsity} implies that there are $O(\log n)$ levels in the hierarchy. We conclude the section with a technical lemma. Its proof is quite straightforward and has been moved to Appendix~\ref{appx:hierarchy}.

\begin{restatable}{lemma}{uniformity}
	Any blocks $B$ and $B'$ with $\id(B) = \id(B')$ are equal as strings and both were created on the same level.
	\label{lem:uniformity}
\end{restatable}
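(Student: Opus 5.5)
We argue by induction on the order in which identifiers are assigned during the bottom-up construction, proving simultaneously that (i) all blocks carrying a given identifier are equal as strings, and (ii) every identifier is \emph{created} on exactly one level. Here a block is created on level $\ell$ if it is not a copy of a level-$(\ell{-}1)$ block; copies inherit both the identifier and the attached substring of the original, so they can be ignored. It also helps to keep track of the set of levels on which each identifier can appear: creation level $L$ and possibly higher levels through copying.

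\textbf{Base: level $0$.} Level-$0$ identifiers are letters, so equal identifiers give equal one-letter strings.

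\textbf{Odd levels $2k{+}1$.} A new block gets identifier $\langle \id(B_i), r\rangle$ with $r>1$, and this identifier is the concatenation of $r$ blocks sharing identifier $\id(B_i)$. By induction, all such $B_i$ equal a fixed string $w$, so the new block equals $w^r$. This gives (i).

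\textbf{Even levels $2k{+}2$.} The identifier is either fresh or reused from an earlier block $B'$ produced by the same identifier sequence $\id(B_i),\ldots,\id(B_j)$. By induction each identifier in that sequence determines a unique string, so $B$ and $B'$ are both the concatenation of the same strings, which gives (i).

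\textbf{Claim (ii).} This requires that the encodings $\langle x,r\rangle$, the fresh counter values and the letters never collide. That is a bookkeeping assumption: we read the pairing, the counter and the alphabet as drawing from disjoint ranges (or as tagged values), as the construction implicitly intends. With that, the sources separate as follows:
\begin{itemize}
\item A letter identifier is created only on level $0$.
\item A fresh counter identifier is created once and afterwards only reused by blocks produced from the same sequence.
\item A pair identifier $\langle x,r\rangle$ arises only on odd levels.
\end{itemize}

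\textbf{Main obstacle.} The delicate point is showing that a repeated identifier sequence, or a repeated pair $\langle x,r\rangle$, cannot arise on two different levels. For runs, we need that the blocks with identifier $x$ forming a run on level $2k{+}1$ must have been created on a level that forces $k$. Let $x$ be created on level $L$. The condition $|B|\le 2^k$ for run members, together with the fact that blocks of length ${>}2^k$ are frozen, might not pin down $k$ directly. I would therefore argue instead that a run of $x$-blocks on level $2k{+}1$ cannot persist to level $2k{+}3$: after level $2k{+}1$ the run is merged, so adjacent equal identifiers no longer exist on level $2k{+}2$. Level $2k{+}2$ only unites blocks or copies them, and I would check that it cannot create adjacent equal identifiers from non-adjacent ones except through new identifiers. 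Hence the first odd level at which $x$ can appear in a run is determined by $L$, and the same pair cannot appear on two odd levels. An analogous argument on the sequence $\id(B_i),\ldots,\id(B_j)$ handles even levels. If this turns out to be too fragile, the fallback is to note that the construction can simply include the level number in the encoded pair and in the dictionary key at no asymptotic cost.
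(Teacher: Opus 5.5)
\paragraph{A gap in claim (ii).} Your claim (i) is fine, and so is your convention that letters, pair encodings and fresh counter values never collide; the paper makes the same assumption implicitly. The gap is in (ii), which is the real content of the lemma. You name the obstacle but do not resolve it.

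\begin{itemize}
\item For pair identifiers you only sketch an argument (``I would check that\dots''). Its stated conclusion is that the \emph{first} odd level where $x$ can form a run is determined by $L$. That is not what is needed: you must show that $x$-runs are merged on only one level.
\item For even levels you assert that ``an analogous argument'' works, but it is not analogous. At the transition $2k{+}1\to 2k{+}2$, blocks of length $\le 2^k$ can survive unmerged: the last block of a maximal range of short blocks may form a singleton subrange and be copied. So a ``merged patterns never reappear'' invariant needs a structural fact about the marking rules (a)--(b), and you never state one.
\item The fallback of putting the level number into the keys changes the construction. It does not prove the lemma as stated.
\end{itemize}

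\paragraph{The missing idea.} Compare the two creations directly via a minimal counterexample. This also removes your need to ``pin down $k$'' from $L$.

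Take $B$ and $B'$ with equal $\id$, created on levels $\ell<\ell'$, with $\ell$ minimal. The equal identifier forces $B'$ to be produced from a sequence of blocks with the same identifiers as the sequence producing $B$. By minimality of $\ell$, those children of $B'$ were created on the same levels as the children of $B$, all $\le \ell{-}1$. They must persist as copies up to level $\ell'{-}1\ge \ell$. Hence they are present on level $\ell{-}1$ as adjacent blocks of the same lengths as the children of $B$.
\begin{itemize}
\item If $\ell=2k{+}1$, they form $r\ge 2$ adjacent equal-identifier blocks of length $\le 2^k$ on level $2k$. Such blocks are always merged into a run block, so none of them reaches level $2k{+}1$.
\item If $\ell=2k{+}2$, there are at least two of them, since a created block has $j>i$. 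All have length $\le 2^k$, because ranges that get united contain only short blocks. So they lie in one maximal short range on level $2k{+}1$. Every subrange of such a range has at least two blocks, except possibly the last one. Hence the first of them is merged and does not reach level $2k{+}2$.
\end{itemize}
Either case contradicts survival to level $\ell'{-}1$.

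Note also that your induction ``on the order in which identifiers are assigned'' does not directly give you the hypothesis you need for the children's identifiers. It should be organized over the lower creation level, as above.
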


\begin{figure}[hbt]
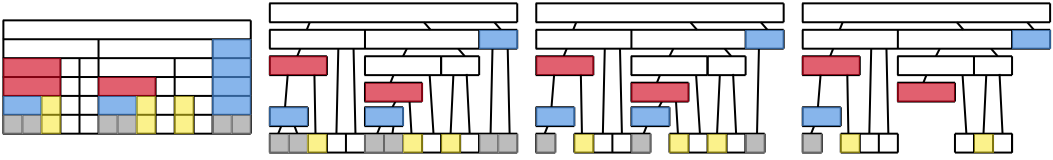

	\tikzset{every picture/.style={line width=0.75pt}} 
	

	
	\caption{A schematic depiction of the hierarchy of blocks $H$. Colored rectangles with the same color depict blocks with the same $\id$; white rectangles are blocks with any $\id$. From left to right: the hierarchy $H$, $H$ with rules 1--2 applied, $H$ with rule 4 applied, $H$ with rule 3 applied (i.e., it is $\hat{H}$).}
	\label{fig:blocks}
\end{figure}

\section{Jiggly Block Tree}
\label{sec:jiggly}

The described hierarchy of blocks for $s$ naturally forms a tree $H$ with blocks as vertices in which the root is the (only) topmost block and the parent of any level-$\ell$ block $B$ with $B\ne s$ is the level-$(\ell+1)$ block $B'$ such that $[\sbeg(B') \dd \send(B')] \supseteq [\sbeg(B) \dd \send(B)]$ (note that $B$ and $B'$ may be equal as strings).
For our index, we create a tree $\hat{H}$ that is a copy of $H$ with vertices removed according to the following rules 1--4 while it is possible to apply them (see Fig.~\ref{fig:blocks}): 
\begin{enumerate}
	\item  if a level-$0$ block $B$ has parent $B'$ with $\id(B) = \id(B')$,  merge $B$ and $B'$ by removing $B'$ and connecting $B$ to the parent of $B'$;
	\item if, for $\ell > 0$, a level-$\ell$ block $B$ has parent $B'$ with $\id(B) = \id(B')$, merge $B$ and $B'$ by removing $B$ and connecting all children of $B$ to $B'$;
	\item if, for $\ell > 0$ and $\ell' > 0$, a level-$\ell$ block $B$ has a ``preceding'' level-$\ell'$ block $B'$ such that $\id(B) = \id(B')$ and $\sbeg(B') < \sbeg(B)$, remove all descendants of $B$;
	\item if, for $\ell > 0$, all children of a level-$\ell$ block $B$ have equal $\id$, remove them all except for the leftmost child (note that, by removing a child, we remove its descendants).
\end{enumerate}
The idea of such construction is not novel (however, usually the full DCT is used, not the ``cut'' DCT as we do) and it leads to an index with $\omega(\delta \log\frac{n}{\delta})$ space \cite{BirenzwigeEtAl,KosolobovSivukhin,MelhornSundarUhrig,NishimotoEtAl,SahinalpVishkin,TakabatakeISakamoto}. As it turns out, the obtained tree has $O(\delta \log\frac{n}{\delta})$ non-leaf vertices and the main issue with space is due to some vertices having up to $O(\log\log n)$ leaves from levels $2k + 1$. To reduce the space, our idea is to parse the sequences of leaves by a specialized variant of LZ77 \cite{LZ77}. The construction is quite technical and requires some preparations to define it.

Let $B_1,\ldots,B_b$ be all blocks on a level $2k + 1$. Recall that the block construction process has marked some of the blocks, thereby splitting them into ranges $B_i,\ldots,B_j$, where $B_i,\ldots,B_{j-1}$ are unmarked and $B_j$ is marked. Fix such range $B_i,\ldots,B_j$ with $i < j$ such that $B_i,\ldots,B_j$ were not removed from $\hat{H}$ after the rules 1--4 above. Let $B$ be the parent of $B_i,\ldots,B_j$. We are to parse $B_i,\ldots,B_j$ into subranges. To define the parsing, we need some notation. 

For any block $B_{m}$ and $\ell \ge 0$, choose the largest $h \le \ell$ such that, for all $h'\in [m{-}h\dd m)$, $B_{h'}$ is not marked; define $\rleft(m,\ell)$ as the sequence $\beta,\id(B_{m-h}), \id(B_{m-h+1}), \ldots,\id(B_{m{-}1})$, where $\beta$ is empty (i.e., it is really not in the sequence) if $h = \ell$, and $\beta$ is a special symbol $\$$ if $h < \ell$. Choose the largest $h \le \ell$ such that, for all $h'\in [m\dd m{+}h{-}1)$, $B_{h'}$ is not marked (but $B_{m+h-1}$ may be marked); define $\rright(m,\ell)$ as the sequence $\id(B_{m}),\id(B_{m+1}), \ldots, \id(B_{m+h-1})$.

We greedily parse $B_i,\ldots,B_{j}$ into subranges from left to right: 
suppose that $B_i,\ldots,B_{m-1}$ have already been parsed (initially $m=i$), let $B_{m},\ldots,B_{m+r-1}$ be the longest subrange with $m+r-1 \le j$ for which there is $m' < m$ such that $\rleft(m',4r) = \rleft(m,4r)$ and $\rright(m',5r) = \rright(m,5r)$ (note that $\id(B_m),\ldots,\id(B_{m+r-1})$ must be a prefix of $\rright(m,5r)$); let $r = 1$ if there is no such $B_{m},\ldots,B_{m+r-1}$; once $r$ is determined, we have obtained a new subrange $B_{m},\ldots,B_{m+r-1}$ and we  continue the parsing for $B_{m+r},\ldots,B_j$, assigning $m := m+r$. 

Define a new tree $J$, called the \emph{\textbf{jiggly block tree}}, that is a copy of $\hat{H}$ with the following modifications. For each level $2k + 1$, we consider each range $B_i,\ldots,B_j$ as described above such that all $B_i,\ldots,B_j$ were not removed from $\hat{H}$ after the rules 1--4, and we parse  $B_i,\ldots,B_j$  into subranges as described above; for each subrange $B_m,\ldots,B_{m+r-1}$ with $r > 1$, we remove from $J$ the blocks $B_m,\ldots,B_{m+r-1}$ (which are children of $B$) and insert in their place a new block $B'$ with parent $B$ and no children; $B'$ is called \emph{intermediate}. We do not assign $\id$ to $B'$. 

By a simple analysis, one can show that any ``original'' block $B$ of $H$ cannot disappear from $J$, always a copy of $B$ remains, which is formalized as follows (the proof is in Appendix~\ref{appx:jiggly}). 

\begin{restatable}{lemma}{leftmost}
	For $d \ge 0$, let $B$ be the leftmost topmost block with $\id(B) = d$ and $|B| > 1$ in the hierarchy $H$, i.e., $\sbeg(B)$ is minimal for $B$ among all blocks with this $\id$ and $B$ is such block with maximal level. The jiggly block tree $J$ retains the block $B$ and $B$ has children in $J$.\label{lem:leftmost}
\end{restatable}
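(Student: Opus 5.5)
We trace the fate of $B$ through the successive modifications: first the rules 1--4 producing $\hat{H}$, then the replacement of subranges by intermediate blocks producing $J$. Throughout, the key tool is Lemma~\ref{lem:uniformity}: every block with $\id = d$ is equal as a string to $B$ and was created on the same level. Here ``created'' means the first level where a block with this identifier arises by uniting several children, not a copy. Since $|B|>1$, $B$ is not a level-$0$ block, so its identifier was produced by a run step or by a DCT range step, and it has at least two children in $H$ with identifiers different from $d$. The copies of $B$ on higher levels form a chain $B=B^{(\ell)}, B^{(\ell+1)},\dots$ in $H$ of blocks with the same $\id$ and the same endpoints. Because $B$ is topmost, $B$ itself is the highest block in this chain.

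For $\hat{H}$ we check the rules one by one. Rule~1 does not apply, since $B$ is not on level $0$. Rule~2 merges $B$ with a parent or child having the same $\id$. By maximality of level, the parent of $B$ has a different $\id$. Merging with a child $C$ of equal $\id$ removes $C$, not $B$, and only hands $C$'s children to $B$. We should phrase the retained vertex as ``$B$ or the merged vertex with the same endpoints and $\id$''. After such merges the children of $B$ are the children of the lowest block of the chain, i.e.\ of the created block. These are at least two blocks of identifiers different from $d$.

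Rule~3 removes descendants of $B$ only if some block $B'$ with $\id(B')=d$ has $\sbeg(B')<\sbeg(B)$, which contradicts leftmostness. Rule~4 could remove $B$ only if $B$ is a non-leftmost child of a parent all of whose children have equal $\id$. This happens exactly when the parent was created as a run, with $B$ one of the repeated blocks. But then the leftmost child has the same $\id$ and a smaller $\sbeg$, again contradicting leftmostness. Finally, $B$ could vanish if some ancestor $A$ loses its descendants by rule~3 or 4. For rule~3, $A$ has an earlier equal copy $A'$. Lemma~\ref{lem:uniformity} together with the deterministic construction (or local consistency, Lemma~\ref{lem:consistency}, applied inside equal identifiers) yields inside $A'$ a block with $\id = d$ starting earlier, a contradiction. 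This step is the main obstacle: we must argue that equal identifiers imply identical subtrees in $H$. We would prove this by induction on level, using that identifiers are assigned injectively from the sequence of child identifiers (runs via $\langle \id, r\rangle$, ranges via the sequence lookup). Rule~4 on an ancestor is handled similarly: all children of $A$ are equal, so the subtree of the leftmost child contains an earlier copy of $B$.

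For $J$ we must show $B$ is not swallowed into an intermediate block. Such swallowing happens only to a block $B_m$ on an odd level $2k+1$ lying in a parsed subrange $B_m,\dots,B_{m+r-1}$ with $r>1$. That subrange exists only if an earlier $m'<m$ satisfies $\rright(m',5r)=\rright(m,5r)$, hence $\id(B_{m'+t})=\id(B_{m+t})$ for $t<r$. The earlier block $B_{m'+(p-m)}$, where $p$ is the index of $B$ in the subrange, then has $\id=d$ and smaller $\sbeg$, a contradiction. The children of $B$ are never removed in this step either: if they are replaced by an intermediate block, that block remains a child of $B$. So $B$ has children in $J$.
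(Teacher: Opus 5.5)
Your proof is correct and follows essentially the same route as the paper's. Both check the rules one by one to show $B$ is never removed or stripped of children, and both rule out removal through an ancestor under rules~3--4 via the induction-on-levels fact that blocks with equal $\id$ have corresponding equal-$\id$ blocks at the same relative offsets in their subtrees in $H$; both then use the earlier copy $B_{m'+t}$ to handle intermediate blocks. One small addition is needed in the intermediate-block step: you should also exclude $B$ being a strict descendant of some $B_{m+t}$, which the paper treats explicitly and which follows from applying that same subtree correspondence to $B_{m+t}$ and $B_{m'+t}$, exactly as in your rule-3 argument for ancestors.
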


To support the basic navigation on the string $s$, the tree $J$ is equipped as follows.
Each non-intermediate block $B$ in $J$ stores $\sbeg(B)$, $\send(B)$, $\id(B)$, pointers to children of $B$, and the lowest level $\ell(B)$ on which a block with identifier $\id(B)$ was created in $H$. For each non-leaf block $B$ in $J$, two cases are possible: (1)~$B$ has children $B'_i,\ldots,B'_j$ in $J$ (some of which might be intermediate blocks inserted in place of old children of $B$ from $H$) whose concatenation, if viewed as strings, equals $s[\sbeg(B)\dd\send(B)]$ (i.e., no ``gaps'' between the children); (2)~$B$ has only one child $B'$ and all removed children had the same identifiers $\id(B')$, i.e., $B$ was a run block with exactly $|B| / |B'|$ children in $H$. If a block $B$ in $J$ with $|B| > 1$ has no children, then either $B$ is intermediate or, by Lemma~\ref{lem:leftmost}, there is a block $B'$ with children in $J$ such that $\id(B') = \id(B)$. In the latter case, we store in $B$ a pointer to $B'$; observe that if $B$ is from a level $\ell$, then all children $B'_i,\ldots,B'_j$ of $B'$ are from levels ${\le}\ell - 1$ since, by Lemma~\ref{lem:uniformity}, $B$ and $B'$ originate from blocks created on the same level in the hierarchy $H$. Hence, when one moves from the level-$\ell$ block $B$ to $B'$ and, then, to a child of $B'$, the level decreases to at most $\ell - 1$. 
Thus, if there are no intermediate blocks in $J$, one can read $s[\sbeg(B)\dd \send(B)]$ for any block $B$ in $J$ in $O(|B|)$ time by naively descending in $J$ as in grammar indexes \cite{ChristiansenEtAl}. 

Consider an intermediate block $B$ in $J$ from a level $2k + 1$. By construction, it has no children in $J$. Let $B'$ be its parent. Denote by $B_1,\ldots,B_b$ all blocks from level $2k + 1$ in the hierarchy $H$; we mark them according to conditions (a) and (b) from Section~\ref{sec:blocks} so that we can use the notation $\rleft$ and $\rright$ from now on. Suppose that $B' = B_i \cdots B_j$, where $B_i,\ldots,B_j$ is the range of blocks from which $B'$ was produced in $H$. Since $B$ is intermediate, we have $B = B_m \cdots B_{m+r-1}$, for some $m$ and $r$ such that $[m\dd m{+}r) \subseteq [i\dd j]$. By construction, there is $m' < m$ with $\rleft(m',4r) = \rleft(m,4r)$ and $\rright(m',5r) = \rright(m,5r)$. Choose the smallest $m''$ such that $\rright(m'',r) = \rright(m,r)$. Obviously, $m'' \le m'$. Let $B_{i'},\ldots,B_{j'}$ be a range of blocks in $H$ such that $i' \le m'' \le j'$, and let $\hat{B}$ be a block  on level $2k + 2$ produced from this range. Since $\rright(m'',r) = \rright(m,r)$, the definition of $\rright$ implies that the blocks $B_{m''},\ldots,B_{m''+r-2}$ are unmarked and, therefore, $[m''\dd m''{+}r) \subseteq [i'\dd j']$. Since $m''$ is the smallest such index, $\hat{B}$ must be the leftmost block in $H$ from level $2k+2$ produced from the sequence $\id(B_{i'}),\ldots,\id(B_{j'})$, which, by Lemma~\ref{lem:uniformity}, implies that $\hat{B}$ is the leftmost block with the identifier $\id(\hat{B})$ in the whole hierarchy $H$. By Lemma~\ref{lem:leftmost}, the block $\hat{B}$ was retained in $J$ and has children in $J$. We store in $B$ a pointer to $\hat{B}$ and the following numbers required to identify the location of the copy of $B = B_m\cdots B_{m+r-1}$ among the children of $\hat{B}$: $\mathsf{off}(B) = m'' - i'$ and $\mathsf{r}(B) = r$. (Note that $\hat{B}$ might coincide with $B'$, albeit it is unusual.)

Some subranges of blocks in the range $B_{i'},\ldots,B_{j'}$ could be replaced with intermediate blocks in $J$. We claim that any such intermediate block $\hat{B}'$ that replaced some of the blocks $B_{m''},\ldots,B_{m''+r-1}$ has $\mathsf{r}(\hat{B}') \le r/2$. Suppose, to the contrary, that $\hat{B}' = B_{\hat{m}}\cdots B_{\hat{m}+\hat{r}-1}$ is such that $\hat{r} = \mathsf{r}(\hat{B}') > r/2$ and $[\hat{m}\dd \hat{m}{+}\hat{r})\cap [m''\dd m''{+}r) \ne \emptyset$. Since $4\hat{r} > r$, the concatenation of the sequences $\rleft(\hat{m},4\hat{r})$ and $\rright(\hat{m},5\hat{r})$ contains the sequence $\id(B_{m''}),\ldots,\id(B_{m''+r-1})$. Hence, by construction, there is $m''' < m''$ such that $\rright(m''',r) = \rright(m'',r)$, which contradicts the minimality of $m''$. 
This claim implies that, for an intermediate block $B$ in $J$, one can compute all $r$ children of $B$ from the original hierarchy of blocks $H$ in $O(r)$ time by the simple traversal of pointers in $J$. Since, by Lemma~\ref{lem:sparsity} (local sparsity), any block $B$ has $O(\sum_{k=1}^\infty |B|/2^k) = O(|B|)$ descendants in the hierarchy $H$, we obtain the following result.

\begin{lemma}
	Given a block $B$ in the jiggly block tree $J$, one can obtain all $r$ children of $B$ from the original hierarchy of blocks $H$ in $O(r)$ time; further, one can compute all descendant blocks for $B$ in $H$ in $O(|B|)$ time.\label{lem:traversal}
\end{lemma}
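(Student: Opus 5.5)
We prove the first claim by strong induction on the level of the block, and then derive the second claim from it together with Lemma~\ref{lem:sparsity}.

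\textbf{Step 1: a block with children in $J$.} Let $B$ be a non-intermediate block of $J$ that has children in $J$, and let it lie on level $\ell$. There are two cases, matching the description of $J$.
\begin{itemize}
\item In case~(2), $B$ is a run block with a single retained child $B'$. Its $H$-children are $|B|/|B'|$ copies of $B'$. The run length is encoded in $\id(B)=\langle \id(B'),r\rangle$, or can be recovered from $|B|/|B'|$. So the $r$ children are output in $O(r)$ time.
\item In case~(1), we scan the children $B'_i,\dots,B'_j$ of $B$ in $J$. A non-intermediate child is output directly. An intermediate child $B''$ with $\mathsf{r}(B'')=r''$ is expanded recursively into $r''$ $H$-children.
\end{itemize}

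\textbf{Step 2: an intermediate block.} Let $B$ be intermediate with $\mathsf{r}(B)=r$. We follow its pointer to $\hat B$, which has children in $J$. We walk the children of $\hat B$ in $J$ while counting the original $H$-blocks they represent, using the stored $\mathsf{r}$ values of intermediate children, until we reach offset $\mathsf{off}(B)$.
\begin{itemize}
\item Walking past the prefix costs $O(1)$ per $J$-child. To keep this within budget, each $J$-node would store prefix sums of $\mathsf{r}$ over its children, so that offset $\mathsf{off}(B)$ is located in $O(1)$ time by a lookup over $O(\log\log n)$ entries. This is the fix I would try first; if the intended proof avoids such arrays, the skip must be amortized some other way.
\item We then expand only the children overlapping $[m''\dd m''{+}r)$. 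By the claim just proved, every intermediate block $\hat B'$ among them has $\mathsf{r}(\hat B')\le r/2$.
\item The two boundary intermediate blocks may overlap the target range only partially. We handle them by recursing with a sub-range request: produce children $x..y$ of an intermediate block.
\end{itemize}

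\textbf{Step 3: the running-time recurrence.} Define $T(r)$ as the cost of producing $r$ consecutive $H$-children through an intermediate block. The expansion gives
\[
T(r)\le O(1)+\sum_t T(r_t), \qquad \sum_t r_t\le r+O(1), \qquad r_t\le r/2 .
\]
Each node of the recursion tree either outputs blocks or splits into pieces of at most half the size. Counting recursion nodes against output blocks gives $T(r)=O(r)$, provided every recursion node outputs at least one block, which holds because the request ranges are nonempty.

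Partial overlaps add wasted work only at the two ends of each level of recursion. Since the sizes halve, this waste forms a geometric series bounded by $O(r)$. This accounting, together with the constant-time location of the offset, is the main obstacle.

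\textbf{Step 4: all descendants.} To list all descendants of $B$ in $H$, we apply the first claim repeatedly, level by level, obtaining the children of every descendant. The total time is linear in the number of descendants. By Lemma~\ref{lem:sparsity}, the levels $2k$ and $2k{+}1$ contain $O(\lceil |B|/2^k\rceil)$ blocks inside $B$. Summing over all levels below $B$ gives $O(|B|)$.

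There are at most $O(\log n)$ levels below $B$, and the $\lceil\cdot\rceil$ terms could add up across them. These terms are absorbed for two reasons: chains of copied blocks are removed by rules~1--2, and the number of levels below $B$ is $O(\log|B|)$, since blocks on level $2k$ have length at least about $2^{k}$. Hence the number of levels is $O(\log |B|)\le O(|B|)$.
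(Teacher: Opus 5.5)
Your route is the paper's route: follow the pointer of an intermediate block to $\hat B$, recurse using the claim that every intermediate $J$-child of $\hat B$ meeting $B_{m''},\ldots,B_{m''+r-1}$ has $\mathsf{r}\le r/2$, and count descendants level by level. The paper only asserts that this claim yields $O(r)$. Your Step~3 tries to supply the missing accounting, and that is where your argument breaks.

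\textbf{The gap in Step~3.} With sub-range requests, the halving claim bounds the $\mathsf{r}$-value of the intermediate block you descend into. It does not bound the length of the piece you request from that block.
\begin{enumerate}
\item Suppose a requested piece of length $x$ lies strictly inside a single intermediate $J$-child $Z$ of the current source. It is handed down to $Z$ unchanged. That recursion node outputs nothing and has one child, with the same request length $x$.
\item $\mathsf{r}(Z)$ is at most half the current $\mathsf{r}$-value, but it can still be much larger than $x$. Nothing in the halving claim stops this from repeating about $\log_2(\mathsf{r}/x)$ times, for example for a boundary overlap of length $1$ inside a child with $\mathsf{r}\approx r/2$.
\item Hence $r_t\le r/2$ is false for your $T$, which is indexed by request length. "Every recursion node outputs at least one block" is also false, so recursion nodes cannot be charged to output blocks.
\item Partial overlaps arise at both ends of every branching node, not at two ends per level. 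Each such end can start a chain of this kind, and no geometric series is visible.
\end{enumerate}
If you instead expand the boundary children completely, the halving does hold. But each boundary child can overhang the target range by up to $r/2-1$ blocks, so $\sum_t r_t$ can approach $2r$. The recurrence then admits $T(r)=4T(r/2)$, which is quadratic. Either way, closing Step~3 needs a real bound on the total length of these non-branching chains, or a pointer representation that avoids partial overlaps. The halving claim, used the way you use it, does not provide either.

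\textbf{Two smaller issues.} In Step~4, blocks on level $2k$ need not have length about $2^k$. In $a^N bc\,a^N$ the block $bc$ has length $2$, yet marking rule (a) isolates it between the long runs. It is therefore copied unchanged up to level about $2\log N$. So the number of levels below a $J$-block is not $O(\log|B|)$. The ceiling terms are absorbed instead by your other remark, made precise: after contracting chains of copies, every internal descendant of $B$ is a run or a range of at least two blocks. So there are fewer than $2|B|$ such descendants, and the first claim lists them in $O(|B|)$ total time.

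In Step~2, the $O(1)$ lookup among the prefix sums is unexplained; it would need a packed predecessor search. An array indexed by $H$-offset would cost up to $\Theta(\log\log n)$ words per node, exactly the overhead $J$ is built to remove. It is simpler to store in $B$ a pointer to the $J$-child of $\hat B$ containing $B_{m''}$, together with the offset inside it.
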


\begin{restatable}{theorem}{spacetheorem}
	The jiggly block tree for string $s$ of length $n$ has size $O(\delta\log\frac{n}{\delta})$, where $\delta$ is the string complexity of $s$, provided $\delta \ge \Omega(\log\log n)$.
	\label{thm:space}
\end{restatable}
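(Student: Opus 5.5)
We will bound separately (i) the number of non-leaf blocks of $J$ and (ii) the number of leaves of $J$. The leaves are charged either to non-leaf parents with $O(1)$ children each, or to the phrases of the greedy LZ-like parsing of the ranges $B_i,\ldots,B_j$ on levels $2k+1$. The argument follows the block-tree analysis of \cite{KociumakaNavarroPrezza2}, done level by level. The target is, for each pair of levels $2k,2k+1$ (and $2k+2$), a bound of $O(\min\{n/2^k,\ \delta 2^{k'}\cdot\text{const}\})$ surviving non-leaf blocks, with $k'$ a shift that we sum geometrically. Summing $\min\{n/2^k, c\,\delta\}$ over $k$ gives $O(\delta\log\frac{n}{\delta})$: levels with $2^k\le n/\delta$ contribute $O(\delta)$ each, and the remaining levels form a geometric series of total $O(\delta)$.

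\textbf{Step 1 (non-leaf blocks on a level).} A level-$\ell$ block $B$ with $\ell\in\{2k,2k+1,2k+2\}$ keeps its children in $J$ only if rule 3 does not apply. Rule 3 applies to $B$ whenever $B$ is not the leftmost occurrence of its $\id$. By Lemma~\ref{lem:consistency}, if the context $s[\sbeg(B)-2^{k+4}\dd \send(B)+2^{k+4}]$ occurs earlier (shifted), then a block with the same $\id$ occurs earlier. Hence every block keeping children is one whose extended context is a leftmost occurrence. Such a block has length $O(2^k)$; longer blocks are copies that merged via rules 1--2, so they contribute no new internal nodes. Its context therefore has length $L=\Theta(2^k)$, and distinct surviving blocks have distinct leftmost-occurring length-$L'$ windows, with $L'=|B|+2^{k+5}$. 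Because blocks on a level are disjoint, we group surviving blocks by the window of fixed length $\Theta(2^k)$ starting at $\sbeg(B)-2^{k+4}$. Each leftmost-occurrence window start covers $O(1)$ block starts by Lemma~\ref{lem:sparsity} over an interval of length $2^k$. The number of leftmost-occurrence positions of length-$L$ substrings is $d_L(s)\le \delta L$, but that gives $O(\delta 2^k)$ per level, which is too much. As in \cite{KociumakaNavarroPrezza2}, we therefore count intervals rather than positions. The leftmost occurrences of length-$L$ substrings form at most $d_L(s)/\!\!\;$... To fix this, we use the standard argument: every surviving block lies inside a maximal region of positions $p$ where $s[p\dd p+L)$ is a leftmost occurrence, and each such region together with its boundary contributes at most $O(L)$ new distinct length-$2L$ substrings. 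This yields $O(d_{2L}(s)/L)=O(\delta)$ regions, and each region contributes $O(1+\text{length}/2^k)$ blocks by Lemma~\ref{lem:sparsity}. Hence there are $O(\delta)$ surviving internal blocks per level, and trivially $O(n/2^k)$.

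\textbf{Step 2 (leaves).} Children of surviving blocks on levels $2k+2$ that are not level-$(2k+1)$ range children number $O(1)$ per parent, since runs collapse to one child by rule 4 and long blocks are copies. The main obstacle is the ranges of up to $O(\log\log n)$ level-$(2k+1)$ children, which are parsed into subranges. Here we show that the number of phrases with $r=1$, plus all phrases, over all surviving ranges on level $2k+1$ is $O(\delta)$ per level plus $O(1)$ per surviving range. The approach is as follows. A phrase starting at $m$ is cut because the block sequence $\rleft(m,4r)\,\rright(m,5r)$ around it, for the next larger $r$, is new, i.e.\ leftmost. These windows consist of $O(r)$ blocks, i.e.\ by Lemma~\ref{lem:sparsity} and the $\log\log$ bound, strings of length $\Theta(r 2^k)$ on average. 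Charging a phrase of length $r$ to a leftmost-occurring substring of length $\Theta(r2^k)$ whose left part overlaps the phrase, we get, for each dyadic scale $r\approx 2^t$, $O(\delta)$ phrases with density argument as in Step 1. Because the $\rleft$ window with factor $4$ exceeds the phrase length, consecutive phrases of scale $2^t$ charge disjoint-ish windows, and the count per scale is $O(\delta\cdot 2^{-t}\cdot\ldots)$. I expect this step to be the crux: getting a geometric decay over $t$ so that the total is $O(\delta)$ per level rather than $O(\delta\log\log\log n)$. I would try to show that phrases of size $\sim 2^t$ starting inside a single leftmost-occurrence region of length $\lambda$ number $O(1+\lambda/(2^t 2^k))$, using $\sum_t$ of the constant term bounded by the $O(\log\log n)$ bound, which is absorbed by $\delta\ge\Omega(\log\log n)$.

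\textbf{Step 3.} We sum over levels using the $\min\{n/2^k,\,\delta\}$-type bound; for leaves we use $O(n/2^k)$ from Lemma~\ref{lem:sparsity} at high levels. The per-node storage (pointers, $\mathsf{off},\mathsf{r}$) is $O(1)$ words, so the total is $O(\delta\log\frac n\delta)$.
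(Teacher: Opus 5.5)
\textbf{Step~1 is fine in substance.} Your region-counting is stated loosely: you need ``at least'' rather than ``at most'' $O(L)$ new substrings, and regions closer than $L$ share them. The clean version assigns each surviving level-$\ell$ block $B$ to the $2^k$ windows $s[h\dd h{+}2^{k+6})$ with $h\in(\sbeg(B){-}2^{k+4}{-}2^k\dd\sbeg(B){-}2^{k+4}]$. Each of these covers the leftmost context, so each is a leftmost occurrence. Lemma~\ref{lem:sparsity} shows each window receives $O(1)$ blocks, hence $b\cdot 2^k\le O(d_{2^{k+6}})$.

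\textbf{The genuine gap is Step~2, which you flag as the crux and leave open.} You aim for $O(\delta)$ leaves \emph{per level}, splitting phrases into dyadic scales $r\approx 2^t$. The per-scale bound you propose, $O(1+\lambda/(2^t2^k))$ phrases per leftmost region of length $\lambda$, sums to $O(\delta)$ for \emph{each} $t$. There are $\Theta(\delta)$ regions and total region length $O(\delta 2^t2^k)$, so there is no decay in $t$. Over the $\Theta(\log\log\log n)$ scales this gives $O(\delta\log\log\log n)$ per level and $O(\delta\log\frac{n}{\delta}\log\log\log n)$ overall. The hypothesis $\delta\ge\Omega(\log\log n)$ cannot absorb the constant terms, since they are $O(1)$ per region and there are $\Theta(\delta)$ regions per scale.

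\textbf{The missing idea is to charge leaves by their absolute length scale rather than by level.} The paper never proves, or needs, a per-level $O(\delta)$ bound for leaves. First it discards the leaves that are first or last children or satisfy $2|B|\ge|\mathsf{par}(B)|$; these are $O(1)$ per internal node. It also discards children of the $O(\log n)$ parents near the ends of $s$, which contribute $O(\log n\log\log n)$. This is where $\delta\ge\Omega(\log\log n)$ is actually used.

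For a remaining leaf $B=B_m\cdots B_{m+r-1}$ on level $2k+1$, it takes the smallest $k'\ge k+4$ with $2^{k'}>|B_m\cdots B_{m+r}|$. Using Lemma~\ref{lem:consistency}, it shows that the context $s[x{-}z\dd y{+}z]$ is a leftmost occurrence, where $z=4(r+2)2^k+2^{k+4}\le 2^{k'+8}+2^{k'}$. It then assigns $B$ to the $2^{k'}$ windows of length $2^{k'+10}$ covering this context.

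The decisive step is a cross-level packing argument. For fixed $k'$, one window receives $O(1)$ leaves from all levels \emph{combined}. This holds because the intervals $I(B)=[\sbeg(B_m)\dd\send(B_{m+r})]$ all have length in $[2^{k'-1},2^{k'}]$, intervals from different levels are nested or disjoint, and the restriction $2|B|<|\mathsf{par}(B)|$ prevents an interval from being nested in two others. This yields $O(\delta)$ leaves per $k'$, hence $O(\delta\log\frac{n}{\delta})$ over $k'\le\log\frac{n}{\delta}$, and $O(n/2^{k'})$ for each larger $k'$.

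In this accounting a phrase of $r\approx2^t$ blocks on level $2k+1$ lands at $k'\approx k+t$. The $\log\log\log n$ scales are therefore spread across different values of $k'$ instead of stacking on one level, which is exactly the decay your per-level scheme cannot produce.
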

\begin{proof}[Proof (sketch).]
	We are to estimate the number of blocks in $J$ as $O(\delta\log\frac{n}{\delta})$. The basic argument is as in \cite{KociumakaNavarroPrezza2}: we will estimate the number of internal blocks on each level $\ell$ as $O(\delta)$, which implies that the number of internal blocks on $2\log\frac{n}{\delta}$ lowest levels is $O(\delta\log\frac{n}{\delta})$, as required; then, by Lemma~\ref{lem:sparsity} (local sparsity), the number of blocks on levels $\ell \ge 2\log\frac{n}{\delta}$ is at most $O(\sum_{k=\log(n/\delta)}^{\infty} n / 2^k) = O(\delta)$. This scheme will be complicated by the subsequent counting of leaves since in $J$ a block may have up to $O(\log\log n)$ leaf-children.
	
	Fix $\ell = 2k + 1$ or $2k + 2$, for $k \ge 0$. Every non-leaf level-$\ell$ block $B$ in $J$ satisfies $|B| \le 2^k$ and its corresponding substring $s[\sbeg(B){-}2^{k+4}\dd \send(B){+}2^{k+4}]$ cannot occur at smaller positions since otherwise, due to Lemma~\ref{lem:consistency} (local consistency), there is a copy of  $B$ to the left of $B$ and, thus, all children of $B$ should be removed and a pointer to this copy should be stored in $B$. We associate with every such $B$ a group of $\Theta(2^k)$ substrings of length $2^{k+6}$, each of which covers $s[\sbeg(B){-}2^{k+4}\dd \send(B){+}2^{k+4}]$ and, hence, cannot occur at smaller positions, i.e., they are ``leftmost substrings''. Due to  Lemma~\ref{lem:sparsity} (local sparsity), at most $O(1)$ level-$\ell$ blocks can be associated with one substring of length $2^{k+6}$. Therefore, $\Theta(2^k)\cdot b \le O(d_{2^{k+6}})$, where $b$ is the number of non-leaf level-$\ell$ blocks $B$ and $d_{2^{k+6}}$ is the number of distinct substrings of length $2^{k+6}$ (which is equal to the number of leftmost substrings of length $2^{k+6}$). Thus, we obtain $b \le O(d_{2^{k+6}} / 2^{k+6}) \le O(\delta)$. (The described argument is exactly the same as in  \cite{KociumakaNavarroPrezza2}.)
	
	\begin{figure}[htb]
		\tikzset{every picture/.style={line width=0.75pt}} 
		
		\begin{tikzpicture}[x=0.75pt,y=0.75pt,yscale=-1,xscale=1]
			
			\draw    (19,80) -- (196.5,80) ;
			\draw    (19,70) -- (196.5,70) ;
			\draw    (213,70) -- (534.5,70) ;
			\draw    (49,70) -- (49,80) ;
			\draw    (39,70) -- (39,80) ;
			\draw    (59,70) -- (59,80) ;
			\draw    (69,70) -- (69,80) ;
			\draw    (83,70) -- (83,80) ;
			\draw    (90,70) -- (90,80) ;
			\draw    (100,70) -- (100,80) ;
			\draw    (120,70) -- (120,80) ;
			\draw    (130,70) -- (130,80) ;
			\draw    (140,70) -- (140,80) ;
			\draw    (154,70) -- (154,80) ;
			\draw    (164,70) -- (164,80) ;
			\draw    (184,70) -- (184,80) ;
			\draw    (192,70) -- (192,80) ;
			\draw    (31,70) -- (31,80) ;
			\draw    (299,70) -- (299,80) ;
			\draw    (289,70) -- (289,80) ;
			\draw    (309,70) -- (309,80) ;
			\draw    (319,70) -- (319,80) ;
			\draw    (333,70) -- (333,80) ;
			\draw    (340,70) -- (340,80) ;
			\draw    (350,70) -- (350,80) ;
			\draw    (370,70) -- (370,80) ;
			\draw    (380,70) -- (380,80) ;
			\draw    (390,70) -- (390,80) ;
			\draw    (404,70) -- (404,80) ;
			\draw    (414,70) -- (414,80) ;
			\draw    (424,70) -- (424,80) ;
			\draw    (436,70) -- (436,80) ;
			\draw    (275,70) -- (275,80) ;
			\draw  [fill={rgb, 255:red, 74; green, 144; blue, 226 }  ,fill opacity=0.49 ] (59,70) -- (164,70) -- (164,80) -- (59,80) -- cycle ;
			\draw  [fill={rgb, 255:red, 74; green, 144; blue, 226 }  ,fill opacity=0.49 ] (309,70) -- (414,70) -- (414,80) -- (309,80) -- cycle ;
			\draw  [fill={rgb, 255:red, 208; green, 2; blue, 27 }  ,fill opacity=0.47 ] (414,70) -- (436,70) -- (436,80) -- (414,80) -- cycle ;
			\draw  [fill={rgb, 255:red, 208; green, 2; blue, 27 }  ,fill opacity=0.47 ] (164,70) -- (192,70) -- (192,80) -- (164,80) -- cycle ;
			\draw    (233,70) .. controls (273,40) and (455.5,39) .. (494.5,69) ;
			\draw    (253,70) .. controls (293,40) and (475.5,39) .. (514.5,69) ;
			\draw    (273,70) .. controls (313,40) and (495.5,39) .. (534.5,69) ;
			\draw    (213,70) .. controls (253,40) and (435.5,39) .. (474.5,69) ;
			\draw  [fill={rgb, 255:red, 208; green, 2; blue, 27 }  ,fill opacity=0.47 ] (31,70) -- (59,70) -- (59,80) -- (31,80) -- cycle ;
			\draw  [fill={rgb, 255:red, 208; green, 2; blue, 27 }  ,fill opacity=0.47 ] (275,70) -- (309,70) -- (309,80) -- (275,80) -- cycle ;
			\draw    (213,80) -- (534.5,80) ;
			
			\draw (306,82) node [anchor=north west][inner sep=0.75pt]  [font=\footnotesize]  {$B_{m-4r}$};
			\draw (401,82) node [anchor=north west][inner sep=0.75pt]  [font=\footnotesize]  {$B_{m+5r-1}$};
			\draw (57,82) node [anchor=north west][inner sep=0.75pt]  [font=\footnotesize]  {$B_{m'-4r}$};
			\draw (152,82) node [anchor=north west][inner sep=0.75pt]  [font=\footnotesize]  {$B_{m'+5r-1}$};
			\draw (102,82) node [anchor=north west][inner sep=0.75pt]  [font=\footnotesize]  {$B_{m'}$};
			\draw (350,82) node [anchor=north west][inner sep=0.75pt]  [font=\footnotesize]  {$B_{m}$};
			\draw (358,27) node [anchor=north west][inner sep=0.75pt]  [font=\small]  {$\Theta ( 2^{k'})$};

		\end{tikzpicture}

		\caption{A schematic representation of a group of $\Theta(2^{k'})$ leftmost substrings of length $2^{k'}$ (depicted as arcs above) associated with a leaf block $B = B_m\cdots B_{m+r-1}$. The blue rectangles represent blocks $B_{m-4r},\ldots,B_{m+5r-1}$ and their copies to the left, $B_{m'-4r},\ldots,B_{m'+5r-1}$. The red rectangles represent that it is impossible to extend the subrange to $B_{m-4(r+1)},\ldots,B_{m+5(r+1)-1}$.}
		\label{fig:thera-2k}
	\end{figure}

	If every block in $J$ had $O(1)$ leaves, the total number of blocks would be $O($number of non-leaf blocks$)$. Unfortunately, some blocks may have up to $O(\log\log n)$ leaves. We analyze each level-$\ell$ leaf $B$ that is not the rightmost child of its parent $B'$. $B'$ was generated from a sequence of level-$\ell$ blocks: $B' = B_i\cdots B_j$ where a subrange $B_{m},\ldots,B_{m+r-1}$ generated the leaf block $B$ (intermediate, if $r > 1$, or not, if $r=1$). See Fig.~\ref{fig:thera-2k}. The subranges were produced by a greedy parsing that tried to produce the subrange $B_m,\ldots,B_{m+r}$ instead of $B_{m},\ldots,B_{m+r-1}$ but failed since there were no occurrences of the sequence $B_{m-4(r+1)},\ldots,B_{m+5(r+1)-1}$ (the subrange with a ``neighbourhood'' of $4(r+1)$ blocks to its left and right) to the left in the hierarchy of blocks. Therefore, the substring $s[a\dd b] = s[\sbeg(B_{m-4(r+1)}){-}2^{k+4} \dd \send(B_{m+5(r+1)-1}){+}2^{k+4}]$ cannot occur at smaller positions since, otherwise, by Lemma~\ref{lem:consistency} (local consistency), there would be an earlier occurrence for the sequence. We associate with $B$ a group of $\Theta(2^{k'})$ leftmost substrings of length $2^{k'+1}$ that cover $s[a\dd b]$, where $2^{k'}$ is the closest power of two to the length of $s[a\dd b]$. We consider separately each number $k'$ with all such groups of $\Theta(2^{k'})$ leftmost substrings, and the further analysis to bound the number of these groups by $O(\delta)$, for this $k'$, is similar as above, in principle, but it has many subtle details and special cases.
	
	All remaining details of the proof were moved to Appendix \ref{appx:jiggly}.
\end{proof}

\newcommand{\str}{\mathsf{str}}

\section{Substring Search}
\label{sec:search}

Let us augment the jiggly block tree $J$ to support searching in $s$. Our approach is standard but with modifications since we will not use Karp--Rabin fingerprints \cite{KarpRabin}. Instead, we invent somewhat analogous \emph{deterministic fingerprints}. Throughout, we use hash tables supporting queries in $O(1)$ time \cite{BFKK}. Let us fix a string  $t$ of length $m$, the pattern to search.

\textbf{\boldmath Parsing $t$.}
We create a compacted trie $T_{\id}$ and, for each non-run block $B$ in $J$ that was created from a range of blocks $B_i,\ldots,B_j$ from a level $2k + 1$ of the hierarchy $H$, we store in $T_{\id}$ the sequence $\id(B_i),\ldots,\id(B_j)$ (treated as a string with letters $\id(B_i),\ldots,\id(B_j)$).  
As is usual for compacted tries, the edge labels in $T_{\id}$ are sequences of $\id$s and there is a node $x$ such that $\str(x) =  \id(B_i),\ldots,\id(B_j)$, where $\str(x)$ denotes the sequence written on the root--$x$ path; we store in $x$ a pointer to $B$. The edges of $T_{\id}$ do not store their edge labels, only the lengths of these edge labels. Each internal node of $T_{\id}$ is augmented with a hash table that maps any block identifier to the child whose edge label starts with this identifier (if any). It follows from Theorem~\ref{thm:space} that the size of $T_{\id}$ is $O(\delta\log\frac{n}{\delta})$.

Our search algorithm constructs a hierarchy of blocks for $t$ analogous to $H$.  Level $0$ in the hierarchy for $t$ contains the one-letter blocks $t[0],\ldots,t[m{-}1]$ whose identifiers are the corresponding letters; on a generic step of the algorithm, for $\ell \ge 0$, we have all level-$\ell$ blocks $B_1,\ldots,B_b$ and $B_1\cdots B_b = t$. Suppose that $\ell = 2k$, for some $k\ge 0$. As in Section~\ref{sec:blocks}, we identify in $O(b)$ time all maximal runs of blocks $B_i,\ldots,B_{i+r-1}$ with equal identifiers (i.e., $\id(B_i) = \cdots = \id(B_{i+r-1})$) such that $|B_i| \le 2^k$. If $r = 1$ or $|B_i| > 2^k$, we copy the block $B_i$ to level $2k+1$. If $r > 1$, we substitute such run $B_i,\ldots,B_{i+r-1}$ with a new block $B$ for level $2k + 1$ with identifier $\langle\id(B_i),r\rangle$. Thus, all block for level $2k + 1$ are constructed. Suppose that $\ell = 2k + 1$, for some $k\ge 0$. As in Section~\ref{sec:blocks}, we assign in $O(b)$ time to each block $B_i$ an identifier $\id''(B_i)$ that is either $\infty$ (for example, if $|B_i| > 2^k$) or an integer that is at most $O(\log\log n)$; we then produce new blocks for level $2k + 2$ by identifying local minima for $\id''$ and values of $\id''$ equal to $\infty$; we omit details as they are the same as in Section~\ref{sec:blocks}. It remains to assign identifiers to new blocks. Suppose that a new such block $B$ was created from blocks $B_i,\ldots,B_j$ from level $2k + 1$. We descend from the root of the trie $T_{\id}$ reading the sequence $\id(B_i),\ldots,\id(B_j)$ and skipping sequences written on the edge labels; thus, we reach a node $x$ that stores a pointer to a block $B'$ in $J$ and we retrieve all skipped edge labels in $O(j - i + 1)$ time traversing $J$ from $B'$ as described in Lemma~\ref{lem:traversal}; we then assign $\id(B) = \id(B')$. If the described search has failed at any stage, we assign to $\id(B)$ a new unique identifier (for consistency, we also should maintain another trie analogous to $T_{\id}$ for new identifiers but, as it will be seen, this consistency is not very important). 

The algorithm processes all $b$ blocks on one level in $O(b)$ time. It then follows from Lemma~\ref{lem:sparsity} (local sparsity) that the overall time is $O(\sum_{k=0}^\infty m/2^k) = O(m)$.

\newcommand{\fin}{\mathsf{fin}}

\textbf{Deterministic fingerprints.} 
We view any substring $t[p\dd q)$ as the sequence of level-$0$ blocks $t[p], \ldots, t[q{-}1]$ and we define its deterministic fingerprint as $\fin(0, t[p],\ldots,t[q{-}1])$, where $\fin(\ell, B_1,\ldots,B_b)$ is the following recursive procedure that assigns a fingerprint to any contiguous subsequence of level-$\ell$ blocks $B_1,\ldots,B_b$ from the hierarchy of blocks of $t$.
\begin{enumerate} 
	\item The fingerprint $\fin(\ell, B_1,\ldots,B_b)$ of the empty sequence ($b = 0$) is the empty sequence.
	\item Case $\ell = 2k$, for $k \ge 0$. Choose maximal $i \ge 0$ such that $|B_i| \le 2^k$ and $\id(B_i) = \id(B_h)$, for all $h \in [1\dd i]$, or $i = b$. Choose minimal $j \le b$ such that $|B_{j+1}| \le 2^k$ and $\id(B_{j+1}) = \id(B_h)$, for all $h \in (j\dd b]$, or $j = 0$. If $i = b$, the fingerprint is $\langle\id(B_1),b\rangle$. If $i \ne b$, the fingerprint is the sequence $\langle\id(B_i),i\rangle, \fin(2k + 1, B'_1,\ldots, B'_{b'}),$ $ \langle\id(B_{j+1}),b-j\rangle$, where the element $\langle\id(B_i),i\rangle$ is missing if $i = 0$, $\langle\id(B_{j+1}),b-j\rangle$ is missing if $j=b$, and $B'_1,\ldots,B'_{b'}$ are all level-$(2k{+}1)$ blocks that are parents of the blocks $B_{i+1},\ldots,B_{j}$.
	\item Case $\ell = 2k + 1$, for $k \ge 0$. Redefine the identifiers $\id''$ for $B_1,\ldots,B_b$ as in Section~\ref{sec:blocks}. Choose the smallest $i$ such that either $|B_{i+1}| > 2^k$ (where $0 \le i < b$) or $\infty > \id''(B_{i-2}) > \id''(B_{i-1})$ and $\id''(B_{i-1}) < \id''(B_i) < \infty$ (where $3 \le i \le b$); let $i = b$ if there is no such $i$. Symmetrically, choose the largest $j$ such that either $|B_{j}| > 2^k$ or $\infty > \id''(B_{j-2}) > \id''(B_{j-1})$ and $\id''(B_{j-1}) < \id''(B_j) < \infty$ (where $3 \le j \le b$); let $j = b$ if there is no such $j$. Note that $i \le j$. The fingerprint is $\id(B_1),\ldots,\id(B_i),\fin(2k + 2, B'_1,\ldots, B'_{b'}), \id(B_{j+1}),\ldots,\id(B_b)$, where $B'_1,\ldots,B'_{b'}$ are all level-$(2k{+}2)$ blocks that are parents of the blocks $B_{i+1},\ldots,B_{j}$.
\end{enumerate}
The indices $i$ and $j$ in the procedure $\fin$ are chosen to be at block boundaries for the blocks $B'_1,\ldots,B'_{b'}$ from level $\ell + 1$ so that all children of $B'_1,\ldots,B'_{b'}$ are exactly the blocks $B_{i+1},\ldots,B_j$. Therefore, the concatenation of all blocks used in the fingerprint of $t[p\dd q)$ is equal to $t[p\dd q)$ (Fig.~\ref{fig:fingerprints}). Due to Lemma~\ref{lem:sparsity} (local sparsity), $\fin$ has at most $O(\log m)$ levels of recursion. Since $\id''$ is at most $O(\log\log n)$ whenever it is not $\infty$, it follows from the discussion of Section~\ref{sec:blocks} that case 3 in the recursion (when $\ell = 2k + 1$) uses at most $O(\log\log n)$ blocks for the resulting fingerprint. Hence, the size of the fingerprint is $O(\log m\log\log n)$.

\begin{figure}[ht]
	\tikzset{every picture/.style={line width=0.75pt}} 
	\begin{center}
		
	\begin{tikzpicture}[x=0.75pt,y=0.75pt,yscale=-1,xscale=1,scale=0.6]
		
		\draw  [draw opacity=0][fill={rgb, 255:red, 155; green, 155; blue, 155 }  ,fill opacity=0.72 ] (71.27,273.17) -- (201.27,273.17) -- (201.27,263.07) -- (71.27,263.07) -- cycle ;
		\draw  [draw opacity=0][fill={rgb, 255:red, 155; green, 155; blue, 155 }  ,fill opacity=0.72 ] (261.27,273.17) -- (391.27,273.17) -- (391.27,263.07) -- (261.27,263.07) -- cycle ;
		\draw  [draw opacity=0][fill={rgb, 255:red, 155; green, 155; blue, 155 }  ,fill opacity=0.72 ] (91.27,263.17) -- (191.27,263.17) -- (191.27,253.17) -- (91.27,253.17) -- cycle ;
		\draw  [draw opacity=0][fill={rgb, 255:red, 155; green, 155; blue, 155 }  ,fill opacity=0.72 ] (281.27,263.17) -- (381.27,263.17) -- (381.27,253.17) -- (281.27,253.17) -- cycle ;
		\draw  [draw opacity=0][fill={rgb, 255:red, 155; green, 155; blue, 155 }  ,fill opacity=0.72 ] (101.27,253.17) -- (171.27,253.17) -- (171.27,243.17) -- (101.27,243.17) -- cycle ;
		\draw  [draw opacity=0][fill={rgb, 255:red, 155; green, 155; blue, 155 }  ,fill opacity=0.72 ] (291.27,253.17) -- (361.27,253.17) -- (361.27,243.17) -- (291.27,243.17) -- cycle ;
		\draw  [draw opacity=0][fill={rgb, 255:red, 155; green, 155; blue, 155 }  ,fill opacity=0.72 ] (121.27,243.17) -- (151.27,243.17) -- (151.27,233.17) -- (121.27,233.17) -- cycle ;
		\draw  [draw opacity=0][fill={rgb, 255:red, 155; green, 155; blue, 155 }  ,fill opacity=0.72 ] (311.27,243.17) -- (341.27,243.17) -- (341.27,233.17) -- (311.27,233.17) -- cycle ;
		\draw  [draw opacity=0][fill={rgb, 255:red, 208; green, 2; blue, 27 }  ,fill opacity=0.48 ] (341.27,243.17) -- (411.27,243.17) -- (411.27,233.17) -- (341.27,233.17) -- cycle ;
		\draw  [draw opacity=0][fill={rgb, 255:red, 208; green, 2; blue, 27 }  ,fill opacity=0.48 ] (151.27,243.17) -- (221.27,243.17) -- (221.27,233.17) -- (151.27,233.17) -- cycle ;
		\draw  [draw opacity=0][fill={rgb, 255:red, 208; green, 2; blue, 27 }  ,fill opacity=0.48 ] (51.27,243.17) -- (121.27,243.17) -- (121.27,233.17) -- (51.27,233.17) -- cycle ;
		\draw  [draw opacity=0][fill={rgb, 255:red, 208; green, 2; blue, 27 }  ,fill opacity=0.48 ] (241.27,243.17) -- (311.27,243.17) -- (311.27,233.17) -- (241.27,233.17) -- cycle ;
		\draw  [draw opacity=0][fill={rgb, 255:red, 208; green, 2; blue, 27 }  ,fill opacity=0.48 ] (241.27,253.17) -- (291.27,253.17) -- (291.27,243.17) -- (241.27,243.17) -- cycle ;
		\draw  [draw opacity=0][fill={rgb, 255:red, 208; green, 2; blue, 27 }  ,fill opacity=0.48 ] (60.97,253.17) -- (101.27,253.17) -- (101.27,243.17) -- (60.97,243.17) -- cycle ;
		\draw  [draw opacity=0][fill={rgb, 255:red, 208; green, 2; blue, 27 }  ,fill opacity=0.48 ] (360.97,253.17) -- (411.27,253.17) -- (411.27,243.17) -- (360.97,243.17) -- cycle ;
		\draw  [draw opacity=0][fill={rgb, 255:red, 208; green, 2; blue, 27 }  ,fill opacity=0.48 ] (240.97,263.17) -- (281.27,263.17) -- (281.27,253.17) -- (240.97,253.17) -- cycle ;
		\draw  [draw opacity=0][fill={rgb, 255:red, 208; green, 2; blue, 27 }  ,fill opacity=0.48 ] (71.27,263.17) -- (91.27,263.17) -- (91.27,253.17) -- (71.27,253.17) -- cycle ;
		\draw  [draw opacity=0][fill={rgb, 255:red, 208; green, 2; blue, 27 }  ,fill opacity=0.48 ] (171.27,253.17) -- (191.27,253.17) -- (191.27,243.17) -- (171.27,243.17) -- cycle ;
		\draw  [draw opacity=0][fill={rgb, 255:red, 208; green, 2; blue, 27 }  ,fill opacity=0.48 ] (381.27,263.17) -- (401.27,263.17) -- (401.27,253.17) -- (381.27,253.17) -- cycle ;
		\draw  [draw opacity=0][fill={rgb, 255:red, 208; green, 2; blue, 27 }  ,fill opacity=0.48 ] (191.27,263.17) -- (201.27,263.17) -- (201.27,253.17) -- (191.27,253.17) -- cycle ;
		\draw    (71.27,263.17) -- (71.27,273.17) ;
		\draw    (50.97,273.17) -- (411.27,273.17) ;
		\draw    (50.97,263.17) -- (411.27,263.17) ;
		\draw    (81.27,263.17) -- (81.27,273.17) ;
		\draw    (91.27,263.17) -- (91.27,273.17) ;
		\draw    (101.27,263.17) -- (101.27,273.17) ;
		\draw    (111.27,263.17) -- (111.27,273.17) ;
		\draw    (121.27,263.17) -- (121.27,273.17) ;
		\draw    (131.27,263.17) -- (131.27,273.17) ;
		\draw    (141.27,263.17) -- (141.27,273.17) ;
		\draw    (151.27,263.17) -- (151.27,273.17) ;
		\draw    (161.27,263.17) -- (161.27,273.17) ;
		\draw    (171.27,263.17) -- (171.27,273.17) ;
		\draw    (181.27,263.17) -- (181.27,273.17) ;
		\draw    (191.27,263.17) -- (191.27,273.17) ;
		\draw    (201.27,263.17) -- (201.27,273.17) ;
		\draw    (261.27,263.17) -- (261.27,273.17) ;
		\draw    (271.27,263.17) -- (271.27,273.17) ;
		\draw    (281.27,263.17) -- (281.27,273.17) ;
		\draw    (291.27,263.17) -- (291.27,273.17) ;
		\draw    (301.27,263.17) -- (301.27,273.17) ;
		\draw    (311.27,263.17) -- (311.27,273.17) ;
		\draw    (321.27,263.17) -- (321.27,273.17) ;
		\draw    (331.27,263.17) -- (331.27,273.17) ;
		\draw    (341.27,263.17) -- (341.27,273.17) ;
		\draw    (351.27,263.17) -- (351.27,273.17) ;
		\draw    (361.27,263.17) -- (361.27,273.17) ;
		\draw    (371.27,263.17) -- (371.27,273.17) ;
		\draw    (381.27,263.17) -- (381.27,273.17) ;
		\draw    (391.27,263.17) -- (391.27,273.17) ;
		\draw    (51.27,263.17) -- (51.27,273.17) ;
		\draw    (61.27,263.17) -- (61.27,273.17) ;
		\draw    (71.27,263.17) -- (71.27,273.17) ;
		\draw    (81.27,263.17) -- (81.27,273.17) ;
		\draw    (241.27,263.17) -- (241.27,273.17) ;
		\draw    (251.27,263.17) -- (251.27,273.17) ;
		\draw    (261.27,263.17) -- (261.27,273.17) ;
		\draw    (271.27,263.17) -- (271.27,273.17) ;
		\draw    (381.27,263.17) -- (381.27,273.17) ;
		\draw    (391.27,263.17) -- (391.27,273.17) ;
		\draw    (401.27,263.17) -- (401.27,273.17) ;
		\draw    (411.27,263.17) -- (411.27,273.17) ;
		\draw    (91.27,253.17) -- (91.27,263.17) ;
		\draw    (101.27,253.17) -- (101.27,263.17) ;
		\draw    (111.27,253.17) -- (111.27,263.17) ;
		\draw    (121.27,253.17) -- (121.27,263.17) ;
		\draw    (131.27,253.17) -- (131.27,263.17) ;
		\draw    (141.27,253.17) -- (141.27,263.17) ;
		\draw    (151.27,253.17) -- (151.27,263.17) ;
		\draw    (161.27,253.17) -- (161.27,263.17) ;
		\draw    (171.27,253.17) -- (171.27,263.17) ;
		\draw    (191.27,253.17) -- (191.27,263.17) ;
		\draw    (281.27,253.17) -- (281.27,263.17) ;
		\draw    (291.27,253.17) -- (291.27,263.17) ;
		\draw    (301.27,253.17) -- (301.27,263.17) ;
		\draw    (311.27,253.17) -- (311.27,263.17) ;
		\draw    (321.27,253.17) -- (321.27,263.17) ;
		\draw    (331.27,253.17) -- (331.27,263.17) ;
		\draw    (341.27,253.17) -- (341.27,263.17) ;
		\draw    (351.27,253.17) -- (351.27,263.17) ;
		\draw    (361.27,253.17) -- (361.27,263.17) ;
		\draw    (381.27,253.17) -- (381.27,263.17) ;
		\draw    (71.27,253.17) -- (201.27,253.17) ;
		\draw    (241.27,253.17) -- (411.27,253.17) ;
		\draw    (71.27,253.17) -- (71.27,263.17) ;
		\draw    (241.27,253.17) -- (241.27,263.17) ;
		\draw    (401.27,253.17) -- (401.27,263.17) ;
		\draw    (211.27,263.17) -- (211.27,273.17) ;
		\draw    (201.27,253.17) -- (201.27,263.17) ;
		\draw    (90.97,243.17) -- (191.27,243.17) ;
		\draw    (280.97,243.17) -- (411.27,243.17) ;
		\draw    (241.27,243.17) -- (241.27,253.17) ;
		\draw    (291.27,243.17) -- (291.27,253.17) ;
		\draw    (311.27,243.17) -- (311.27,253.17) ;
		\draw    (341.27,243.17) -- (341.27,253.17) ;
		\draw    (361.27,243.17) -- (361.27,253.17) ;
		\draw    (411.27,243.17) -- (411.27,253.17) ;
		\draw    (191.27,243.17) -- (191.27,253.17) ;
		\draw    (171.27,243.17) -- (171.27,253.17) ;
		\draw    (151.27,243.17) -- (151.27,253.17) ;
		\draw    (121.27,243.17) -- (121.27,253.17) ;
		\draw    (101.27,243.17) -- (101.27,253.17) ;
		\draw    (61.27,243.17) -- (61.27,253.17) ;
		\draw    (51.27,253.17) -- (161.27,253.17) ;
		\draw    (60.97,243.17) -- (161.27,243.17) ;
		\draw    (221.27,243.17) -- (341.27,243.17) ;
		\draw    (311.27,233.17) -- (311.27,243.17) ;
		\draw    (341.27,233.17) -- (341.27,243.17) ;
		\draw    (151.27,233.17) -- (151.27,243.17) ;
		\draw    (121.27,233.17) -- (121.27,243.17) ;
		\draw    (51.27,233.17) -- (51.27,243.17) ;
		\draw    (241.27,233.17) -- (241.27,243.17) ;
		\draw    (411.27,233.17) -- (411.27,243.17) ;
		\draw    (51.27,243.17) -- (221.27,243.17) ;
		\draw    (110.97,233.17) -- (221.27,233.17) ;
		\draw    (50.97,233.17) -- (151.27,233.17) ;
		\draw    (221.27,233.17) -- (341.27,233.17) ;
		\draw    (310.97,233.17) -- (411.27,233.17) ;
		\draw    (221.27,233.17) -- (221.27,243.17) ;
		\draw    (241.27,253.17) -- (241.27,263.17) ;
		\draw    (231.27,263.17) -- (231.27,273.17) ;
		\draw    (221.27,263.17) -- (221.27,273.17) ;
		\draw    (71.27,273.17) .. controls (73.7,286.07) and (196.7,285.07) .. (201.27,273.17) ;
		\draw    (201.27,253.17) -- (240.97,253.17) ;
		\draw    (221.27,253.17) -- (221.27,263.17) ;
		\draw    (221.27,243.17) -- (221.27,253.17) ;
		\draw    (61.27,253.17) -- (61.27,263.17) ;
		\draw    (51.27,253.17) -- (51.27,263.17) ;
		\draw    (51.27,243.17) -- (51.27,253.17) ;
		\draw    (411.27,253.17) -- (411.27,263.17) ;
		\draw  [draw opacity=0][fill={rgb, 255:red, 208; green, 2; blue, 27 }  ,fill opacity=0.48 ] (191.27,253.17) -- (221.27,253.17) -- (221.27,243.17) -- (191.27,243.17) -- cycle ;
		\draw  [draw opacity=0][fill={rgb, 255:red, 208; green, 2; blue, 27 }  ,fill opacity=0.48 ] (51.27,233.17) -- (411.27,233.17) -- (411.27,213.17) -- (51.27,213.17) -- cycle ;
		\draw    (411.27,223.17) -- (411.27,233.17) ;
		\draw    (311.27,223.17) -- (311.27,233.17) ;
		\draw    (151.27,223.17) -- (151.27,233.17) ;
		\draw    (51.27,223.17) -- (51.27,233.17) ;
		\draw    (51.27,213.17) -- (51.27,223.17) ;
		\draw    (411.27,213.17) -- (411.27,223.17) ;
		\draw    (50.97,223.17) -- (411.27,223.17) ;
		\draw    (50.97,213.17) -- (411.27,213.17) ;
		\draw  [draw opacity=0][fill={rgb, 255:red, 74; green, 144; blue, 226 }  ,fill opacity=0.8 ] (71.27,273.17) -- (91.27,273.17) -- (91.27,263.17) -- (71.27,263.17) -- cycle ;
		\draw  [draw opacity=0][fill={rgb, 255:red, 74; green, 144; blue, 226 }  ,fill opacity=0.8 ] (191.27,273.17) -- (201.27,273.17) -- (201.27,263.17) -- (191.27,263.17) -- cycle ;
		\draw  [draw opacity=0][fill={rgb, 255:red, 74; green, 144; blue, 226 }  ,fill opacity=0.8 ] (381.27,273.17) -- (391.27,273.17) -- (391.27,263.17) -- (381.27,263.17) -- cycle ;
		\draw  [draw opacity=0][fill={rgb, 255:red, 74; green, 144; blue, 226 }  ,fill opacity=0.8 ] (261.27,273.17) -- (281.27,273.17) -- (281.27,263.17) -- (261.27,263.17) -- cycle ;
		\draw  [draw opacity=0][fill={rgb, 255:red, 74; green, 144; blue, 226 }  ,fill opacity=0.8 ] (91.27,263.17) -- (101.27,263.17) -- (101.27,253.17) -- (91.27,253.17) -- cycle ;
		\draw  [draw opacity=0][fill={rgb, 255:red, 74; green, 144; blue, 226 }  ,fill opacity=0.8 ] (281.27,263.17) -- (291.27,263.17) -- (291.27,253.17) -- (281.27,253.17) -- cycle ;
		\draw  [draw opacity=0][fill={rgb, 255:red, 74; green, 144; blue, 226 }  ,fill opacity=0.8 ] (171.27,263.17) -- (191.27,263.17) -- (191.27,253.17) -- (171.27,253.17) -- cycle ;
		\draw  [draw opacity=0][fill={rgb, 255:red, 74; green, 144; blue, 226 }  ,fill opacity=0.8 ] (361.27,263.17) -- (381.27,263.17) -- (381.27,253.17) -- (361.27,253.17) -- cycle ;
		\draw  [draw opacity=0][fill={rgb, 255:red, 74; green, 144; blue, 226 }  ,fill opacity=0.8 ] (341.27,253.17) -- (361.27,253.17) -- (361.27,243.17) -- (341.27,243.17) -- cycle ;
		\draw  [draw opacity=0][fill={rgb, 255:red, 74; green, 144; blue, 226 }  ,fill opacity=0.8 ] (291.27,253.17) -- (311.27,253.17) -- (311.27,243.17) -- (291.27,243.17) -- cycle ;
		\draw  [draw opacity=0][fill={rgb, 255:red, 74; green, 144; blue, 226 }  ,fill opacity=0.8 ] (151.27,253.17) -- (171.27,253.17) -- (171.27,243.17) -- (151.27,243.17) -- cycle ;
		\draw  [draw opacity=0][fill={rgb, 255:red, 74; green, 144; blue, 226 }  ,fill opacity=0.8 ] (101.27,253.17) -- (121.27,253.17) -- (121.27,243.17) -- (101.27,243.17) -- cycle ;
		\draw  [draw opacity=0][fill={rgb, 255:red, 74; green, 144; blue, 226 }  ,fill opacity=0.8 ] (121.27,243.17) -- (151.27,243.17) -- (151.27,233.17) -- (121.27,233.17) -- cycle ;
		\draw  [draw opacity=0][fill={rgb, 255:red, 74; green, 144; blue, 226 }  ,fill opacity=0.8 ] (311.27,243.17) -- (341.27,243.17) -- (341.27,233.17) -- (311.27,233.17) -- cycle ;
		\draw    (261.27,273.17) .. controls (263.7,286.07) and (386.7,285.07) .. (391.27,273.17) ;
		
		\draw (112.27,284) node [anchor=north west][inner sep=0.75pt]    {$s[ p\dd q]$};
		\draw (301.27,283) node [anchor=north west][inner sep=0.75pt]    {$s[ p'\dd q']$};

	\end{tikzpicture}
	
	\end{center}
	\vskip-6mm
	\caption{Here $s[p\dd q] = s[p'\dd q']$. The blocks depicted as blue form the fingerprints of these strings; the red blocks intersect the strings but can be ``inconsistent'' and are not in the fingerprints.}
	\label{fig:fingerprints}
\end{figure}

We store in each block $B$ in the hierarchy built for $t$ pointers to children/parent of $B$ and to the first preceding block on the same level with different $\id(B)$. With this, one can retrieve the fingerprint for any substring of $t$ in time $O(\log m\log\log n)$.

Analogously, we define the deterministic fingerprint for any substring of $s$.

\begin{lemma}
	Two substrings of $s$ or $t$ coincide iff their deterministic fingerprints coincide.\label{lem:fingerprint}
\end{lemma}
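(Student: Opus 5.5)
The "if" direction comes first. The concatenation of the blocks listed in a fingerprint equals the substring, as noted right after the definition of $\fin$. By Lemma~\ref{lem:uniformity}, blocks with equal identifiers are equal as strings, and an entry $\langle\id(B),r\rangle$ stands for $r$ copies of $B$. So two equal fingerprints expand to the same string. For substrings of $t$ we also need that identifiers assigned during the parsing of $t$ are consistent. Identifiers looked up in $T_{\id}$ come from $J$. Fresh identifiers are, for consistency, kept in the auxiliary trie. Hence the analogue of Lemma~\ref{lem:uniformity} holds in the joint hierarchy of $s$ and $t$ as well.

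The "only if" direction is the main work. Let $u = x[p\dd q)$ and $u' = y[p'\dd q')$ with $u = u'$, where $x,y \in \{s,t\}$. I would prove by induction on $\ell$ an invariant about the calls $\fin(\ell, B_1,\ldots,B_b)$ and $\fin(\ell, B'_1,\ldots,B'_{b'})$ made on the two sides. The invariant has two parts. First, the outputs produced before level $\ell$ coincide. Second, $b = b'$, $\id(B_h) = \id(B'_h)$ and $|B_h| = |B'_h|$ for all $h$, and the two block sequences occupy the same offsets inside $u$ and $u'$. At level $0$ this is trivial. In the inductive step the indices $i,j$ are computed from the identifiers and lengths of the current sequence alone: run detection in case 2, and $\id''$, local minima and the thresholds $|B|>2^k$ in case 3. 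So both sides choose the same $i,j$ and emit the same boundary elements. It remains to show that the parents of $B_{i+1},\ldots,B_j$ on level $\ell+1$ have the same identifiers on both sides, with matching lengths.

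This reduces to a context-independence claim. The level-$(\ell+1)$ parents of the inner blocks $B_{i+1},\ldots,B_j$ are determined solely by the sequence $B_1,\ldots,B_b$, not by blocks outside $u$. In case 2, the block $B_i$ (resp.\ $B_{j+1}$) together with everything before it (resp.\ after it) is excluded. So every run meeting $B_{i+1}\ldots B_j$ is maximal within $u$ and cannot be extended by blocks outside $u$ in the whole string. This holds because it is bounded by a differing identifier or by a long block inside the sequence. In case 3, $\id''(B_h)$ depends only on $B_{h-2},B_{h-1},B_h$. The marks from conditions (a) and (b) at positions $i$ through $j$ therefore depend only on blocks within the sequence. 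Position $i$ is chosen as the first point where a mark is certified from inside, and position $j$ likewise at the other end. Hence the ranges between consecutive marks in $[i\dd j]$ coincide on both sides.

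Given equal ranges of identifiers, the new identifiers coincide. In case 2 they are $\langle\id,r\rangle$. In case 3 they come from the "same sequence gives the same identifier" rule in $H$, and for $t$ from the lookup in $T_{\id}$ or the auxiliary trie. This gives the invariant at level $\ell+1$. The recursion terminates when the sequence is empty, so equal substrings yield identical fingerprints.

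I expect the main obstacle to be the boundary bookkeeping in case 3. One must check that the chosen $i$ and $j$ really are block boundaries for level $2k+2$ in the ambient string, whichever of $s$ or $t$ it is. The delicate points are the $\infty$ values of $\id''$ at the first two blocks and the condition $|B_{i+1}|>2^k$, which forces a mark through (a). This is the place to verify carefully that a mark certified from inside the window coincides with the global marking.
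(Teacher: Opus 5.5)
Your proposal is correct and follows essentially the same route as the paper. The ``if'' direction uses the fact that equal identifiers mean equal strings; the ``only if'' direction inducts over the recursive calls of $\fin$, relying on the $\langle\id,r\rangle$ run identifiers in case 2, and in case 3 on the window-determined local minima of $\id''$ together with the rule that the same sequence gets the same identifier (via $T_{\id}$). The boundary nuance you single out, the forced values $\id''(B_1)=\id''(B_2)=\infty$, is exactly the ``simple case analysis'' the paper leaves implicit; it is resolved because the chosen $i<b$ is either followed by a long block $B_{i+1}$, which resets $\id''$, or satisfies $i\ge 5$.
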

\begin{proof}
	By construction, any two blocks in both hierarchies (for $s$ and for $t$) that have equal identifiers are equal as strings. Therefore, two substrings with equal fingerprints must be equal. For converse, suppose that two substrings $s[p'\dd q')$ and $t[p\dd q)$ are equal. It suffices to show that the procedure $\fin$ computing the fingerprints for $s[p'\dd q')$ and $t[p\dd q)$ does the same calculations. Suppose that, on a generic step, a call to $\fin(\ell, B_1,\ldots,B_b)$ occurs during the computation for $s[p'\dd q')$ and the same call $\fin(\ell, B_1,\ldots,B_b)$ for $t[p\dd q)$; initially, for $\ell = 0$, $\fin(0, s[p'],\ldots,s[q'{-}1])$ and $\fin(0, t[p],\ldots,t[q{-}1])$ are called on the same sequences of level-$0$ blocks. Cases 2 or 3 in the description of $\fin$ obviously produce the same sequences of identifiers surrounding a recursive call to $\fin$; the recursive call receives as its arguments a certain new sequence of blocks $B'_1,\ldots,B'_{b'}$. In case 2 this sequence $B'_1,\ldots,B'_{b'}$ is the same in both calls since the blocks $B'_1,\ldots,B'_{b'}$ either are copies of blocks from $B_1,\ldots,B_b$ or are runs of blocks to which we assigned the same identifiers of the form $\langle\id(B),r\rangle$ in both hierarchies of blocks. In case 3 the argument is more complicated but still straightforward since the blocks $B'_1,\ldots,B'_{b'}$ were produced from local minima of the function $\id''$ and we strategically skipped some blocks from the begin and end of the sequence $B_1,\ldots,B_b$ to make the local minima consistent, and, further, we rely on the fact that blocks created from the same sequences of blocks in both hierarchies are assigned with the same identifiers (recall that the trie $T_{\id}$ is used to guarantee this). The values of $\id''$ for $B_1,\ldots,B_b$ during the computation of $\fin$ are exactly the same as they were in the definition of both hierarchies of blocks (for $s$ and for $t$) except, possibly, for $B_1$ and $B_2$, as $\id''(B_1) = \id''(B_2) = \infty$ when $\id''$ was computed in $\fin$. But a simple case analysis shows that this nuance does not hurt the argument.
\end{proof}

By Lemma~\ref{lem:traversal}, the hierarchy of blocks $H$ for $s$ can be emulated using $J$, which suffices to get the fingerprint of any substring of $s$. But this approach is inefficient as it assembles the fingerprints top down, not bottom up as in the definition. We speed up the assembling by adding the following data structures.
Let $B$ be a block of $J$ (intermediate or not) with $|B| > 1$. The block $B$ must have been produced from a run/range/subrange of blocks $B_i,\ldots,B_j$ with $i < j$. We store in $B$ a link to a block in $J$ with identifier $\id(B_i)$, which exists in $J$ by Lemma~\ref{lem:leftmost}. The links naturally form a forest $A_{L}$ on all blocks. Define a symmetric forest $A_{R}$ for rightmost blocks (i.e., where the link from $B$ refers to $\id(B_j)$). We equip $A_{L}$ and $A_{R}$ with the weighted ancestor structure \cite{GLN,KopelowitzLewenstein} that, for any $B$ and any threshold $w$, can find the farthest ancestor $\hat{B}$ of $B$ in $A_L$ (or $A_R$) with $|\hat{B}| \ge w$ in $O(\log\log n)$ time. 
Using $A_L$ and $A_R$, we obtain the following result; its tedious proof is very technical and \ifdefined\fullpaper it was moved to  Appendix~\ref{appx:search}.\else it can be found in the full version of the paper  \cite[Appendix~C]{KosolobovFull}.\fi

\begin{restatable}[fingerprints]{lemma}{fingerprintsubstr}
	Given a substring $s[p\dd q]$ and the lowest block $B$ in the hierarchy of blocks $H$ for $s$ with $\sbeg(B) \le p < q \le \send(B)$, if $J$ retains $B$ and $B$ has children in $J$, then one can compute the fingerprint of $s[p\dd q]$ in $O(\log m \cdot(\log\log n)^2)$ time, where $m = q-p$.\label{lem:fingeprintsubstr}
\end{restatable}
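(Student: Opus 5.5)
The plan is to emulate the bottom-up recursion of $\fin$ top-down from $B$, producing only the $O(\log m)$ ``boundary'' pieces on each level. The key observation is that the fingerprint of $s[p\dd q]$ is determined by its two boundaries. For each level $\ell$, the fingerprint contributes a left chunk and a right chunk: $O(1)$ run-encoded elements on even levels, and $O(\log\log n)$ identifiers on odd levels. These chunks consist of the level-$\ell$ blocks of $H$ that lie inside $[p\dd q]$ near its left (resp. right) end, before the first consistent boundary of level $\ell+1$. I will therefore compute, for each level $\ell$ from $\ell(B)$ downwards, the level-$\ell$ blocks of $H$ intersecting a small neighbourhood of $p$, and symmetrically of $q$. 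From these I read off exactly what $\fin$ emits, and I stop once the level is so low that $2^k$ is below $m$ by the sparsity bound; only $O(\log m)$ levels remain relevant.

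\textbf{Step 1: reaching level $\approx\log m$.} Since $B$ is the lowest block containing $[p\dd q]$, its children split $[p\dd q]$. By Lemma~\ref{lem:sparsity}, the levels below $B$ that matter for $\fin$ number $O(\log m)$ only once we are within blocks of length $\Theta(m)$. The difficulty is that $B$ itself may be very long (for example a run block, or a block far above), and naive descent through $J$ would cost $O(\text{height})$. To jump directly, I use the forests $A_L$ and $A_R$ with weighted ancestor queries. Descending from $B$ along the path to the block containing $p$, the chains of ``leftmost child'' links are exactly paths in $A_L$, and chains of rightmost children are paths in $A_R$. A query for the farthest ancestor of length $\ge w$ then skips a whole chain in $O(\log\log n)$ time. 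Pointers from childless blocks to their copies (Lemma~\ref{lem:leftmost}) and from intermediate blocks via $\mathsf{off}$ and $\mathsf{r}$ (Lemma~\ref{lem:traversal}) let us re-enter $J$ at each switch.

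\textbf{Step 2: one level per $O((\log\log n)^2)$ time.} Once the current block around $p$ has length $O(2^k)$ relative to $m$, we proceed level by level. At each level we expand the $O(\log\log n)$ children around the boundary, using Lemma~\ref{lem:traversal} in $O(1)$ per child, with intermediate blocks resolved through their pointers. Where a child is not the first or last child of its parent, we may need another weighted ancestor query costing $O(\log\log n)$ to locate it. In addition, we compute $\id'$ and $\id''$ on a window of $O(\log\log n)$ blocks to locate the local minima that determine the indices $i$ and $j$ in case~3 of $\fin$. This gives $O((\log\log n)^2)$ per level and $O(\log m(\log\log n)^2)$ overall. Run blocks (case~2) are handled by arithmetic on $|B|/|B'|$ without expansion.

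\textbf{Main obstacle.} The hardest part will be Step~1 interleaved with Step~2: showing that every descent from $B$ toward $p$ decomposes into $O(\log m)$ segments. Each segment must either be a single $A_L$/$A_R$ weighted ancestor jump or a bounded expansion, even when the path passes through pointer-redirected blocks and intermediate blocks whose children partially overlap $[p\dd q]$. I would argue this by charging each non-extremal child step to a level at which the block length halves relative to $m$, via Lemma~\ref{lem:sparsity}. Extremal steps are absorbed into a single weighted ancestor query on $A_L$ or $A_R$. Finally, I would check the few boundary nuances, such as $\id''(B_1)=\id''(B_2)=\infty$, exactly as in Lemma~\ref{lem:fingerprint}.
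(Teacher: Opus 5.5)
There is a genuine gap, and it is the step you flag yourself as the main obstacle: bounding the work to get from $B$ down to the levels where the fingerprint lives. The level of $B$ can be $\Theta(\log n)$ even when $m$ is tiny, and your sketched resolution does not work.
\begin{itemize}
\item A weighted-ancestor query on $A_L$ or $A_R$ skips only a maximal chain of one kind. If the path to $p$ alternates leftmost-child and rightmost-child steps, you pay one query per alternation, and nothing in your argument bounds the number of alternations.
\item A leftmost-child step on the path to $p$ is not harmless. Below $B$, its parent has a further child lying inside $[p\dd q]$, exactly as with a middle-child step, so it cannot be treated as ``extremal'' and skipped.
\item Charging non-extremal steps to ``a level where the block length halves relative to $m$'' has no basis. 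Above the top fingerprint level $\ell_{\max}$, the lengths of the blocks on the path to $p$ are unrelated to $m$: a huge block containing $p$ can have a short sibling inside $[p\dd q]$. Lemma~\ref{lem:sparsity} bounds numbers of boundaries in intervals, not lengths of individual blocks.
\end{itemize}

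The missing idea is to charge to the leaves of the explored tree. The paper runs two synchronized DFS traversals from $B$ that emit $\mathsf{fin}$ bottom-up: one descends to $s[p]$ and proceeds rightwards, the other descends to $s[q]$ and proceeds leftwards. The union $\hat T$ of the explored trees has as leaves only the fingerprint blocks plus $O(1)$ look-back blocks per level, i.e.\ $O(\log m\log\log n)$ leaves. Hence it has as many branching nodes, each expanded in $O(\log\log n)$ time by Lemma~\ref{lem:traversal}. The explored blocks on every level form a gap-free sequence, and the two traversals meet above $\ell_{\max}$ (this uses the minimality of $B$). Therefore every step into a non-rightmost child on the path to $p$, or into a non-leftmost child on the path to $q$, is such a branching node. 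What remains non-branching above $\ell_{\max}$ are only the outward chains: rightmost children toward $p$ and leftmost children toward $q$. Their blocks start before $p$ (resp.\ end after $q$) and never enter the fingerprint, so each maximal chain costs one query on $A_R$ (resp.\ $A_L$). Copy chains cost $O(1)$ via the stored $\ell(\cdot)$, and below $\ell_{\max}$ there are $O(1)$ non-branching non-copy nodes per level.

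Your top-down organisation has two further problems.
\begin{itemize}
\item The position $x_\ell$ where $\mathsf{fin}$ starts processing level $\ell$ is fixed only by the chunks emitted on all lower levels. A long prefix run on a low level pushes $x_\ell$ arbitrarily far from $p$, so a ``small neighbourhood of $p$'' must be measured in blocks. A window guaranteed to contain every level-$\ell$ block between $p$ and the yet-unknown $x_\ell$ can pick up one extra block per earlier long run. So $O(\log\log n)$ blocks per level, and with it your $O((\log\log n)^2)$ per level, is not automatic.
\item Step~1 follows only the path to $p$. The blocks just to the right of that path at the level where Step~2 starts can hang off a high ancestor in a sibling subtree, which requires further descents you do not account for.
\end{itemize}
The bottom-up DFS avoids both issues: it knows $x_\ell$ before touching level $\ell$, and the branching-node count pays for all descents.
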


\textbf{Indexing structures.}
We use a well-known scheme for pattern matching. The following description is sketchy in parts as it is mostly the same as in \cite{ChristiansenEtAl,ClaudeNavarro}. Given $s[p\dd p{+}|t|) = t$, let $B$ be the lowest block in $J$ such that $\sbeg(B) \le p < p + |t| - 1 \le \send(B)$. If $B$ has children in $J$, we call the occurrence \emph{primary}; otherwise, \emph{secondary}. We first find primary occurrences. 

Let $\lvec{T}$ and $T$ be two compacted tries. Consider each block $B$ in $J$. Let $B$ be a non-run block with children $B_i,\ldots,B_j$ in $J$. For each $h \in [i\dd j)$, we store the string $B_{h+1}$ in $T$ and $T$ has an explicit node $x$ such that $\str(x) = B_{h+1}$; we store $\lrange{B_i\cdots B_{h}}$ in $\lvec{T}$ with an explicit node $y$ such that $\str(y) = \lrange{B_i\cdots B_{h}}$. Thus, we produce pairs of nodes $(x,y)$ associated with $B$ and indices $h$. Let $B$ be a run block whose only child in $J$ is $B_1$ and $r = |B| / |B_1|$. We store the string $B_1$ in $T$ and $\lvec{B}_1{}^{r-1}$ in $\lvec{T}$, thus producing a pair of nodes $(x,y)$ associated with $B$. The edges in $T$ and $\lvec{T}$ store only lengths of edge labels.  By Theorem~\ref{thm:space}, the size of  $\lvec{T}$ and $T$ and the number of produced pairs $(x,y)$ is $O(\delta\log\frac{n}{\delta})$. Each node $z$ in $T$ stores a pointer to a block $B$ in $J$ such that $\str(z)$ is a prefix of $B$; each node $z$ in $\lvec{T}$ stores an index $h$ and a pointer to $B$ in $J$ whose children are $B_i,\ldots,B_j$ such that $\str(z)$ is a prefix of $\lrange{B_i\cdots B_h}$.

Define an order on nodes of $T$ and $\lvec{T}$: $x < x'$ iff $\str(x) < \str(x')$. Fix a constant $\epsilon > 0$. We store a range reporting structure $R$ \cite{ChanLarsenPatrascu}: it contains all produced pairs of nodes $(x,y)$ in $O(\delta\log\frac{n}{\delta})$ space (the big-O hides factor $\frac{1}{\epsilon}$) and, given any nodes $a,b$ of $T$ and $c,d$ of $\lvec{T}$, we can report all $N$ pairs $(x,y)$ such that $a \le x \le b$ and $c \le y \le d$ in $O((1 + N)\log^\epsilon n)$ time.

We turn $T$ into a z-fast trie by augmenting it with a hash table that, for each node $x$, maps the fingerprint of a certain prefix of $\str(x)$ to $x$. We use our deterministic fingerprints. Since each fingerprint takes $\omega(1)$ space, the hash table is organized as a compacted trie $T_f$ containing the fingerprints, treated as sequences of identifiers of length $O(\log m\log\log n)$; the edges of $T_f$ do not store their edge labels, only lengths (as in the trie $T_{\id}$). Let $t[q\dd m)$ be a string that can be read by descending from the root of $T$, assuming that the edge labels are somehow accessible by an ``oracle''. The z-fast trie can read it in $O(\log^2 m \log\log n)$ time by querying the hash on $O(\log m)$ fingerprints of prefixes of $t[q\dd m)$. If $t[q\dd m)$ cannot be read from the root of $T$, the z-fast trie returns an arbitrary node $x$ of $T$. The found node $x$ contains a pointer to a block $B$ such that $\str(x)$ is a prefix of $B$. Using Lemma~\ref{lem:fingeprintsubstr} (fingerprints) and relying on Lemma~\ref{lem:fingerprint}, we check in $O(\log m\cdot (\log\log n)^2)$ time whether the prefix of length $m - q$ of the string $B$ is equal to $t[q\dd m)$ and, thus, we verify whether we indeed performed the descending correctly. We equip $\lvec{T}$ with the same z-fast trie structure.

\textbf{\boldmath Searching for $t$.}
Let $s[p\dd p{+}|t|) = t$ be a primary occurrence. Let $B$ be the lowest block in $J$ such that $\sbeg(B) \le p < p + |t| - 1 \le \send(B)$ and let $B_i,\ldots,B_j$ be all children of $B$ in $J$. Choose the largest $h \in [i\dd j)$ such that $\send(B_h) < p + |t| - 1$. If $B$ is not a run block, there is $q \in [0\dd m)$ such that $\sbeg(B_{h+1}) = p + q$ and $T$ contains the string $t[q\dd m)$ and $\lvec{T}$ contains the string $\lrange{t[0\dd q)}$; the range reporting structure $R$ contains a pair of nodes $(x,y)$ associated with $B$ and index $h$. If $B$ is a run block, there is $q \in [0\dd m)$ such that $m - q \le |B_1|$, $\sbeg(B) + c|B_1| = p + q$, for some $c \ge 1$, and $T$ contains $t[q\dd m)$ and $\lvec{T}$ contains $\lrange{t[0\dd q)}$; there is again an associated pair of nodes $(x,y)$. Given the position $q$, using the fingerprints of prefixes of $\lrange{t[0\dd q)}$ and $t[q\dd m)$ in the z-fast tries, one can find in $O(\log^2 m\cdot(\log\log n)^2)$ time the nodes $a,b$ in $T$ and $c,d$ in $\lvec{T}$ on which the range reporting structure $R$ will report all pairs $(x,y)$ that fit the above description and each such reported pair will be associated with a separate primary occurrence (note that a run block $B$ will contain a group of primary occurrences at distances that are multiples of $|B_1|$ but the algorithm finds only the rightmost occurrence from each group; other group occurrences are easy to restore using periodicity).

To identify candidate positions $q \in [0\dd m)$ for which we perform the described search, we compute the fingerprint $f$ of $t$, which is a sequence of $O(\log m\log\log n)$ block identifiers. Let $p_1,\ldots,p_b$ be the starting positions of these blocks from $f$ such that $0 = p_1 < \cdots < p_b < p_{b+1} = m$, i.e., for $i \in [1\dd b]$, the $i$th block from $f$ corresponds to $t[p_i\dd p_{i+1})$. By Lemma~\ref{lem:fingerprint}, each primary occurrence $s[p\dd p{+}|t|)$ of $t$ has the same fingerprint and its respective blocks correspond to the substrings $s[p{+}p_i\dd p{+}p_{i+1})$, for $i \in [1\dd b]$. Let $B$ be the lowest block in $J$ such that $\sbeg(B) \le p < p + |t| - 1 \le \send(B)$ and let $B_i,\ldots,B_j$ be all children of $B$ in the hierarchy of blocks $H$ for $s$ (not in $J$!). Any substring $s[p + p_i \dd p + p_{i+1})$, for $i \in [1\dd b]$, that corresponds to a non-run block in the fingerprint must form a block in $H$, which must be a descendant of $B$ in $H$. Any substring $s[p + p_i \dd p + p_{i+1})$ that corresponds to a run block with an identifier $\langle\id(B'_i),r\rangle$, for some block $B'_i$ and $r > 1$, must form a sequence of blocks $s[p + p_i + c|B'_i|\dd p + p_i + (c + 1)|B'_i|)$ in $H$ with $c \in [0\dd (p_{i+1} - p_i) / |B'_i|)$, all of which must be descendants of $B$ in $H$. These observations imply that any block boundary between $B_i,\ldots,B_j$ that is inside the range $[p\dd p{+}|t|)$ is equal either to $p + p_i$, for some $i \in [1\dd b]$, or to $p + p_{i+1} - |B'_i|$, where $t[p_i\dd p_{i+1})$ corresponds to a run block with an identifier $\langle B'_i, r\rangle$. We obtain $O(\log m\log\log n)$ candidate positions for $q$: all these $p_i$ and $p_{i+1} - |B'_i|$.

\textbf{Secondary occurrences.}
To find secondary occurrences, we reverse all pointers in the tree $J$ and traverse the reversed pointers starting from each occurrence (including newly added secondary occurrences). 
More precisely, any non-intermediate leaf block $B$ in $J$ with $|B|> 1$ contains a pointer to another block $\hat{B}$ in $J$ such that $\id(\hat{B}) = \id(B)$; any intermediate block $B$ in $J$ contains two numbers $h = \mathsf{off}(B)\ge 0$ and $r = \mathsf{r}(B) > 1$ and a pointer to a block $\hat{B}$ from a level $2k + 2$  such that $\id(B)$ was created from identifiers $\id(B_{i+h}),\ldots,\id(B_{i+h+r-1})$, where  $B_{i},\ldots,B_{j}$ is the sequence of level-$(2k + 1)$ blocks from which $\hat{B}$ was created and $i + h + r - 1 \le j$. The block $\hat{B}$ has children in $J$ in both cases. Given a leaf $B$, secondary occurrences of $t$ corresponding to $B$ are exactly all occurrences of $t$ from the substring $s[\sbeg(B)\dd \send(B)]$. They may exist if the block $\hat{B}$ referred by $B$ contains occurrences of $t$. 

We construct on the tree $J$ the marked ancestor data structure \cite{AlstrupHusfeldtRauhe} that allows us to mark some nodes and to search the nearest marked ancestor of any node in $O(1)$ time. Then, we reverse all pointers: for each block $\hat{B}$, we store in $\hat{B}$ a link to each non-intermediate block $B$ that has a pointer to $\hat{B}$ and we mark $\hat{B}$ in this case; further, if $\hat{B}$ is from a level $2k + 2$ and $\hat{B}$ was created from level-$(2k + 1)$ blocks $B_{i},\ldots,B_{j}$, then we store in $\hat{B}$ a link to each intermediate block $B$ that has a pointer to $\hat{B}$ and we store the range $[i + h \dd i + h + r)$ corresponding to $B$ in an augmented binary search tree $R_{\hat{B}}$ associated with $\hat{B}$; for any $d \in [i\dd j]$, we mark the block $B_d$ if $B_d$ is a non-intermediate child of $\hat{B}$ in $J$ and $d \in [i + h \dd i + h + r)$ for a range $[i + h\dd i + h + r)$ from $R_{\hat{B}}$. Note that the tree $R_{\hat{B}}$ contains at most $O((\log\log n)^2)$ distinct ranges but one range may correspond to many blocks $B$; thus, range searching operations on $R_{\hat{B}}$ can be performed in time $O(\log\log\log n) \le O(\log^\epsilon n)$.

With this machinery, the search for secondary occurrences is just a traversal of the reversed pointers. We maintain a set of occurrences of $t$, which initially contains only primary occurrences, and each occurrence $s[p\dd p{+}|t|)$ is associated with a lowest block $\hat{B}$ in $J$ such that $\sbeg(\hat{B}) \le p < p + |t| - 1 \le \send(\hat{B})$. Given such an occurrence  $s[p\dd p{+}|t|)$ and the associated block $\hat{B}$, we use the tree $R_{\hat{B}}$ to check whether $s[p\dd p{+}|t|)$ lies inside some blocks $B_{i+h}\cdots B_{i+h+r-1}$ where $[i + h \dd i + h + r)$ is a range stored in $R_{\hat{B}}$ and $B_i,\ldots,B_j$ are blocks from which $\hat{B}$ was created in the hierarchy of blocks $H$ for $s$. Each link associated with such range induces a new unique secondary occurrence of $t$ that is added to the set of occurrences. We then loop through all links to all non-intermediate blocks $B$ that have pointers to $\hat{B}$ and each such link analogously induces a new secondary occurrence. Then, we find the nearest marked ancestor $B'$ of $B$ (if any) and continue the same process recursively with $B'$ in place of $B$, inducing new unique secondary occurrences. Each added secondary occurrence in the set is processed in the same way.
The time for reporting all $\mathrm{occ}$ occurrences is $O(\mathrm{occ}\cdot\log^\epsilon n)$.

\textbf{Time.}
We obtain $u = O(\log m \log\log n)$ candidate positions $q$ at which we perform z-fast trie searches and range reporting: $O(\log^2 m\cdot (\log\log n)^2 + (1 + \mathrm{occ}_q)\log^\epsilon n)$ time per $q$, where $\mathrm{occ}_q$ is the number of primary occurrences found for one $q$. We have $\mathrm{occ} \ge \sum_q \mathrm{occ}_q$. Thus, the total time is $O(m + u(\log^2 m\cdot (\log\log n)^2  + \log^\epsilon n) + \mathrm{occ}\cdot\log^\epsilon n) = O(m + \log^3 m\cdot (\log\log n)^3 + \log m \cdot\log\log n \cdot \log^\epsilon n + \mathrm{occ}\cdot\log^\epsilon n)$. Fix a constant $\epsilon' > \epsilon$. When $m \le \log^6 n$, we estimate $\log m = O(\log\log n)$ and, thus, the sum $\log^3 m\cdot (\log\log n)^3 + \log m \cdot\log\log n\cdot \log^\epsilon n$ is bounded by $O(\log^{\epsilon'} n)$ and, after renaming $\epsilon'$ to $\epsilon$, we obtain time $O(m + (1 + \mathrm{occ})\log^\epsilon n)$. When $m > \log^6 n$, we bound this sum by $O(m)$ and obtain time $O(m + \mathrm{occ}\cdot\log^\epsilon n)$.

\section{Construction}
\label{sec:construction}

We build the following components: the compacted tries {\boldmath$T_{\id}$}, {\boldmath$\lvec{T}$}, {\boldmath${T}$}, {\boldmath$T_f$}, the range reporting structure {\boldmath$R$} and all its pairs {\boldmath$(x,y)$}, the trees {\boldmath$A_L$} and {\boldmath$A_R$} with a weighted ancestor structure, and, for each block $B$, pointers to children (if any), the pointer to a copy of $B$ (if $B$ is a leaf), and {\boldmath $\sbeg(B)$, $\send(B)$, $\ell(B)$, $\id(B)$} (if any). In the end, we also construct in an obvious way all back links for secondary occurrences with a marked ancestor structure \cite{AlstrupHusfeldtRauhe}.

Our algorithm reads $s$ from left to right and maintains for each level of the currently constructed hierarchy $H$ a queue with at most $O(\log\log n)$ last built blocks from this level. The letters $s[0], s[1], \ldots$ are consecutively fed to the level $0$. Each level $\ell$ receives new blocks from the previous level $\ell - 1$ or from $s$, in case of level $0$, and puts them to the right of its queue (the queue elements are ordered from left to right and we dequeue elements from left); when a block is received, we may remove some blocks from the queue and may generate new blocks to feed for level $\ell + 1$. If the level $\ell + 1$ does not exist, we create it. After reading the whole $s$, we feed into level $0$ a special letter $\$$ that forces the algorithm to complete all levels.

At any level $\ell$, we assume that all level-$\ell$ blocks of the hierarchy $H$ to the left of the leftmost block in the queue were already built together with their descendants and structures like $A_L$, $A_R$, $T_{\id}$. In particular, we can use Lemma~\ref{lem:fingeprintsubstr} (fingerprints) on these blocks. We maintain a global hash table $L$ that maps any $\id$ to the leftmost block in $J$ with this $\id$ (as in Lemma~\ref{lem:leftmost}). By Lemma~\ref{lem:leftmost}, such leftmost blocks have children in $J$ and are never removed. Thus, once a block with new $\id$ is created and the corresponding information about it and its children is added to $L$, $T_{\id}$, $\lvec{T}$, $T$, $T_f$, $A_L$, $A_R$, this information will never be removed.

\textbf{Time.}
We use hash tables for $L$, edge transitions in $T_{\id}$, $T$, $\lvec{T}$, $T_f$, for van Emde Boas structures from the weighted ancestor structure \cite{GLN,KopelowitzLewenstein} of $A_L$ and $A_R$, and for other places. The query time in the tables is $O(1)$ and the insertion time is amortized $O(1)$ w.h.p.~\cite{BFKK} It is the only component that uses randomization. From now on, we do not write explicitly that the time for modifying the tables is ``amortized expected w.h.p.'', assuming it by default.

For level $\ell = 2k$ or $\ell = 2k+1$, we will spend $O(\log^3 |B|\cdot \log n)$ time on queue operations for each level-$\ell$ block $B$ with $|B| \le 2^k$, $O(\log^3 |B|\cdot \log n)$ time on each new block $B$ that we will generate for level $\ell+1$, and $O(1)$ time on each level-$\ell$ block $B$ with $|B| > 2^k$ by simply feeding such $B$ to level $\ell + 1$. By Lemma~\ref{lem:sparsity}, estimating $\log^3 |B| = O(k^3)$ when $|B| \le 2^k$, the total time for all levels can be bounded by $O(\sum_{k=0}^\infty \frac{n}{2^k} k^3 \log n) = O(n\log n \cdot\sum_{k=0}^\infty \frac{k^3}{2^k}) = O(n\log n)$.

\textbf{\boldmath Level $\ell = 2k$, for $k \ge 0$.} We maintain at most one block $B_1$ in the queue and a number $r$ such that the last $r$ fed blocks had identifiers $\id(B_1)$. Suppose that a new block $B_2$ is fed to us. If $\id(B_2) = \id(B_1)$, increment $r$, done. Let $\id(B_2) \ne \id(B_1)$.  We remove $B_1$ from the queue and process it as described below (if the queue is not empty), after which we finalize as follows: put $B_2$ in the queue seting $r := 1$, if $|B_2| \le 2^k$, or feed $B_2$ to level $\ell + 1$, if $|B_2| > 2^k$; done. Now let us describe how $B_1$ and $r$ are processed before the finalization.  

If $r = 1$, feed $B_1$ to level $\ell + 1$, finalize. If $r > 1$, create a new block $B$ with $\id(B) = \langle\id(B_1),r\rangle$, $\sbeg(B) = \sbeg(B_1)$, $\send(B) = \sbeg(B) + r|B_1| - 1$, $\ell(B) = \ell$ (which is correct by Lemma~\ref{lem:uniformity}) and insert $B$ in $A_L$ and $A_R$ in $O(\log\log n)$ time \cite{KopelowitzLewenstein} with a link to the leftmost block with identifier $\id(B_1)$ found using $L$.  If $L$ finds a block $B'$ with $\id(B') = \id(B)$, store in $B$ a pointer to $B'$, feed $B$ to level $\ell + 1$, finalize. If $L$ finds no such $B'$, we proceed as follows. 

We insert $B$ into $L$ and set $B_1$ as the only child of $B$. We can search any prefix of $B_1$ in the z-fast trie $T$ in $O(\log^2 |B_1|\cdot(\log\log n)^2)$ time as in the search phase of Section~\ref{sec:search}, using Lemma~\ref{lem:fingeprintsubstr} (fingerprints) for prefixes of $B_1$; thus, by the binary search on prefixes of $B_1$, we find the location in $T$ at which the string $B_1$ should be inserted in $O(\log^3 |B_1| (\log\log n)^2)$ time. We insert $B_1$ by creating a node $x$ in $T$ such that $\str(x) = B_1$ (if it did not exist) and, then, we modify the z-fast trie structure by inserting $O(1)$ new fingerprints into $T_f$: to insert a fingerprint in $T_f$, we descend from the root reading it in $T_f$, skipping edge labels, and then we restore the skipped labels from a reference to an appropriated substring whose fingerprint was stored in $T_f$, which takes $O(\log |B_1|(\log\log n)^2)$ time. Similarly, we insert the string $\lvec{B}_1{}^{r-1}$ into $\lvec{T}$, creating a node $y$, in $O(\log^3 |B|\cdot (\log\log n)^2)$ time. Finally, we feed $B$ to level $\ell + 1$ and finalize. We add the pair $(x,y)$ (together with a link to the block $B$) to the set of pairs for the range reporting structure $R$, which is built in the very end of the construction in $O(N\log N)$ time, for all $N$ pairs \cite{ChanLarsenPatrascu}. Since $\delta\log\frac{n}{\delta} = O(n)$, this time is $O(n\log n)$.

\textbf{\boldmath Level $\ell = 2k + 1$, for $k{\ge}0$.}
Let $B_1,B_2,\ldots$ be all level-$\ell$ blocks from left to right, marked according to conditions (a)--(b) of Section~\ref{sec:blocks}: $B_h$ is marked iff $|B_h| > 2^k$ or $|B_{h+1}| > 2^k$ or $\id''(B_{h-1})$ is a local minimum. Suppose that a new block $B_j$ is fed to us. We compute $\id''(B_j)$ using the numbers $\id(B_{j-1})$, $\id'(B_{j-1})$. The queue contains the last contiguous sequence of unmarked blocks $B_i,\ldots,B_{j-1}$ fed to level $\ell$ (i.e., $B_{i-1}$ is marked or $i = 1$). We determine whether $B_{j-1}$ is marked after $B_j$ is received. As in Section~\ref{sec:blocks}, we have $j - i = O(\log\log n)$.

If $|B_j| \le 2^k$ and both $B_{j-1}$ and $B_j$ are not marked, put $B_j$ in the queue, done. If $|B_j| \le 2^k$ and $B_j$ is marked ($B_{j-1}$ cannot be marked in this case), we process $B_i,\ldots,B_j$ and unite them into a new block that is then fed to level $\ell + 1$; we then take $B_i,\ldots,B_j$ off the queue, done.
If $|B_j| > 2^k$, we mark $B_{j-1}$ and analogously process and unite $B_i,\ldots,B_{j-1}$ instead of $B_i,\ldots,B_j$; then, we move $B_j$ to level $\ell + 1$, done (the queue is empty in the end). Let us describe how $B_i,\ldots,B_j$ are processed and united (for $B_i,\ldots,B_{j-1}$, it is analogous).

We read the sequence $\id(B_i),\ldots,\id(B_j)$ in the trie $T_{\id}$: by descending and skipping edge labels, we reach a node $z$ and restore the skipped labels using a block $B'$ referred by $z$ as described in Lemma~\ref{lem:traversal}. Then, if we have successfully verified that $\str(z)$ equals the sequence, we remove $B_i,\ldots,B_j$ and create a new childless block $B$ with $\ell(B) = \ell$, $\sbeg(B) = \sbeg(B_i)$, $\send(B) = \send(B_j)$ and we store in $B$ a pointer to $B'$. We insert $B$ in $A_L$ and $A_R$ with links to leftmost blocks with identifiers $\id(B_i)$ and $\id(B_j)$. We then move $B$ to level $\ell + 1$, done.

Suppose that we could not find the above block $B'$ in $T_{\id}$. We create a new block $B$ with a new identifier $\id(B)$ and we insert the sequence $\id(B_i),\ldots,\id(B_j)$ into $T_{\id}$  in $O(\log\log n)$ time, creating a node $z$ that refers to $B$ (the algorithm is almost the same as reading this sequence in $T_{\id}$). We insert $B$ in $A_L$ and $A_R$ with links to leftmost blocks with identifiers $\id(B_i)$ and $\id(B_j)$. Then, we must greedily parse $B_i,\ldots,B_{j}$ into subranges: if $B_i,\ldots,B_{m-1}$ have already been parsed (for $m\ge i$), then $B_{m},\ldots,B_{m+r-1}$ is the longest subrange with $m+r-1 \le j$ for which there is $m' < m$ such that $\rleft(m',4r) = \rleft(m,4r)$ and $\rright(m',5r) = \rright(m,5r)$; let $r = 1$ if there is no such $r > 0$. (We use the notation $\rleft$ and $\rright$ here as in Section~\ref{sec:blocks}.) For each subrange $B_{m},\ldots,B_{m+r-1}$ with $r > 1$, we form a new intermediate block $\tilde{B}$ in its place, removing the blocks $B_{m},\ldots,B_{m+r-1}$, and we identify the smallest $m''$ such that $\rright(m'',r) = \rright(m,r)$; we store in $\tilde{B}$ certain numbers $\mathsf{off}(\tilde{B})$, $\mathsf{r}(\tilde{B}) = r$, and a pointer to the parent of $B_{m''}$ required to find the blocks $B_{m''},\ldots,B_{m''+r-1}$ (see details in Section~\ref{sec:jiggly}).
After creating all intermediate blocks, the block $B$ has children $\tilde{B}_{\tilde{\imath}},\ldots,\tilde{B}_{\tilde{\jmath}}$ (intermediate or not) and we do almost the same computations as on level $2k$: for each $h \in [\tilde{\imath}\dd \tilde{\jmath})$, we store the string $\lrange{\tilde{B}_{\tilde{\imath}}\dots \tilde{B}_h}$ in $\lvec{T}$, producing a node $x$, and $\tilde{B}_{h+1}$ in ${T}$, producing a node $y$, which also requires a modification of $T_f$; we add the pair $(x,y)$ to $R$. All this, except the computation of subranges, takes $O(\log^3 |B|\cdot (\log\log n)^3)$ overall time. We then move $B$ to level $\ell + 1$, done. It remains to describe how $B_i,\ldots,B_j$ are parsed into the subranges, which is subtle.

\textbf{\boldmath Parsing $B_i,\ldots,B_j$.} 
We maintain two compacted tries $T_\circ$ and $\lvec{T}_\circ$ defined as follows. Let $B'$ be a block of $J$ from a level $2k' + 2$ and $B'_{i'},\ldots,B'_{j'}$ be the level-$(2k' + 1)$ blocks from which $B' = B'_{i'}\cdots B'_{j'}$ was created. As $B_i,\ldots,B_j$, the sequence $B'_{i'},\ldots,B'_{j'}$ was also greedily parsed into subranges: for each such subrange $B'_{m'},\ldots,B_{m'+r'-1}$ with $m' \in [i'\dd j']$, we store the sequence $\id(B'_{m'}),\id(B'_{m'+1}),\ldots,\id(B'_{m'+r'-1})$ in the trie $T_\circ$ and $T_\circ$ contains an explicit node $x$ such that $\str(x)$ equals this sequence, and we store $\id(B'_{m'-1}),\id(B'_{m'-2}),\ldots,\id(B'_{i'}), \$$ in $\lvec{T}_\circ$ with an explicit node $y$ in $\lvec{T}_\circ$ with $\str(y)$ equal to this sequence. All such pairs $(x,y)$ are stored in a set $P$. The nodes of $T_\circ$ and $\lvec{T}_\circ$ store pointers to (leftmost) blocks of $J$ from which their edge labels can be restored. 
Each node $x \in \lvec{T}_\circ$ (in $T_\circ$, analogously) stores an $O(\log n)$-bit number $\mathsf{v}(x)$ such that, for any $x,y\in \lvec{T}_\circ$, we have $\mathsf{v}(x) < \mathsf{v}(y)$ iff $\str(x) < \str(y)$ (lexicographically); the numbers can be maintained with $O(1)$ amortized insertion time \cite{BCDFZ,DietzSleator} (such structure is called dynamic ordered linked list). Define $x < y$ iff $\mathsf{v}(x) < \mathsf{v}(y)$. We store in each node $x$ the smallest and largest nodes, $\mathsf{L}(x)$ and $\mathsf{R}(x)$, in the subtree rooted at $x$.

We equip $P$ with a dynamic range reporting structure \cite{Blelloch,Nekrich}: it uses $O(\log n)$ time for insertions and $O(\log n + t\frac{\log n}{\log\log n})$ for queries, where $t$ is the number of reported points. This structure on $P$ needs to know only the relative order of coordinates of its points $(x,y)$, not the values $\mathsf{v}(x)$, $\mathsf{v}(y)$. By Theorem~\ref{thm:space}, the size of $T_\circ$, $\lvec{T}_\circ$, $P$ is $O(\delta\log\frac{n}{\delta})$ (the total number of pairs $(x,y)$). Once we have parsed $B_i,\ldots,B_j$ into subranges, the updates for $T_\circ$, $\lvec{T}_\circ$, $P$ are straightforward (similar to updates for $T_{\id}$ and $R$ above) and take $O((j-i+1)\log n)$ time.

Suppose that we have already parsed $B_{i},\ldots,B_{m-1}$ into subranges. Let us find the largest $r$ for which there is $m' < m$ with $\rleft(m',4r) = \rleft(m,4r)$ and $\rright(m',5r) = \rright(m,5r)$. Initially, set $r = 1$. We consecutively process the positions $h = i,\ldots,j$ as follows. We read the sequence $\id(B_h),\ldots,\id(B_{\min\{j,m+5(r+1)-1\}})$ by descending from the root of $T_\circ$, thus reaching a node $z$; set $z = \nil$ if we fail. We read the sequence $\id(B_{h-1}), \ldots, \id(B_{\max\{i-1,m-4(r+1)\}})$ from $\lvec{T}_\circ$, thus reaching a node $v$, where we abuse notation by assuming that $\id(B_{i-1})$ equals the special symbol $\$$; set $v = \nil$ if we fail. 
If either $h \le m - 4(r + 1)$ and $z \ne \nil$, or $h \ge m + 5(r + 1)$ and $v \ne \nil$, then we increment $r$ by $1$ and repeat the processing of $h$ again. Otherwise, we continue as follows. If $z = \nil$ or $v = \nil$, we proceed to the next $h$; otherwise, we perform the range query in $P$ on $[\mathsf{L}(z)\dd \mathsf{R}(z)]\times [\mathsf{L}(v)\dd \mathsf{R}(v)]$: if it reports some points, we increment $r$ by $1$ and repeat the processing of $h$ again, otherwise, we proceed to the next $h$.

The reading in $T_\circ$/$\lvec{T}_\circ$ takes $O(\log\log n)$ time. The queries in $P$ are performed only when $h \in [m-4(r+1)\dd m+5(r+1))$. Hence, $r$ was computed in time $O((j - i)\log\log n + r\log n) = O((\log\log n)^2 + r \log n) = O(r \log n)$. 
The algorithm is correct by the same reasons why our substring search was correct. 
Finally, we try to extend $r$ by a naive pattern matching of the sequence $\rleft(m,4(r+1)), \rright(m,5(r+1))$ in $\id(B_i),\ldots,\id(B_j)$ in $O(r\log\log n)$ time. Thus, $r$ is found and it remains to compute the smallest $m''$ such that $\rright(m'',r) = \rright(m,r)$.

For each node $z \in T_\circ$, denote $P_z = P \cap [\mathsf{L}(z)\dd \mathsf{R}(z)]\times [-\infty\dd \infty]$ and $\mathsf{N}(z) =|P_z|$. See Fig.~\ref{fig:chunks} in Appendix \ifdefined\fullpaper\ref{appx:search}.\else\ref{appx:jiggly}.\fi\ For each $z$ with $\mathsf{N}(z) > \log\log n$, we order $P_z$ according to the $y$-coordinates of its points and we split $P_z$ into  disjoint chunks $\{P_z^{a}\}_a$, each chunk $P_z^a$ of size $\Theta(\log\log n)$, so that all $y$-coordinates in $P_z^{a}$ are smaller than in $P_z^{a'}$ whenever $a < a'$. We store in $z$ a van Emde Boas structure $V_z$: for each $P_z^{a}$, $V_z$ stores the minimum and maximum of $\{\mathsf{v}(y) \colon (x,y)\in P_z^a\}$. 
We assign to each point $(x,y) \in P$ the position $p_{x,y}$ such that, at the time of insertion of $(x,y)$ into $P$, the node $x$ corresponded to a prefix of $s[p_{x,y}\dd n)$ and $y$ to a suffix of $s[0\dd p_{x,y})$. We store in $z$ a dynamic linear-space RMQ structure $Q_z$ from \cite{BrodalDavoodiRao2} that contains a linked list of all chunks $P_z^{a}$, where each $P_z^a$ stores the number $\min\{p_{x,y} \colon (x,y) \in P_z^a\}$, and $Q_z$ supports insertions/updates in $O(\frac{\log n}{\log\log n})$ time and, for any $b$ and $c$, $Q_z$ can compute $\min\{p_{x,y}\colon$ $(x,y) \in \bigcup_{a\in[b\dd c]} P_z^a\}$ in $O(\frac{\log n}{\log\log n})$ time. Note that we spent $O(1)$ space per chunk $P_z^a$ in $z$.

To compute $m''$, we process each $h \in (m\dd m{+}r)$ as follows: read the sequence $\rright(h,m+r-h)$ in $T_\circ$ and $\rleft(h,h-m)$ in $\lvec{T}_\circ$, thus reaching nodes $z$ and $v$ (as in the above analysis). If $\mathsf{N}(z) \le \log\log n$, we report all $t$ points in the range $[\mathsf{L}(z)\dd \mathsf{R}(z)]\times [\mathsf{L}(v)\dd \mathsf{R}(v)]$ in $P$ in $O(\log n + t\frac{\log n}{\log\log n})$ time, which is $O(\log n)$ as $t \le \mathsf{N}(z)$; each point corresponds to an occurrence of the sequence $\id(B_{m}),\ldots,\id(B_{m+r-1})$ and we choose the point $(x,y)$ with minimal $p_{x,y}$. If $\mathsf{N}(z) > \log\log n$, we use $V_z$ to find all chunks $P_z^a$ such that $P_z^a \subseteq [\mathsf{L}(z)\dd \mathsf{R}(z)]\times [\mathsf{L}(v)\dd \mathsf{R}(v)]$ and we use $Q_z$ to compute $\min p_{x,y}$ for all $(x,y)$ from such chunks; two chunks $P_z^a$ may partially intersect this range and, for each such $P_z^a$, we calculate all elements $P_z^a\cap [\mathsf{L}(z)\dd \mathsf{R}(z)]\times [\mathsf{L}(v)\dd \mathsf{R}(v)]$ using one range reporting query on $P$. This overall takes time $O(\log n + |P_z^a|\cdot\frac{\log n}{\log\log n}) = O(\log n)$.

The total space is as follows. For a given node $z$, all its data structures take $O(|P_z| / \log\log n)$ space. Fix a depth $\ell$ in the trie $T_\circ$. The sets $P_z$ for all nodes $z$ with depth $\ell$ partition (a subset of) $P$ and, therefore, in total occupy $O(|P| / \log\log n)$ space. Since the maximal depth of $T_\circ$ is $O(\log\log n)$, the total space is $O(|P|)$.

We maintain all $V_z$ and $Q_z$ as follows. Suppose that we insert a point $(x,y)$ in $P$. We increment $\mathsf{N}(z)$ for all parents $z$ of $x$. If $\mathsf{N}(z) = \log\log n$, we create $V_z$ and $Q_z$ for $z$ with only one chunk. If $\mathsf{N}(z) > \log\log n$, we insert $(x,y)$ into one chunk $P_z^a$ found using $V_z$, possibly updating $V_z$  in $O(\log\log n)$ time and $Q_z$ in $O(\frac{\log n}{\log\log n})$ (if $p_{x,y} < \min\{p_{\tilde{x},\tilde{y}}  \colon (\tilde{x},\tilde{y}) \in P_z^a\}$); if $P_z^a$ overflows, we split it into two chunks with further updates to $Q_z$. To split $P_z^a$, we obtain all elements of $P_z^a$ by one range reporting query on $P$ in $O(\log n + |P_z^a|\frac{\log n}{\log\log n}) = O(\log n)$ time. 
Thus, the amortized time of insertion in $P_z$ is $O(\frac{\log n}{\log\log n})$ since each chunk $P_z^a$, after its creation, can take $\Theta(\log\log n)$ more insertions before it is split. Hence, the amortized time of insertions in all parents of $x$ in $T_\circ$ is $O(\frac{\log n}{\log\log n}\cdot\mathsf{height}(T_\circ)) = O(\log n)$.

When the value $\mathsf{v}(v)$ of $v\in\lvec{T}_\circ$ changes, we should update all structures $V_z$ containing it. To do it efficiently, we store in $v$ as many copies of $v$ as there are points of the form $(z,v)$ in $P$, each with its own $\mathsf{v}(v)$ and they all are in the ordered linked list with all nodes of $T_\circ$. This does not change the algorithm or space bounds. Let $v$ correspond to a point $(z,v) \in P$. In each parent $z'$ of $z$ in $T_\circ$, we update $V_{z'}$, which takes $O(\log\log n)$ time (to find whether the old value $\mathsf{v}(v)$ is present in $V_{z'}$ and to update it); $O((\log\log n)^2)$ time overall. Since the numbers $\mathsf{v}(\cdot)$ change at most $O(n)$ times in total, the time for all updates is $O(n(\log\log n)^2)$.

\textbf{Removed blocks.} Blocks $B_i,\ldots,B_j$ on level $2k + 1$ or $B_1$ on level $2k$ can be removed. Non-leftmost blocks in $J$ have no children and store pointers to their leftmost copies, which, by Lemma~\ref{lem:leftmost}, cannot be removed. Hence, no ``cleanup'' is needed after removing blocks; $A_L$ and $A_R$ need no changes since their blocks contain links only to undeletable leftmost blocks.

\textbf{Correctness.}
The described algorithm exactly corresponds to the construction of the hierarchy $H$ from Section~\ref{sec:blocks} and it applies all rules 1--4 whenever possible and parses ranges of children blocks into subranges whenever possible. Hence, it indeed produces $J$.

\begin{theorem}
	Suppose that $s$ is a string of length $n$ over an alphabet $\{0,1,\ldots,n^{O(1)}\}$ and $\delta$ is its string complexity. Using $O(\delta \log\frac{n}{\delta})$ machine words, one can construct in one left-to-right pass on $s$ in a streaming fashion an index of size $O(\delta \log\frac{n}{\delta})$ that can find in $s$ all $\mathrm{occ}$ occurrences of any string of length $m$ in $O(m + (\mathrm{occ} + 1)\log^\epsilon n)$ time, where $\epsilon > 0$ is any fixed constant (the big-$O$ in space hides factor $\frac{1}{\epsilon}$). The construction time is $O(n\log n)$ w.h.p.
\end{theorem}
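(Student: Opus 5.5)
The proof assembles the results of Sections~\ref{sec:jiggly}--\ref{sec:construction}, each of which has already established one part of the claim; the task is to check that the pieces fit. I would split it into three parts: (i) the index size, (ii) the query time, and (iii) the construction space and time.

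For (i), Theorem~\ref{thm:space} bounds the number of blocks of $J$ by $O(\delta\log\frac{n}{\delta})$ under the standing assumption $\delta\ge\Omega(\log\log n)$. Every auxiliary component is linear in the number of blocks, or in the number of pairs $(x,y)$, which is itself bounded by the number of children in $J$. These components are $T_{\id}$, $T$, $\lvec{T}$, $T_f$, the forests $A_L$ and $A_R$ with weighted-ancestor structures, the range reporting structure $R$ (whose factor $\frac{1}{\epsilon}$ is absorbed in the big-$O$), the back links with the marked-ancestor structure, and the trees $R_{\hat{B}}$. One point needs care for $T_f$: each fingerprint has length $O(\log m\log\log n)$, but $T_f$ is a compacted trie storing only edge lengths. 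It therefore has $O(1)$ nodes per inserted fingerprint, and its size is linear in the size of $T$.

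For (ii), I would cite the ``Time'' paragraph of Section~\ref{sec:search}. Parsing $t$ takes $O(m)$ by Lemma~\ref{lem:sparsity}. Fingerprints are compared correctly by Lemma~\ref{lem:fingerprint} and computed via Lemma~\ref{lem:fingeprintsubstr}. There are $O(\log m\log\log n)$ candidate split positions $q$, each handled by a z-fast trie search and a query to $R$. Secondary occurrences are reported in $O(\mathrm{occ}\log^\epsilon n)$ time. The case split $m\le\log^6 n$ versus $m>\log^6 n$, followed by renaming $\epsilon$, yields $O(m+(\mathrm{occ}+1)\log^\epsilon n)$.

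For (iii), the algorithm of Section~\ref{sec:construction} processes $s$ in one left-to-right pass. Its correctness is the paragraph stating that it reproduces exactly $H$, applies rules 1--4, and performs the greedy subrange parsing, so the output is $J$.

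\textbf{Space.} The working memory consists of the following.
\begin{itemize}
\item One queue per level, holding $O(\log\log n)$ blocks, over $O(\log n)$ levels.
\item The permanent structures counted in (i).
\item The auxiliary structures $L$, $T_\circ$, $\lvec{T}_\circ$, $P$, the ordered lists, and the per-node $V_z$ and $Q_z$. The paper shows these are $O(|P|)$ by summing $O(|P_z|/\log\log n)$ over the $O(\log\log n)$ depths of $T_\circ$, and $|P|$ is bounded by Theorem~\ref{thm:space}.
\end{itemize}
The step I expect to be the main obstacle is the queues: they take $O(\log n\log\log n)$ words, and this must be absorbed into $O(\delta\log\frac{n}{\delta})$. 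I would argue that $\delta\ge\Omega(\log\log n)$ gives $\delta\log\frac{n}{\delta}\ge\Omega(\log n\log\log n)$ whenever $\delta\le n/2$, since then $\log\frac{n}{\delta}$ is at least $\log n-\log\delta$. If that fails for very large $\delta$, the bound $O(n)\ge$ queue size still suffices. A second subtlety is that intermediate (partial) blocks are discarded immediately. Hence every block ever stored is either retained in $J$ or lies in a queue, and deleted blocks leave no garbage, by the ``Removed blocks'' paragraph and Lemma~\ref{lem:leftmost}.

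\textbf{Time.} The accounting gives $O(\log^3|B|\cdot\log n)$ per short block at each level, and $\sum_k \frac{n}{2^k}k^3\log n=O(n\log n)$ by Lemma~\ref{lem:sparsity}. On top of this come the greedy parsing at $O(r\log n)$ per subrange, the amortized $O(\log n)$ per insertion into $P$ and its chunk structures, $O(n(\log\log n)^2)$ for relabelings, and $O(N\log N)$ to build $R$ at the end. Hashing is the only source of randomness, which makes the bound hold w.h.p.
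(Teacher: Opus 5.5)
Your proposal is correct and takes essentially the same route as the paper, which gives no separate proof of this theorem but obtains it by combining Theorem~\ref{thm:space}, the query analysis of Section~\ref{sec:search}, and the streaming construction of Section~\ref{sec:construction}; these are exactly the three parts you assemble, and your explicit check that the $O(\log n\log\log n)$ words of queues are absorbed covers a point the paper leaves implicit. Two small corrections: first, the bound $\delta\log\frac{n}{\delta}=\Omega(\log n\log\log n)$ for $\Omega(\log\log n)\le\delta\le n/2$ follows from the concavity of $x\mapsto x\log\frac{n}{x}$ (its minimum on that interval is at an endpoint), not from the identity $\log\frac{n}{\delta}=\log n-\log\delta$; second, in the paper's terminology intermediate blocks are kept in $J$ as leaves, and what is discarded is the blocks $B_m,\ldots,B_{m+r-1}$ they replace together with the children of non-leftmost copies.
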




\bibliography{refs}


\appendix
\newpage

\section{Hierarchy of Blocks (Missing Proofs)}
\label{appx:hierarchy}

\consistency*
\begin{proof}
	The proof is by induction on $\ell$. For $\ell = 0$, it is trivial. Assume $\ell > 0$. Let $\ell' = \ell - 1$. Denote $\tilde{\imath} = \sbeg({B}_h)$ and $\tilde{\jmath} = \send({B}_h)$. Suppose that ${B}_h = \hat{B}_u \cdots \hat{B}_v$, where $\hat{B}_u,\ldots,\hat{B}_v$ are level-$\ell'$ blocks corresponding to ${B}_h$. 
	By the inductive hypothesis, there are level-$\ell'$ blocks $\hat{B}_{u'},\ldots,\hat{B}_{v'}$ such that $v'-u' = v-u$ and these blocks correspond to the substring $s[i\dd j]$ in the sense that $i = \sbeg(\hat{B}_{u'})$ and $\id(\hat{B}_{u'}) = \id(\hat{B}_u),\ldots,\id(\hat{B}_{v'}) = \id(\hat{B}_v)$. Note that, for the inductive hypothesis to hold, it was sufficient to have a weaker assumption that $s[i{-}2^{k+3}\dd j{+}2^{k+3}] = s[\tilde{\imath}{-}2^{k+3}\dd \tilde{\jmath}{+}2^{k+3}]$.
	
	Since the construction process copies blocks of length greater than $2^k$ from level $\ell'$ to level $\ell$ unaltered, all blocks $\hat{B}_u,\ldots,\hat{B}_v$  either have length at most $2^k$, or $u = v$ and $|\hat{B}_u| > 2^k$. In the latter case, we have $u' = v'$, $\id(\hat{B}_{u'}) = \id(\hat{B}_u) = \id({B}_h)$ and the block $\hat{B}_{u'}$ will be copied to level $\ell$ in the same way as $B_u$ had been copied to level $\ell$, which implies that the required level-$\ell$ block ${B}_{h'}$ is the copy of $\hat{B}_{u'}$.
	
	From now on, assume that each of the blocks $\hat{B}_u,\ldots,\hat{B}_v$ has length at most $2^k$. Observe that, since $0 < i \le j < |s|-1$ and $0 < \tilde{\imath} \le \tilde{\jmath} < |s|-1$, neither of the blocks $\hat{B}_u, \hat{B}_v, \hat{B}_{u'}, \hat{B}_{v'}$ is the first or last block on level $\ell'$. 
	
	Let $\ell = 2k+1$. We will prove the inductive step under the following weaker assumption: $s[i{-}2^{k+3}{-}2^k\dd j{+}2^{k+3}{+}2^k] = s[\tilde{\imath}{-}2^{k+3}{-}2^k\dd \tilde{\jmath}{+}2^{k+3}{+}2^k]$ (it will be useful in the analysis of the case $\ell = 2k + 2$). As it was noted above, such weaker assumption still implies the existence of the blocks $\hat{B}_{u'},\ldots,\hat{B}_{v'}$ corresponding to the substring $s[i\dd j]$. We have $\id(\hat{B}_u) = \cdots = \id(\hat{B}_v)$ and $\id(\hat{B}_{u-1}) \ne \id(\hat{B}_u)$ and  $\id(\hat{B}_v) \ne \id(\hat{B}_{v+1})$. Since $\id(\hat{B}_{u'}) = \id(\hat{B}_u), \ldots, \id(\hat{B}_{v'}) = \id(\hat{B}_v)$, it remains to prove that $\id(\hat{B}_{u'-1}) \ne \id(\hat{B}_{u'})$ and $\id(\hat{B}_{v'}) \ne \id(\hat{B}_{v'+1})$. If $|\hat{B}_{u-1}| \le 2^k$, then the substring $\hat{B}_{u-1} = s[\tilde{\imath}{-}|\hat{B}_{u-1}|\dd \tilde{\imath})$ is inside $s[\tilde{\imath}{-}2^{k+3}{-}2^{k}\dd \tilde{\jmath}{+}2^{k+3}{+}2^{k}]$ together with a ``neighborhood'' of length $2^{k+3}$ since $|\hat{B}_{u-1}| + 2^{k+3} \le 2^k + 2^{k+3}$. Therefore, by the inductive hypothesis, the substring $s[i{-}|\hat{B}_{u-1}|\dd i)$, which is equal to $\hat{B}_{u-1}$, must form a level-$\ell'$ block $\hat{B}_{u'-1}$ with $\id(\hat{B}_{u'-1}) = \id(\hat{B}_{u-1})$. Hence, we obtain $\id(B_{u'-1}) \ne \id(\hat{B}_{u'})$ since $\id(\hat{B}_{u-1}) \ne \id(\hat{B}_u)$. By the same reasoning, if $|\hat{B}_{u'-1}| > 2^k$, then $|\hat{B}_{u-1}| > 2^k$ and vice versa (due to symmetry). In case $|\hat{B}_{u'-1}| > 2^k$, we have $\id(\hat{B}_{u'-1}) \ne \id(\hat{B}_{u'})$ since $|\hat{B}_{u'}| \le 2^k$. Symmetrically, one can show that $\id(\hat{B}_{v'}) \ne \id(\hat{B}_{v'+1})$. 
	
	Let $\ell = 2k + 2$. To generate level-$\ell$ blocks (including ${B}_h$), the process marked some blocks from level $2k+1$ using rules (a) and (b) described in Section~\ref{sec:blocks}. According to these rules, $\hat{B}_{u-1}$ and $\hat{B}_v$ must be marked and $\hat{B}_u,\ldots,\hat{B}_{v-1}$ must all be unmarked. It suffices to prove that $\hat{B}_{u'-1}$ and $\hat{B}_{v'}$ are marked and $\hat{B}_{u'},\ldots,\hat{B}_{v'-1}$ are unmarked. Let us first derive a certain technical observation.
	
	Suppose that, for some $t \in [1\dd 7]$, each of the blocks $\hat{B}_{u-t},\ldots,\hat{B}_{u-1}$ has length at most $2^k$. Since $t\cdot2^k + 2^{k+3} + 2^k \le 2^{k+4}$, each of these blocks is inside the substring $s[\tilde{\imath}{-}2^{k+4}\dd \tilde{\jmath}{+}2^{k+4}]$ together with a ``neighborhood'' of length $2^{k+3} + 2^k$. As was proved in case $\ell = 2k + 1$, it follows that the substrings $s[i{-}|\hat{B}_{u-t}\cdots \hat{B}_{u-1}|\dd i{-}|\hat{B}_{u-t+1}\cdots \hat{B}_{u-1}|),\ldots,s[i{-}|\hat{B}_{u-1}|\dd i)$, which are equal to $\hat{B}_{u-t},\ldots,\hat{B}_{u-1}$, respectively, must form level-$\ell'$ blocks $\hat{B}_{u'-t},\ldots,\hat{B}_{u'-1}$ such that $\id(\hat{B}_{u'-t}) = \id(\hat{B}_{u-t}), \ldots, \id(\hat{B}_{u'-1}) = \id(\hat{B}_{u-1})$. Symmetrically, if each of the blocks $\hat{B}_{u'-t},\ldots,\hat{B}_{u'-1}$ has length at most $2^k$, then also $\id(\hat{B}_{u'-t}) = \id(\hat{B}_{u-t}), \ldots, \id(\hat{B}_{u'-1}) = \id(\hat{B}_{u-1})$.
	
	Suppose that $\hat{B}_{u-1}$ was marked because of rule (b), i.e., $\infty > \id''(\hat{B}_{u-3}) > \id''(\hat{B}_{u-2})$ and $\id''(\hat{B}_{u-2}) < \id''(\hat{B}_{u-1}) < \infty$. Hence, each of the five blocks $\hat{B}_{u-5},\ldots,\hat{B}_{u-1}$ has length at most $2^k$. Due to the above observation, we have $\id(\hat{B}_{u'-5}) = \id(\hat{B}_{u-5}),\ldots,\id(\hat{B}_{u'-1}) = \id(\hat{B}_{u-1})$. Therefore, the sequence $\id''(\hat{B}_{u-3}),\ldots,\id''(\hat{B}_v)$ is equal to the sequence $\id''(\hat{B}_{u'-3}),\ldots,\id''(\hat{B}_{v'})$. Hence, $\hat{B}_{u'-1}$ must be marked and $\hat{B}_{u'},\ldots,\hat{B}_{v'-1}$ must be unmarked. We can analogously prove that, if $\hat{B}_v$ was marked due to rule (b), then $\hat{B}_{v'}$ is also marked.
	
	Suppose that $\hat{B}_{u-1}$ was marked because of rule (a), i.e., $|\hat{B}_{u-1}| > 2^k$ or $|\hat{B}_{u}| > 2^k$. The latter is impossible since we assumed that $|\hat{B}_u| \le 2^k$. Now, again due to the above observation, we have $|\hat{B}_{u'-1}| > 2^k$. Thus, $\id''(\hat{B}_{u'-1}) = \infty$ and, hence, $\hat{B}_{u'-1}$ is marked. Further, the sequences $\id''(\hat{B}_{u-1}),\ldots,\id''(\hat{B}_v)$ and $\id''(\hat{B}_{u'-1}),\ldots,\id''(\hat{B}_{v'})$ coincide, which implies that $\hat{B}_{u'},\ldots,\hat{B}_{v'-1}$ are unmarked and, again, if $\hat{B}_v$ was marked due to rule (b), then $\hat{B}_{v'}$ is marked.
	
	It remains to consider the case when $\hat{B}_v$ is marked due to rule (b) and to prove that $\hat{B}_{v'}$ must be marked too in this case. We assumed $|\hat{B}_v| \le 2^k$ and, hence, the only option is that $|\hat{B}_{v+1}| > 2^k$. But we can repeat the proof of the above observation to deduce that $|\hat{B}_{v+1}| \le 2^k$ iff $|\hat{B}_{v'+1}| \le 2^k$. Therefore, we obtain $|\hat{B}_{v'+1}| > 2^k$ and, thus, $\hat{B}_{v'}$ is marked. 
\end{proof}

\ifdefined\fullpaper
\sparsity*
\begin{proof}
	Denote by $S_k$ the set of block boundaries for level $2k$ (and $S_{k+1}$ is for level $2k+2$). 	It suffices to prove the claim only for $S_k$ as the set of block boundaries for level $2k+1$ is a subset of $S_k$. 
	As described in Section~\ref{sec:blocks}, during the transition from level $2k + 1$ to $2k + 2$, every maximal range $B_x,\ldots,B_y$ of blocks of level $2k+1$ that consists only of blocks of length ${\le}2^k$ was split into subranges according to certain rules discussed in Section~\ref{sec:blocks}, and each such subrange formed a block for level $2k+2$. The key observation is that each of the subranges of $B_x,\ldots,B_y$ contains at least two blocks, except the last subrange that might be equal to $B_y$. Therefore, any two consecutive block boundaries $j$ and $j'$ from level $2k+1$ (i.e., $j < j'$ and  there are no block boundaries in $(j\dd j')$) can belong to $S_{k+1}$ only if either $j' - j > 2^k$ or $j'' - j' > 2^k$ where $j'' > j'$ is the block boundary following $j'$ (assuming $j'' = \infty$ if $j'$ is the last position in $s$).
	
	Let us prove by induction on $k$ the following claim: the range $[0\dd n)$ of positions of the string $s$ can be partitioned into disjoint non-empty chunks $F_1, F_2, \ldots, F_m$ such that $F_i = [f_i\dd f_{i+1})$, where  $f_{m+1} = n$ and $f_1 < f_2 < \cdots < f_{m+1}$, and the chunks are of two types:
	\begin{enumerate}
		\item a chunk $F_i$ is \emph{normal} if $2^{k-5} \le |F_i| \le 4\cdot 2^{k-5}$ and $|F_i \cap S_k| \le 2$;
		\item a chunk $F_i$ is \emph{skewed} if $|F_i| \ge 8\cdot 2^{k-5}$, $|F_i \cap S_k| \le 3$, and $[f_i{+}4\cdot 2^{k-5}\dd f_{i+1}) \cap S_k = \emptyset$, i.e., all positions of $F_i \cap S_k$ are concentrated in the prefix of $F_i$ of length $4\cdot 2^{k-5}$ (hence the name ``skewed'').
	\end{enumerate}
	A chunk $F_i$ is called empty if $F_i\cap S_k = \emptyset$.
	The claim implies the lemma as follows. The worst case, which maximizes $|S_k \cap [i\dd j)|$, occurs when a range $[i\dd j)$ intersects one skewed chunk and at most $\lceil\frac{j-i-3}{2^{k-5}}\rceil$ normal chunks in such a way that the skewed chunk is the rightmost one and contains three positions from $S_k$, all inside $[i\dd j)$, and each normal chunk contains two positions from $S_k$, all inside $[i\dd j)$. We then have $|S_k \cap [i\dd j)| \le 2\lceil\frac{j-i-3}{2^{k-5}}\rceil + 3$. The bound can be rewritten as $2\lceil \frac{c 2^k + d}{2^{k-5}}\rceil + 3$, substituting integers $c \ge 0$ and $d \in [0\dd 2^k)$ such that $j - i - 3 = c 2^k + d$, and it is upper-bounded by $2^6c + \lceil \frac{d}{2^{k-5}}\rceil + 3 \le 2^6 c + 2^5 + 3 < 2^6(c + 1) = 2^6\lceil\frac{c2^k + 1}{2^k}\rceil \le 2^6 \lceil\frac{(j - i - 3) + 1}{2^k}\rceil \le 2^6  \lceil\frac{j - i}{2^k}\rceil$, as the lemma states.
	
	Now let us return to the inductive proof of the claim.
	The base of the induction is $k \le 6$ and it is trivial since the range $[0\dd n)$ always can be split into chunks of length $2$. For the inductive step, suppose that the inductive hypothesis holds for an integer $k \ge 6$ and let us construct a partitioning of the range $[0\dd n)$ into chunks for $k + 1$.
	
	First, we greedily unite chunks into disjoint pairs and singletons from right to left as follows: given a chunk $F_i$ such that the chunks $F_{i+1}, F_{i+2}, \ldots$ were already united into pairs and singletons, we consider the following cases: (1)~if $F_i$ is a skewed chunk, then it forms a singleton and we analyze $F_{i-1}$ next; (2)~if both $F_{i-1}$ and $F_{i}$ are normal chunks, then $F_{i-1}$ and $F_{i}$ are united into a pair and we consider $F_{i-2}$ next; (3)~if $F_i$ is a normal chunk and $F_{i-1}$ is a skewed chunk, then we cut from $F_{i-1}$ a new normal chunk $F' = [f_{i}{-}2^{k-5}\dd f_{i})$ of length $2^{k-5}$, which is necessarily empty due to properties of skewed chunks, and we pair $F'$ with $F_i$ and analyze the (already cut) chunk $F_{i-1}$ next; the length of the skewed chunk $F_{i-1}$ is reduced by $2^{k-5}$ after the cutting and might become less than $8\cdot 2^{k-5}$ but it is still considered as skewed since this is not an issue as the chunk $F_{i-1}$ will be anyways dissolved into normal chunks in a moment. After the construction of all the pairs and singletons, we proceed as follows.\footnote{There is a subtle special case when $F_1$ is normal and it cannot be unitied into a pair since it is leftmost. We then add a ``dummy'' empty chunk $F_0$ that has length $2^{k-5}$ and covers the ``negative'' positions $-1,-2,\ldots,-2^{k-5}$. The dummy chunks and such coverage of negative positions, which never contain block boundaries, will not pose problems for the overall analysis.}
	
	We consider all the produced pairs and singletons of chunks from right to left. A singleton chunk $F_i$ is always a skewed chunk and its length is at least $7\cdot 2^{k-5}$ (a suffix of length $2^{k-5}$ could be cut from the chunk). Let $S'_k = \{j_1 < \cdots < j_{|S'_k|}\}$ be the set of block boundaries from level $2k + 1$. It follows from the key observation above that any two consecutive positions $j_{h-1}$ and $j_{h}$ from $S'_{k}$ can both belong to the set $S_{k+1}$ only if either $j_{h} - j_{h-1} > 2^{k}$, or $j_{h+1} - j_{h} > 2^{k}$. In both cases there is a gap of length greater that $2^{k}$ between either $j_{h-1}$ and $j_{h}$, or $j_{h}$ and $j_{h+1}$. Therefore, since only the first $4\cdot 2^{k-5} = 2^{k-3}$ positions of the skewed chunk $F_i$ may contain positions of $S'_k$, the transition to level $2k + 2$ in the case $|F_i \cap S'_k| = 3$ necessarily removes from $S'_k$ either the first or the second position of $S'_k \cap F_i$ when producing $S_{k+1}$. Thus, we have $|F_i \cap S_{k+1}| \le 2$. We then split the skewed chunk $F_i$ into a series of new normal chunks all of which, except possibly the first one, do not contain positions from $S_{k+1}$: the first chunk is $[f_{i}\dd f_{i}{+}4\cdot 2^{k-5})$ and it has length $4\cdot 2^{k-5} = 2\cdot 2^{k-4}$; the remaining suffix of $F_i$ is $[f_{i}{+}4\cdot 2^{k-5}\dd f_{i+1})$ (the boundary $f_{i+1}$ used here can differ from the original $f_{i+1}$ if the skewed chunk $F_i$ was cut and, thus, $f_{i+1}$ was decreased) and its length is at least $7\cdot 2^{k-5} - 4\cdot 2^{k-5}= 3\cdot 2^{k-5}$, this suffix is split into normal chunks arbitrarily (recall that the length of a normal chunk for the inductive step $k+1$ must be at least $2^{k-4}$ and at most $4\cdot 2^{k-4}$).
	
	Consider a pair of normal chunks $F_{i-1}$ and $F_{i}$. For simplicity, we denote by $F_{i-1}$ the chunk preceding $F_i$ even if it was created by cutting the skewed chunk (actual $F_{i-1}$) that preceded $F_i$ in the initial partitioning. The length of the united chunk $F_{i-1} F_{i}$ is at least $2^{k-4}$ and at most $4\cdot 2^{k-4}$ so that it could have formed a new normal chunk if at most two positions of $S_{k+1}$ belonged to it. By the key observation above, if two consecutive positions $j_{h-1}$ and $j_{h}$ were retained in $S_{k+1}$ after the transition to level $2k+2$, then there is a gap of length greater than $2^{k}$ between either $j_{h-1}$ and $j_{h}$, or $j_{h}$ and $j_{h+1}$. Therefore, since $|F_{i-1}F_{i}| \le 4\cdot 2^{k-4} < 2^{k}$, the united chunk $F_{i-1} F_{i}$ contains at most three positions of $S_{k+1}$, i.e., one of the positions from $S'_k \cap F_{i-1}F_{i}$ must have been discarded. In case $|S_{k+1} \cap F_{i-1}F_{i}| \le 2$, we simply form a new normal chunk $F_{i-1}F_{i}$. The case $|S_{k+1} \cap F_{i-1}F_{i}| = 3$ is more interesting; it occurs when each of the chunks $F_{i-1}$ and $F_{i}$ contains two positions from $S'_k$ and, during the transition to level $2k+2$, the second position of $S'_k \cap F_{i-1}$ was removed and the two positions from $S'_k \cap F_{i}$ were retained. By the key observation above, we must have a gap after the chunk $F_i$ in this case: $S'_{k} \cap [f_{i+1}\dd f_{i+1}{+}2^{k}{-}|F_i|) = \emptyset$. Since $|F_i| \le 4\cdot 2^{k-5} = 2^{k-3}$, we hence obtain $S'_{k} \cap [f_{i+1}\dd f_{i+1}{+}7\cdot 2^{k-3}) = \emptyset$. Therefore, all newly produced chunks that are entirely contained in the range $[f_{i+1}\dd f_{i+1}{+}7\cdot 2^{k-3})$ are empty, i.e., they do not contain positions of $S'_k$ (recall that we process pairs and singletons of chunks from right to left and, so, only newly constructed chunks are located to the right of $F_i$). Let $\hat{F}_{i+1}, \hat{F}_{i+2}, \ldots, \hat{F}_{\ell}$ be a maximal sequence of consecutive newly constructed empty chunks to the right of $F_i$ (we use the notation $\hat{F}_j$ for the new chunks to distinguish them from the ``old'' chunks $F_1, F_2, \ldots$). We unite all chunks $\hat{F}_{i+1}, \hat{F}_{i+2}, \ldots, \hat{F}_{\ell}$  with $F_i$ and $F_{i-1}$ thus producing a new skewed chunk whose length is at least $|F_{i-1} F_{i}| + 7\cdot 2^{k-3} - 4\cdot 2^{k-4}$ (the negative term is because some chunks to the right of $F_i$ may only partially intersect the ``empty'' range $[f_{i+1}\dd f_{i+1}{+}7\cdot 2^{k-3})$), which is at least $2^{k-4} + 5\cdot 2^{k-3} = 11\cdot 2^{k-4} \ge 8\cdot 2^{k-4}$.
\end{proof}
\fi

\uniformity*
\begin{proof}
	The equality as strings is true by construction.
	Suppose, to the contrary, that the blocks $B$ and $B'$ were created on levels $\ell$ and $\ell'$, respectively, with $\ell < \ell'$ and $\ell$ is the smallest such level where the lemma fails. Let $\ell = 2k + 2$, for some $k$, and $B$ was created from blocks $B_i,\ldots, B_j$ from level $2k + 1$ so that $B = B_i\cdots B_j$. 
	Since $\id(B) = \id(B')$, there are blocks $B'_{i'},\ldots, B'_{j'}$ on a level ${<}\ell'$ from which the block $B'$ was created such that $j - i = j' - i'$ and $\id(B_i) = \id(B'_{i'}),\ldots,\id(B_{j}) = \id(B'_{j'})$. Since $2k+1 < \ell$ and $\ell$ is the minimal level where the lemma fails, $B'_{i'}\cdots B'_{j'}$ were present on level $2k+1$. The length of each of the blocks $B_i,\ldots,B_j$ is at most $2^k$ and, then, the same holds for $B'_{i'},\ldots,B'_{j'}$. Thus, the latter blocks should have participated in a maximal range of blocks with lengths ${\le}2^k$. In such a range, all blocks are united into disjoint subranges, each of which contains at least two blocks, except, possibly, the last subrange, which can be formed by one block. Each subrange generates a block for level $2k+2$. But this contradicts the assumption that the blocks $B'_{i'},\ldots,B'_{j'}$ should have been transitioned unaffected to level $\ell'-1 \ge \ell = 2k + 2$. 
	
	Let $\ell = 2k + 1$ for some $k$. One can analogously find decompositions of $B$ and $B'$ into blocks from level $2k$: $B = B_i\cdots B_j$ and $B' = B'_{i'}\cdots B'_{j'}$, where $j - i = j'- i'$ and  $\id(B_i) = \id(B'_{i'}),\ldots,\id(B_{j}) = \id(B'_{j'})$. Since $\ell = 2k + 1$, we have $\id(B_i) = \cdots =\id(B_j)$ and $B$ was produced by uniting these blocks into a ``run''. Then, $B'_{i'},\ldots,B'_{j'}$ should have been similarly united on the level $\ell - 1$, a contradiction.
\end{proof}

\section{Jiggly Block Tree (Missing Proofs)}
\label{appx:jiggly}

\leftmost*
\begin{proof}
	Consider the process that constructs $J$ by pruning the original hierarchy tree $H$ applying, first, rules 1--4 and, then, introducing intermediate blocks. Suppose, to the contrary, that a certain rule either removed $B$ itself or all children of $B$ and assume that $B$ is the first leftmost and topmost block with any $\id$ for which this happened. Let $\ell$ denote the level of $B$.
	
	The rules 1--2 obviously could not do this (recall that $|B| > 1$). Rule 3 cannot be applied to $B$ itself since $B$ has no ``preceding'' blocks with equal $\id$. Rule 4 applied to $B$ itself does not remove all children and, thus, is not contradictory. 
		
	One can show by induction on levels that, for any blocks $B'$ and $B''$, if $\id(B') = \id(B'')$, then, given $\ell' = 2k$ or $\ell' = 2k + 1$, any level-$\ell'$ block from the subtree rooted at $B'$ in the tree $H$ has a corresponding block with equal identifier in the subtree rooted at $B''$. It is then straightforward that a rule 3 or 4 applied to a block cannot remove a block $B'$ containing $B$ as a descendant since, for such a rule, there is always an unaffected block $B''$ ``preceding'' $B'$ with $\id(B'') = \id(B')$ and, hence, $B$ should have been among the descendants of $B''$ if $B$ were a descendant of $B'$, which is impossible because $B$ is already leftmost.
	
	Suppose that $B$ was pruned due to the introduction of an intermediate block $B'$ in place of a subrange of blocks $B_m\cdots B_{m+r-1}$ with $r > 1$ from a level $2k + 1$. By construction, there is $m' < m$ such that $\id(B_{m'+h}) = \id(B_{m+h})$, for all $h \in [0\dd r)$. If $B$ were a descendant of $B_{m+h}$, for $h \in [0\dd r)$ (maybe $B_{m+h} = B$), then $B$ must be a descendant of $B_{m'+h}$ in the tree $H$ by the observation discussed above. This is a contradiction since $B$ is leftmost.
\end{proof}

\spacetheorem*
\begin{proof}[Proof (continuation).]
	Let us detail the ideas described in the conceptual part of the proof from the main text.
	
	\textbf{Part i.} Fix $\ell = 2k+1$ or $\ell = 2k+2$, for some $k$. Consider a level-$\ell$ block $B$ in $J$ that is not a leaf. By Lemma~\ref{lem:sparsity} (local sparsity), there are $O(1)$ such blocks $B$ for which $\sbeg(B) < 2^{k+6}$ or $\send(B)  + 2^{k+6} \ge n$. Assume that these conditions do not hold for $B$ and, hence, the substring $s[\sbeg(B){-}2^{k+6}\dd \send(B){+}2^{k+6}]$ is well defined. 
	We have $|B| \le 2^k$ since otherwise $B$ should have had a copy on level $\ell+1$ that is the parent of $B$ in the original tree $H$ and, hence, $B$ could be removed by rule 2. Let us show that the string  $s[\sbeg(B){-}2^{k+4}\dd \send(B){+}2^{k+4}]$ has no occurrences at positions smaller than $\sbeg(B) - 2^{k+4}$. Suppose, to the contrary, that it occurs at a position $i' - 2^{k+4}$ with $i' < \sbeg(B)$. Then, due to Lemma~\ref{lem:consistency} (local consistency), there is a block $B' = s[i'\dd i'{+}|B|)$ in the tree $H$ such that $\id(B') = \id(B)$. By Lemma~\ref{lem:leftmost}, the tree $J$ contains the leftmost and topmost such block $B''$ with $\id(B'') = \id(B)$ and $\sbeg(B'') < \sbeg(B)$. But then all descendants of $B$ could be removed by rule 3, which is a contradiction. 
	
	The length of $s[\sbeg(B){-}2^{k+4}\dd \send(B){+}2^{k+4}]$ is $2\cdot 2^{k+4} + |B| \le 2^{k+5} + 2^k$. Hence, the strings $s[h\dd h{+}2^{k+6})$ with $h \in (\sbeg(B){-}2^{k+4}{-}2^k \dd \sbeg(B){-}2^{k+4}]$ all cover the substring $s[\sbeg(B){-}2^{k+4}\dd \send(B){+}2^{k+4}]$  and, thus, must be distinct and each such string $s[h\dd h{+}2^{k+6})$ has no occurrences at positions smaller than $h$. We assign the block $B$ to each of these $2^k$ strings $s[h\dd h{+}2^{k+6})$. We do the same analysis and assignments for every internal level-$\ell$ block in $J$. Due to  Lemma~\ref{lem:sparsity} (local sparsity), we could have assign at most $O(1)$ blocks to any fixed substring $s[h\dd h{+}2^{k+6})$ with $h \in [0\dd n{-}2^{k+6})$. Denote by $d_{2^{k+6}}$ the number of distinct substrings of length $2^{k+6}$ and by $b$ the number of internal blocks on level $\ell$ in $J$. We associate each distinct substring of length $2^{k+6}$ with its leftmost occurrence $s[h\dd h{+}2^{k+6})$. Thus, every internal block on level $\ell$ is assigned to exactly $2^k$ distinct substrings. We obtain $b\cdot 2^k \le O(d_{2^{k+6}})$. Hence, $b \le O(d_{2^{k+6}} / 2^k) \le O(\delta)$. So, the number of internal level-$\ell$ blocks in $J$ is $O(\delta)$ and the total number of internal blocks in $J$ is $O(\delta\log\frac{n}{\delta})$. Let us count leaves.
	
	\textbf{Part ii.}
	First, let us restrict our analysis to only certain leaf blocks $B$. Denote by $\mathsf{par}(B)$ the parent of $B$. Since the total number of internal blocks is $O(\delta\log\frac{n}{\delta})$, the number of leaf blocks $B$ such that either $2|B| \ge |\mathsf{par}(B)|$ or $B$ is the first or last child of $\mathsf{par}(B)$ is at most $O(\delta\log\frac{n}{\delta})$ (across all levels). Therefore, it suffices to consider the case when $2|B| < |\mathsf{par}(B)|$ and $B$ is neither first nor last child of $\mathsf{par}(B)$. Further, there are at most $O(\log n)$ blocks $B'$ such that $\sbeg(B') \le 2^{11}|B'|$ and $\send(B') + 2^{11}|B'| \ge n$ and each such block $B'$ contributes at most $O(\log\log n)$ leaf-children, i.e., $O(\log n \log\log n)$ in total, which is $O(\delta\log\frac{n}{\delta})$ since it was assumed that $\delta \ge \Omega(\log\log n)$. Thus, we can assume that $B' = \mathsf{par}(B)$, the parent of the leaf block $B$, is ``far enough'' from both ends of the string $s$ so that the substring $s[\sbeg(B'){-}2^{11}|B'| \dd\send(B'){+}2^{11}|B'|]$ is well defined. The block $B' = \mathsf{par}(B)$ is not a run block, due to rule 4; hence, $B$ is from a level $2k + 1$, for some $k$ ($B$ might be intermediate). 
	
	\textbf{($\star$) Restriction:} it suffices to count only leaf blocks $B$ from levels $2k + 1$ with $k \ge 0$ such that, for $B' = \mathsf{par}(B)$, we have $2|B| < |B'|$ and $B$ is neither first nor last child of $B'$, and $\sbeg(B') > 2^{11}|B'|$ and $\send(B') + 2^{11}|B'| < n$.
	
	Let $B$ be a leaf block from level $2k + 1$ as in the restriction and $B' = \mathsf{par}(B)$. Denote by $B_1,\ldots,B_b$ all blocks from level $2k + 1$ in the hierarchy $H$; we mark them according to conditions (a) and (b) from Section~\ref{sec:blocks} so that we can use the functions $\rleft$ and $\rright$ from now on. Let $B_i,\ldots,B_j$ be a range of blocks from which $B' = B_i\cdots B_j$ was produced in $H$. Thus, $B = B_m\cdots B_{m+r-1}$ for some $m$ and $r$ such that $[m\dd m{+}r) \subseteq (i\dd j)$. We have $m + r \le j$ since $B$ is not the last child of $B'$. The block $B$ is either intermediate (if $r > 1$) or an original block from $H$ (if $r = 1$). Denote $x = \sbeg(B)$ and $y = \send(B) + |B_{m+r}|$ and consider the substring $s[x\dd y] = B_m\cdots B_{m+r}$. The subrange $B_m,\ldots,B_{m+r-1}$ was distinguished by the greedy parsing procedure described in Section~\ref{sec:jiggly}, according to which there are no indices $m' < m$ such that $\rleft(m',4(r+1)) = \rleft(m,4(r+1))$ and $\rright(m', 5(r+1)) = \rright(m, 5(r+1))$. Note that the length of each of the blocks $B_i,\ldots,B_j$ is at most $2^k$. As $s[x\dd y] = B_m\cdots B_{m+r}$, one can show that the string $s[x{-}z\dd y{+}z]$ with $z = 4(r+2)2^{k} + 2^{k+4}$ cannot occur at any position $x'{-}z$ with $x' < x$ since, otherwise, Lemma~\ref{lem:consistency} (local consistency) would imply that there is $m' < m$ such that $\rleft(m',4(r+1)) = \rleft(m,4(r+1))$ and $\rright(m', 5(r+1)) = \rright(m, 5(r+1))$. Any block $B_h$ with $h\in [1\dd b]$ is marked iff $|B_h| > 2^k$ or $|B_{h+1}| > 2^k$ or $B_h$ is preceded by a local minimum $\id''(B_{h-1})$; to calculate the local minimum, we need the blocks $B_{h-4}$,$B_{h-3}$,$B_{h-2}$,$B_{h-1}$. If $\rleft(m, 4(r + 1))$ does not contain the special symbol $\$$, then $z$ is large enough to guarantee that at any occurrence $s[x'{-}z\dd y'{+}z]$ of $s[x{-}z\dd y{+}z]$ there are exactly the same $4(r+1)$ blocks $B_{m-4(r+1)}, \ldots, B_{m-1}$, each of length ${\le}2^k$, to the left of $s[x'\dd y']$ and, moreover, neither of them could be marked due to a local minimum since, if so, then the four blocks $B_{m-4(r+1)-4},\ldots,B_{m-4(r+1)-1}$, because of which the local minimum appeared, should have occured at both positions by Lemma~\ref{lem:consistency} (as $z$ is large enough). The case when $\rleft$ contains $\$$ is analogous and the analysis of $\rright$ is also analogous.
	
	Denote by $k'$ the smallest integer such that $k' \ge k + 4$ and $2^{k'} > |B_m\cdots B_{m+r}|$. It follows from Lemma~\ref{lem:sparsity} (local sparsity) that $r+2 \le 2^6 2^{k'} / 2^{k}$ and, hence, $2^{k'+6} \ge (r+2) 2^k$ and $2^{k'+8} + 2^{k'} \ge 4(r+2)2^k + 2^{k+4} = z$. Therefore, $2^{k'+10} > 2\cdot 2^{k'+8} + 2\cdot 2^{k'} \ge 2z + |B_m\cdots B_{m+r}| + 2^{k'}$ and the strings $s[h\dd h{+}2^{k'+10})$ with $h\in (x{-}z{-}2^{k'}\dd x{-}z]$ all cover $s[x{-}z\dd y{+}z]$ and, thus, must be distinct and each such string $s[h\dd h{+}2^{k'+10})$ has no occurrences at positions smaller than $h$ (note that all these strings are well defined since the block $B'$ is ``far enough'' from both ends of the string $s$). We assign the block $B$ to each of these $2^{k'}$ strings $s[h\dd h{+}2^{k'+10})$ with $h\in (x{-}z{-}2^{k'}\dd x{-}z]$. We do the same analysis and assignments for every leaf block $B$ in $J$ that satisfies the conditions from restriction ($\star$). 
	
	Fix a substring $s[h\dd h{+}2^{k'+10})$ with $h \in [0\dd n{-}2^{k'+10})$ and $k' > 0$. Let us show that $O(1)$ blocks were assigned to this substring in total. Let $k$ be such that $k' = k + 4$. Then, by Lemma~\ref{lem:sparsity} (local sparsity), at most $O(1)$ blocks were assigned from levels ${\ge}2k$. Let $k$ be such that $k' > k + 4$. If a block $B = B_m \cdots B_{m+r-1}$ from level $2k + 1$ was assigned to the string $s[h\dd h{+}2^{k'+10})$, where $B_m,\ldots,B_{m+r}$ are blocks from level $2k+1$ corresponding to $B$ as in the description above, then we have $2^{k'} > |B_m\cdots B_{m+r}| \ge 2^{k'-1}$. We associate the block $B$ with the interval $I(B) = [\sbeg(B_m)\dd \send(B_{m+r})]$ of length between $2^{k'-1}$ and $2^{k'}$ that is a subset of $[h\dd h{+}2^{k'+10})$. For each block $B$ assigned to $s[h\dd h{+}2^{k'+10})$ from every level $2k + 1$ such that $k' > k + 4$, we associate an interval $I(B)$ in this was. If intervals $I(B)$ and $I(B')$ intersect, then either they are nested (and the blocks $B$ and $B'$ are nested) or $B$ and $B'$ are from the same level. A given interval $I(B)$ may intersect at most two other intervals associated with blocks from the same level. If a block $B'$ was assigned to the same string as block $B$ and $B'$ is an ancestor of $B$ in $J$ (they are nested), then for any ancestor $B''$ of $B'$, $|B''| > 2|B'| \ge 2|B_m\cdots B_{m+r}| \ge 2^{k'}$ and, hence, $B''$ could not be assigned to the same string $s[h\dd h{+}2^{k'+10})$. So, the system of intervals $I(B)$ associated with all blocks $B$ assigned to the substring $s[h\dd h{+}2^{k'+10})$ contains $O(2^{k'+10} / 2^{k'}) = O(1)$ intervals: to see this, count separately the intervals that are not nested in any intervals, and all remaining intervals.
	
	Fix $k'$. Denote by $d_{2^{k'+10}}$ the number of distinct substrings of length $2^{k'+10}$ and by $\mathcal{B}_{k'}$ the set of blocks that were assigned to all strings $s[h\dd h{+}2^{k'+10})$ with $h \in [0\dd n{-}2^{k'+10})$. We associate each distinct substring of length $2^{k'+10}$ with its leftmost occurrence $s[h\dd h{+}2^{k'+10})$. Thus, every block from $\mathcal{B}_{k'}$ was assigned to exactly $2^{k'}$ such distinct substrings. We obtain $|\mathcal{B}_{k'}|\cdot 2^{k'} \le O(d_{2^{k'+10}})$ and, hence, $|\mathcal{B}_{k'}| \le O(d_{2^{k'+10}} / 2^{k'}) \le O(\delta)$. Therefore, one can estimate as $O(\delta\log\frac{n}{\delta})$ the number of leaf blocks that were assigned to substrings of length $2^{k'+10}$ with $k' \le \log\frac{n}{\delta}$ as described above, i.e., $\sum_{k'=1}^{\log(n/\delta)} |\mathcal{B}_{k'}| \le O(\delta\log\frac{n}{\delta})$. 
	
	\textbf{Part iii.} 
	It remains to count the leaf blocks that were assigned to substrings of length $2^{k'+10}$ with $k' > \log\frac{n}{\delta}$. Fix such $k'$. Consider a leaf block $B$ from a level $2k + 1$ that was assigned to a substring of length $2^{k'+10}$. Let $B = B_m \cdots B_{m+r-1}$, where $B_m,\ldots,B_{m+r}$ are blocks from level $2k+1$ corresponding to $B$ as in the description above.  We associate $B$ with the interval $I(B) = [\sbeg(B_m)\dd \send(B_{m+r})]$. We do the same association to intervals for all blocks $B$ assigned to substrings of length $2^{k'+10}$. As in the above discussion, one can show that all such intervals $I(B)$ are distinct, there is no ``double nestedness'' such as $I(B) \subset I(B') \subset I(B'')$, and an interval $I(B)$ can intersect at most two other intervals $I(B')$ and $I(B'')$ that are not nested in it and do not contain it and $B'$ and $B''$ must be from the same level as $B$ in this case. Since, for each such $B$, $|I(B)| \ge 2^{k'-1}$, one can estimate the number of intervals as $O(n/2^{k'})$ by counting separately the intervals that are not nested inside other intervals, and all remaining intervals. Therefore, the total number of blocks assigned to substrings of length $2^{k'+10}$ for all $k' > \log\frac{n}{\delta}$ is $O(\sum_{k'} n / 2^{k'}) = O(n / 2^{\log(n/\delta)}) = O(\delta)$.
\end{proof}

\ifdefined\fullpaper

\section{Substring Search (Missing Proofs)}
\label{appx:search}

\fingerprintsubstr*
\begin{proof}
	Suppose that the hierarchy of blocks $H$ is accessible. The fingerprint for $s[p\dd q]$ is a sequence of blocks $B^\circ_f, B^\circ_{f+1},\ldots,B^\circ_g$ from different levels of the hierarchy $H$ with $g - f = O(\log m\cdot\log\log n)$ and, for each $h \in [f\dd g)$, we have $\send(B^\circ_h) + 1 = \sbeg(B^\circ_{h+1})$. 
	
	Let us perform two depth first (DFS) traversals of the tree $H$, both starting from $B$, which we call ``left'' and ``right'': the left traversal descends from $B$ directly to the leaf block $s[p]$ and continues its work by recursively visiting children of each visited block in the left-to-right order, ignoring all blocks that end before position $p$, until the leaf $s[q]$ in the hierarchy $H$ is reached (at which point, the left traversal stops); the right traversal symmetrically descends to $s[q]$ and visits children right-to-left until the leaf $s[p]$ is reached. 
	
	Both traversals will visit all blocks $B^\circ_f, B^\circ_{f+1},\ldots,B^\circ_g$, provided the notion of ``visiting'' is defined properly. Namely, there is the following nuance here: any run block $B^\circ_h$ in the fingerprint whose identifier is of the form $\langle\id(B^\circ),r\rangle$, for some block $B^\circ$ and $r \ge 1$, actually might not be a block from $H$ but it has a corresponding run block $B'$ in $H$ such that $\id(B') = \langle\id(B^\circ),r'\rangle$, for some $r' \ge r$ (or it might be that $\id(B') = \id(B^\circ)$ if $r = 1$), and $\sbeg(B') \le \sbeg(B^\circ_h) \le \send(B^\circ_h) \le \send(B')$. We say that a traversal ``visits'' or ``encounters'' such block $B^\circ_h$ when it visits the corresponding block $B'$ in $H$. Our idea is to perform both traversals simultaneously and synchronously trying to prevent visits to any children of blocks $B^\circ_h$ with $h \in [f\dd g]$ (or blocks corresponding to $B^\circ_h$) that are encountered during the traversal, i.e., we try to cut the full traversal trees so that they will have exactly $g - f + 1$ leaves. (Note that we cautiously wrote above that we ``try'' to prevent visits to any children of $B^\circ_h$, for $h \in [f\dd g]$, because we indeed visit their children sometimes.)
	
	\textbf{The simultaneous traversal algorithm.}
	By definition, for some $\ell \ge 0$, we have $B^\circ_f,B^\circ_{f+1},\ldots,B^\circ_g = \ldots,\fin(\ell,B_1,\ldots,B_b),\ldots$ where $B_1,\ldots,B_b$ are some level-$\ell$ blocks that appeared during the computation of the fingerprint for $s[p\dd q]$ using the procedure $\fin$. Suppose that, on a generic step of the simultaneous traversal algorithm, we have visited all blocks that started before $\sbeg(B_1)$ in the left traversal and we have visited all blocks that ended after $\send(B_b)$ in the right traversal. Moreover, suppose that we have found and reported the parts of the fingerprint $B^\circ_f,B^\circ_{f+1},\ldots,B^\circ_g$ that were produced from blocks of $H$ from levels less than $\ell$, i.e., we have reported the blocks designated by the three dots ``$\ldots$'' before and after the sequence $\fin(\ell,B_1,\ldots,B_b)$ above. We continue the left traversal until the first block from level $\ell$ is visited and this block must be $B_1$; analogously, the right traversal finds $B_b$. The further algorithm depends on the value of $\ell$. 
	
	Case $\ell = 2k$, for some $k \ge 0$. If $|B_1| \le 2^k$, we continue the traversal to find the longest prefix run $B_1\cdots B_i$ of blocks from level $\ell$ with $i \le b$. To find the run $B_1\cdots B_i$, we look at the parent of $B_1$ in $H$, which must be a run (possibly, with only one block $B_1$) that ends exactly at the block $B_i$, if $i < b$, or, probably, at a block to the right of $B_b$, if $i = b$. Analogously, the right traversal finds the longest suffix run $B_{j+1}\cdots B_b$ (see details in the definition of $\fin$). Thus, using the notation from the definition of $\fin(\ell, B_1,\ldots,B_b)$, one can say that we have computed, respectively, the prefix and suffix surrounding the recursive $\fin$ part in the sequence $\langle\id(B_i),i\rangle, \fin(2k + 1, B'_1,\ldots, B'_{b'}), \langle\id(B_{j+1}),b-j\rangle$, where $B'_1,\ldots,B'_{b'}$ are respective blocks from level $\ell + 1$ as in the definition of $\fin$ (our algorithm has not yet found these blocks at this point). If $b' = 0$, we abort both traversals as we have constructed the whole fingerprint. If $b' > 0$, we assign $\ell = 2k + 1$ and recursively proceed to the next step of our algorithm computing $\fin(2k + 1, B'_1,\ldots, B'_{b'})$. Note that, if $|B_i| > 2^k$, then we have $i = 0$ (i.e.,  the prefix surrounding the recursive $\fin$ part is empty) and, in the recursive next step, the first block from level $2k + 1$ in the algorithm is just the parent of $B_i$ in $H$ that has the same identifier $\id(B_i)$; the case $|B_b| > 2^k$ is symmetrical.
	
	Case $\ell =2k + 1$, for some $k \ge 0$. Recall that we have the level-$\ell$ blocks $B_1$ and $B_b$ and the goal is to compute $\fin(\ell, B_1,\ldots,B_b)$. As in the definition of $\fin$, we are to compute the prefix and suffix surrounding the recursive $\fin$ part in the sequence $\id(B_1),\ldots,\id(B_i),\fin(2k + 2, B'_1,\ldots, B'_{b'}), \id(B_{j+1}),\ldots,\id(B_b)$, where $i$, $j$ and $B'_1,\ldots,B'_{b'}$ are respective indices and blocks as in the definition of $\fin$ (our algorithm has not yet found these blocks and has not yet found $i$ and $j$ at this point). Let us describe how one can calculate the number $j$; the calculation of $i$ is mostly symmetrical but less nuanced. 
	We perform the right traversal retrieving consecutively the blocks $B_b, B_{b-1},\ldots$ from level $\ell$ until we reach a block $B_j$ with $j \in [1\dd b]$ such that either $|B_j| > 2^k$, or $\infty > \id''(B_{j-2}) > \id''(B_{j-1})$ and $\id''(B_{j-1}) < \id''(B_j) < \infty$ (with $j \ge 3$), where the function $\id''$ is computed isolatedly for the sequence $B_1,\ldots,B_b$. (See details in the definition of $\fin$ in Section~\ref{sec:search}.) If such block $B_j$ is never reached, we simply report $\id(B_1),\ldots,\id(B_b)$ as the fingerprint $\fin(\ell, B_1,\ldots,B_b)$ and abort both traversals as we have the result. Note that we have $\id''(B_1) = \id''(B_2) = \infty$ and, for $h \in (2\dd b]$, to calculate $\id''(B_h)$, we have to know $\id(B_{h-2})$ and $\id(B_{h-1})$. Hence, when we reach the required block $B_j$ and it happens that $|B_j| \le 2^k$, then we had to retrieve also the blocks $B_{j-1},B_{j-2},B_{j-3},B_{j-4}$. Once $i$ and $j$ are calculated, we can assign $\ell = 2k + 2$ and recursively proceed to the next step of our algorithm computing $\fin(2k + 2, B'_1,\ldots, B'_{b'})$ (provided it is not the last step, i.e., $i < j$). There is a subtle caveat here: the first block from the next level $2k + 2$ that should be access on the next step by the right traversal must be the parent of the block $B_j$ and, thus, we have to ``rollback'' some steps of the right traversal that reached the blocks $B_{j-1},B_{j-2},B_{j-3},B_{j-4}$, continuing the traversal from $B_j$.
	
	\textbf{Properties of the traversal trees.}
	Consider the right DFS traversal tree $T_R$; Fig.~\ref{fig:traversal}. (Note that we do not exclude from $T_R$ the ``rolled back'' blocks.)  Denote by $\ell_{\max}$ the largest level of a block in the reported fingerprint. We have $\ell_{\max} = O(\log m)$. Since the size of the fingerprint is $O(\log m\cdot\log\log n)$ and all leaves in $T_R$ have levels at most $\ell_{\max}$, the number of leaves in $T_R$ is $O(\log m\cdot\log\log n)$. Let $L_1,L_2,\ldots,L_z$ be the leaves of $T_R$ listed from left to right. It follows from the algorithm that the level of $L_h$ is non-increasing when $h$ grows from $1$ to $z$ and, for any two leaves $L_h$ and $L_{h+1}$ with $h \in [1\dd z)$, if the leaf $L_h$ is from level $\ell$, then the leaf $L_{h+1}$ has an ancestor $L'$ in $T_R$ from level $\ell$ ($L' = L_{h+1}$ if $L_{h+1}$ itself is from level $\ell$) such that $\send(L_h) + 1 = \sbeg(L')$. It follows from this that, for each $\ell$, if $B'_1,B'_2,\ldots,B'_{z'}$ are all level-$\ell$ blocks in $T_R$ and $\sbeg(B'_1) < \sbeg(B'_2) < \cdots$, then, for each $h \in [1\dd z')$, we have $\send(B'_h) + 1 = \sbeg(B'_{h+1})$; in other words, the tree $T_R$ does not have ``gaps'' inside its block structure. Analogous observations for the left DFS traversal tree $T_L$ are symmetrical. 
	
	\begin{figure}
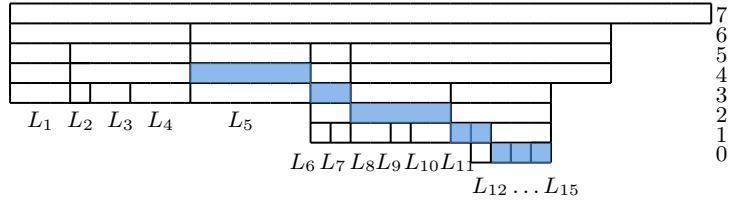

		\begin{center}
			\tikzset{every picture/.style={line width=0.75pt}} 
			

			
		\end{center}
	
		\caption{A schematic depiction of blocks traversed in the tree $T_R$. The blue blocks are reported to the fingerprint (the left part of the fingerprint reported during the traversal of the tree $T_L$ is not depicted). The reported blocks on even levels (0, 2, 4, 6) form runs, sometime one-block runs. The levels of the leaves ($L_1,L_2,\ldots$) are non-increasing. Note that the algorithm sometimes visits blocks below the reported blue blocks: on a level $\ell$, it might visit up to five unreported additional blocks to the left of a reported blue block (it is done in order to find the local minima of $\id''$ that identify the boundary of the reported block). So, in total, we have $O(\log m\cdot\log\log n)$ leaves.}
		\label{fig:traversal}
	\end{figure}
	
	One can view the simultaneous traversal algorithm as if it computes the leaves of $T_R$ from right to left and the leaves of $T_L$ from left to right until they converge at one ``point'' on level $\ell_{\max}$ (with a slight ``overlap'', possibly, and here the notion of ``convergence'' is not precise) so that the trees $T_R$ and $T_L$ ``touch'' each other, in a way, on levels above $\ell_{\max}$: namely, for any $\ell > \ell_{\max}$, if $B_R$ is the leftmost level-$\ell$ block from $T_R$ and $B_L$ is the rightmost level-$\ell$ block from $T_L$, then $\send(B_L) + 1 \ge \sbeg(B_R)$. Note also that, for any $\ell$, if $B_R$ is the rightmost level-$\ell$ block from $T_R$ and $B_L$ is the leftmost level-$\ell$ block from $T_L$, then $\sbeg(B_L) \le \sbeg(B_R)$ (i.e., the tree $T_L$ is located to the ``left'' of $T_R$ in the hierarchy of blocks $H$, with possible overlaps).
	
	\textbf{Emulated algorithm.}
	Now let us transform the simultaneous traversal algorithm for the case when the hierarchy of blocks $H$ is not directly accessible but it is emulated through the jiggly block tree $J$ using Lemma~\ref{lem:traversal}. The simplest emulation would just descend in the trees $T_R$ and $T_L$ computing the children of blocks using Lemma~\ref{lem:traversal}. But it raises the following question: if we first encounter a block $B'$ during the emulated traversal of the tree $T_R$ and the level of the parent of $B'$ in $T_R$ was known to be $\ell$, then, surely, the level of $B'$ is $\ell - 1$ but what is the level in $T_R$ of the children of $B'$ in the hierarchy $H$ obtained using Lemma~\ref{lem:traversal} from the tree $J$? Recall that $B'$ stores in $J$ the lowest level $\ell'$ on which a block with identifier $\id(B')$ appeared in the original hierarchy $H$, denote it $\ell' = \ell(B')$; then, it follows from Lemma~\ref{lem:uniformity} that the obtained children of $B'$ must have level $\ell' - 1$ in $T_R$ and, thus, $B'$ ``induces'' a chain of $\ell - \ell' + 1$ blocks in $T_R$, each of which is a copy of $B'$. We can easily navigate to any block in this chain in $O(1)$ time. 
	
	We slightly speed up the emulation further: when we descend along the rightmost path of $T_R$ to the rightmost leaf of $T_R$ (which is the one-letter block $s[q]$) at the beginning of the algorithm, we encounter sequences of descends of the form $B'_1, \cdots, B'_{z'}$, where all $B'_1, \cdots, B'_{z'}$ are ancestors of $s[q]$ and, for each $h \in [1\dd z')$, $B'_{h+1}$ is the leftmost child of $B'_h$; all this chain can be skipped in $O(\log\log n)$ time using the weighted ancestor data structure on an auxiliary tree $A_L$ that was maintained for $J$ so that we arrive to the block $B'_{z'}$ in one ``hop''. Since, for each $h \in [1\dd z')$, we have $q < \send(B'_h)$, the algorithm will never need these skipped blocks, so we can ignore them in the algorithm. But we need also the levels $\ell(B'_{z'})$ and $\ell(B'_{z'-1})$ for the correct emulation of skipped levels; we can access $B'_{z-1}$ with one more weighted ancestor query in a straightforward way. Symmetrically, for the tree $T_L$, we skip long sequences of descends into rightmost children along the leftmost path of the tree $T_L$. It turns out that this only non-trivial optimization suffices to make the algorithm fast enough. Let us analyse the optimized emulated algorithm.
	
	\textbf{Analysis.}
	Denote by $\hat{T}$ the tree that is the union of the trees $T_R$ and $T_L$ (i.e., the set of blocks in $\hat{T}$ is the union of all blocks from $T_R$ and $T_L$). It is convenient to imaging that the emulation traverses the tree $\hat{T}$, not $T_R$ and $T_L$ separately. $\hat{T}$ has $O(\log m\cdot\log\log n)$ leaves and, hence, $O(\log m\cdot\log\log n)$ branching nodes. Due to Lemma~\ref{lem:traversal}, given a block $B'$ from the tree $J$, one can compute all children of $B'$ from the hierarchy $H$ in $O(\log\log n)$ time. Hence, the computation on all branching nodes in $\hat{T}$ takes $O(\log m\cdot(\log\log n)^2)$ time in total. But the tree $\hat{T}$ might also contain long chains of non-branching nodes. In total, we have visited $O(\log m\cdot\log\log n)$ blocks of interest in $\hat{T}$ during the execution of the algorithm; it remains to estimate how much time was spent on blocks from non-branching chains in $\hat{T}$. 
	
	If we encounter a chain that consists of copies of one block, then we reach any block of interest on such chain in $O(1)$ time, as described above. Thus, we spend on such chains $O(\log m\cdot\log\log n)$ time in total, over all chains (since this is the number of blocks of interest).
	
	Consider a block $B'$ in $\hat{T}$ that has the only child $B''$ in $\hat{T}$ and $\id(B') \ne \id(B'')$. Denote by $\ell$ the level of $B''$. Suppose that $B''$ belongs to $T_R$; the case when $B''$ belongs to $T_L$ is symmetric. Since the tree $T_R$ has no ``gaps'' on level $\ell$,  $B''$ must be either the leftmost or rightmost level-$\ell$ block in $T_R$. Therefore, for $\ell \le \ell_{\max}$, there are in total at most $2\ell_{\max} = O(\log m)$ such blocks $B''$ in $T_R$ and we spend $O(\log m)$ time on them in total. Let $\ell > \ell_{\max}$. 
	Since $B$ is the lowest block in $H$ such that $\sbeg(B) \le p < q \le \send(B)$, the block $B''$ cannot be the only level-$\ell$ block in $\hat{T}$. Recall that $B''$ is the only child of $B'$ in $\hat{T}$. Therefore, since the tree $T_L$ has no ``gaps'' on level $\ell$ and the trees $T_L$ and $T_R$ ``touch'' each other (in a precise sense described above) on level $\ell > \ell_{\max}$, then $B''$ is either the leftmost or rightmost child of $B'$. In former (respectively, latter) case, $B''$ belongs to a part of the rightmost (respectively, leftmost) path along $T_R$ (respectively, $T_L$) that was skipped using the weighted ancestor data structure and $B''$ was not important for the algorithm.
\end{proof}
\fi


\begin{figure}[h]
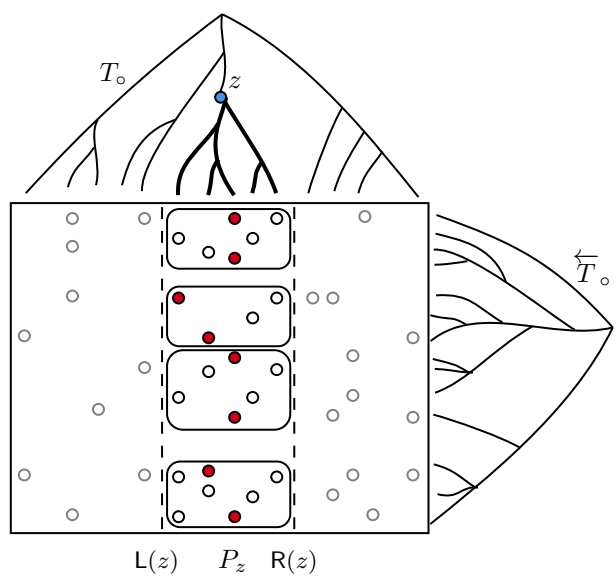

	\begin{center}
		
		\tikzset{every picture/.style={line width=0.75pt}}
		

		
	\end{center}
	
	\vspace{-0.2cm}
	\caption{A schematic depiction of $T_\circ$ and $\lvec{T}_\circ$ with the points $P$. The subset $P_z \subseteq P$ corresponding to a node $z\in T_\circ$ is depicted as split into chunks $P_z^a$. In each chunk $P_z^a$, its points with maximum and minimum $y$-coordinate are red. $V_z$ stores the values $\mathsf{v}(y)$ for $y$-coordinates of exactly these red points ($O(|P_z| / \log\log n)$ values in total). $Q_z$ stores, for each chunk $P_z^a$, the minimum of $p_{x,y}$ for all $(x,y) \in P_z^a$. The elements of each chunk $P_z^a$ are not stored explicitly but can be retrieved in $O(\log n + |P_z^a|\frac{\log n}{\log\log n}) = O(\log n)$ time using one range reporting on $P$.}
	\label{fig:chunks}
\end{figure}

\end{document}